\newcommand{\I}{\mathcal{I}}
\newcommand{\G}{\mathcal{G}}
\newcommand{\Z}{\mathcal{Z}}
\newcommand{\C}{\mathcal{C}}
\newcommand{\Hi}{\mathcal{H}}
\newcommand{\onevec}{\boldsymbol{1}}
\newcommand{\defeq}{\coloneqq}
\newcommand{\Reals}{\mathbb{R}}
\newcommand{\xvec}{\boldsymbol{x}}
\newcommand{\gvec}{\boldsymbol{g}}
\newcommand{\wvec}{\boldsymbol{w}}
\newcommand{\yvec}{\boldsymbol{y}}
\newcommand{\zvec}{\boldsymbol{z}}
\newcommand{\zerovec}{\boldsymbol{0}}
\newcommand{\Fmat}{\boldsymbol{F}}
\newcommand{\Mmat}{\boldsymbol{M}}
\newcommand{\fvec}{\boldsymbol{f}}
\newcommand{\mvec}{\boldsymbol{m}}
\newcommand{\Y}{\mathcal{Y}}
\newcommand{\X}{\mathcal{X}}
\newcommand{\Umat}{\boldsymbol{U}}
\newcommand{\Amat}{\boldsymbol{A}}
\newcommand{\vvec}{\boldsymbol{v}}
\newtheorem{lemma}{Lemma}
\newtheorem{corollary}{Corollary}
\newtheorem{definition}{Definition}
\definecolor{mygreen}{rgb}{0.0, 0.5, 0.0}
\definecolor{colormatri}{rgb}{1, 0.38, 0}
\newcommand{\mycircled}[1]{%
	\begin{tikzpicture}[baseline={(char.base)}]
		\node[draw,circle,inner sep=0.5pt] (char){#1};
	\end{tikzpicture}
}
\def\hlinewd#1{%
	\noalign{\ifnum0=`}\fi\hrule \@height #1 \futurelet
	\reserved@a\@xhline}
\newcommand{\ALGtikzmarkcolor}{black}
\newcommand{\ALGtikzmarkextraindent}{4pt}
\newcommand{\ALGtikzmarkverticaloffsetstart}{-.5ex}
\newcommand{\ALGtikzmarkverticaloffsetend}{-.5ex}
\newcounter{ALG@tikzmark@tempcnta}
\newcommand\ALG@tikzmark@start{%
	\global\let\ALG@tikzmark@last\ALG@tikzmark@starttext%
	\expandafter\edef\csname ALG@tikzmark@\theALG@nested\endcsname{\theALG@tikzmark@tempcnta}%
	\tikzmark{ALG@tikzmark@start@\csname ALG@tikzmark@\theALG@nested\endcsname}%
	\addtocounter{ALG@tikzmark@tempcnta}{1}%
}
\def\ALG@tikzmark@starttext{start}
\newcommand\ALG@tikzmark@end{%
	\ifx\ALG@tikzmark@last\ALG@tikzmark@starttext
	\else
	\tikzmark{ALG@tikzmark@end@\csname ALG@tikzmark@\theALG@nested\endcsname}%
	\tikz[overlay,remember picture] \draw[\ALGtikzmarkcolor] let \p{S}=($(pic cs:ALG@tikzmark@start@\csname ALG@tikzmark@\theALG@nested\endcsname)+(\ALGtikzmarkextraindent,\ALGtikzmarkverticaloffsetstart)$), \p{E}=($(pic cs:ALG@tikzmark@end@\csname ALG@tikzmark@\theALG@nested\endcsname)+(\ALGtikzmarkextraindent,\ALGtikzmarkverticaloffsetend)$) in (\x{S},\y{S})--(\x{S},\y{E});%
	\fi
	\gdef\ALG@tikzmark@last{end}%
}
\apptocmd{\ALG@beginblock}{\ALG@tikzmark@start}{}{\errmessage{failed to patch}}
\pretocmd{\ALG@endblock}{\ALG@tikzmark@end}{}{\errmessage{failed to patch}}
\title{Last-iterate Convergence to Trembling-hand Perfect Equilibria}
\author{
	Martino Bernasconi\\
	Politecnico di Milano\\
	\texttt{martino.bernasconideluca@polimi.it}
	\And
	Alberto Marchesi\\
	Politecnico di Milano\\
	\texttt{alberto.marchesi@polimi.it}
	\And
	Francesco Trovò\\
	Politecnico di Milano\\
	\texttt{francesco1.trovo@polimi.it}
}
\begin{document}

\maketitle

\begin{abstract}
	Designing efficient algorithms to find \emph{Nash equilibrium} (NE) refinements in sequential games is of paramount importance in practice.
	Indeed, it is well known that the NE has several weaknesses, since it may prescribe to play sub-optimal actions in those parts of the game that are never reached at the equilibrium.
	NE refinements, such as the \emph{extensive-form perfect equilibrium} (EFPE), amend such weaknesses by accounting for the possibility that players may make mistakes.
	This is crucial in real-world applications, where humans having bounded rationality are usually involved, and it turns out being useful also in boosting the performances of superhuman agents for recreational games like Poker.
	Nevertheless, only few works addressed the problem of computing NE refinements.
	Most of them propose algorithms finding exact NE refinements by means of linear programming, and, thus, these do \emph{not} have the potential of scaling up to real-world-size games.
	On the other hand, existing iterative algorithms that exploit the tree structure of sequential games only provide convergence guarantees to {approximate} refinements.
	In this paper, we provide the first efficient \emph{last-iterate} algorithm that provably converges to an EFPE in two-player zero-sum sequential games with imperfect information.
	Our algorithm works by tracking a sequence of equilibria of suitably-defined, regularized-perturbed games.
	In order to do that, it uses a procedure that is specifically tailored to converge last iterate to the equilibria of such games.
	Crucially, the updates performed by such a procedure can be performed efficiently by visiting the game tree, thus rendering our algorithm more appealing than its linear-programming-based competitors in terms of scalability.
	Finally, we evaluate our algorithm on a standard testbed of games, showing that it outputs strategies which are much more robust to players' mistakes than those of state-of-the-art NE-computation algorithms.
\end{abstract}
\section{Introduction}

Computing the \emph{Nash equilibria} (NEs)~\citep{nash1951non} of \emph{two-player} \emph{zero-sum} sequential (\emph{i.e.}, extensive-form) games with \emph{imperfect information} has been one of the flagship computational challenges of artificial intelligence since several years.
The latest advances in such a research field have lead to the development of \emph{superhuman} agents that are capable of beating top human professionals in several recreational games, such as, \emph{e.g.}, Go~\citep{gocitation} and heads-up no-limit Texas hold'em Poker~\citep{Brown418,brown2019superhuman}.

The NE is a solution concept that prescribes each player to play optimally under the assumption that the opponents are perfectly rational and do the same.
However, this assumption is oftentimes unreasonable, especially when computed equilibria are deployed in the real world, where artificial agents usually face human opponents that naturally have bounded rationality.
Moreover, it is well known that the NE has several weaknesses when played against opponents with bounded rationality who may make mistakes.
In particular, an NE may prescribe to perform sub-optimal actions at decision points (a.k.a. \emph{information sets}) that are never reached assuming the players play equilibrium strategies.

Over the last decades, game theorists introduced several \emph{refinements} of the NE notion, in order to amend its weaknesses off the equilibrium path.
Among them, the most studied one is the \emph{extensive-form perfect equilibrium} (EFPE) introduced by~\citet{selten1975reexamination}.
The EFPE is based on the idea of \emph{trembling-hand perfection}, whose rationale is to let the players reasoning about the possibility that both themselves and their opponents may ``tremble'' in future, by playing sub-optimal actions with small probability at information sets that will be reached later during the game.
Then, an EFPE is defined as a limit point of a sequence of equilibria that is obtained by letting the ``magnitude of trembles'' (\emph{i.e.}, the probabilities of playing sub-optimal actions at future information sets) going to zero.
This makes the strategies prescribed by an EFPE robust against possible players' mistakes.

The problem of computing refinements of the NE has received some attention from the artificial intelligence community only recently.
This is surprising, since designing artificial agents that are capable of exploiting opponents' mistakes is of paramount importance when bringing equilibrium concepts into practice.
Indeed, the adoption of NE refinements would foster the operationalization of equilibrium computation techniques into novel application scenarios where ``humans are in the loop'', such as, \emph{e.g.}, military settings and businesses.
At the same time, NE refinements could boost the performances of state-of-the-art superhuman agents playing recreational games, by equipping them with ability of capitalizing over opponents' mistakes.

Despite there are several appealing reasons for switching the attention from the computation of NEs to that of its refinements, only few works addressed the latter problem in two-player zero-sum sequential games with imperfect information.
Moreover, most of these works, such as, \emph{e.g.},~\citep{farina2017extensive,farina2018practical,farina2021equilibrium}, provide equilibrium computation algorithms that rely on the solution of \emph{linear programs} (LPs).
As a result, up to now, these methods have failed at scaling up on huge games like Texas hold'em Poker endgames.
This was predictable, as the major breakthroughs in NE computation in two-player zero-sum games (such as, \emph{e.g.},~\citep{Brown418,brown2019superhuman}) were achieved by means of iterative techniques that work by repeatedly visiting the game tree (see, \emph{e.g.}, the \emph{counterfactual regret minimization} (CFR) framework by~\citet{zinkevich2007regret}).
To the best of our knowledge, the only iterative methods for computing NE refinements are those presented in~\citep{farina2017regret}~and~\citep{kroer2017smoothing}.
However, these do \emph{not} provide any convergence guarantee to exact refinements, but only to approximate equilibria of a perturbed game (see the following Section~\ref{sec:related} for more details).

In this paper, we provide the first \emph{iterative} algorithm that provably converges to (exact) NE refinements, focusing on the case of EFPEs.
Our algorithm works by tracking a \emph{sequence of equilibria} of suitably-defined, regularized-perturbed games, which is different from the sequence in the definition of EFPE.
Indeed, these game modify the utility function of the original game by adding particular \emph{regularization} and \emph{perturbation} components, while the definition of EFPE only considers the latter.
Intuitively, the regularization component is needed to ensure equilibrium uniqueness, while the perturbation component guarantees that the equilibrium strategies place at least a given (small) probability on every action at every information set, resembling the idea of trembling-hand perfection.
We prove that, by carefully controlling how the regularization and the perturbation components vanish, the resulting sequence of equilibria admits an EFPE of the original game as a limit point.
In particular, this is achieved by letting the regularization component vanishing faster than the perturbation one.

The core ingredient of our algorithm is a \emph{last-iterate} procedure that provably converges to an (approximate) equilibrium of a given regularized-perturbed game.
Such a procedure extends the \emph{optimistic online mirror descent} (OOMD) algorithm by~\citet{rakhlin2013optimization}.
Intuitively, our procedure uses an OOMD-style update rule in order to converge last iterate, while it deals with the non-smooth terms appearing in the utility of regularized-perturbed games without resorting to linearization techniques.
The crucial feature of our sub-procedure is that its updates can be performed by visiting the game tree recursively, thus avoiding the solution of any ``one-piece'' optimization problem.
This considerably enhances the scalability of our algorithm compared to those relying on the solution of optimization problems that are directly cast on players' strategy spaces, like LP-based ones.

We conclude by experimentally evaluating the performances of our algorithm on a standard testbed of Poker-inspired game instances.
Our analysis shows that our algorithm outperforms of orders of magnitude several state-of-the-art NE-computation algorithms (such as, \emph{e.g.}, the CFR algorithm) in terms of average players' regrets over all the information sets of the game, which is the standard practical metric used for evaluating convergence to NE refinements.

\section{Related Works}\label{sec:related}

In this section, we review the state of the art on the computation of EFPEs and other NE refinements.
See the book by~\citet{van1991stability} for a detailed treatment of the latter.

The first works to address the computational problem of finding NE refinements in two-player zero-sum sequential games are~\cite{miltersen2010computing}~and~\cite{farina2017extensive}, which focus on the \emph{quasi-perfect equilibrium} (QPE) and the EFPE, respectively.
The algorithms presented in these papers work by solving a ``perturbed'' LP which is a modified version of the classical LP for finding an NE in two-player zero-sum games.
In particular, the LP is ``perturbed'' in the sense that players' strategies are constrained to placing at least some small probability on every action at every information set, so as to resemble the idea of trembling-hand perfection.
The authors of these papers show that, if the probability lower bound is sufficiently small, then starting from a solution to a ``perturbed'' LP one can recover an EFPE or a QPE exactly.
However, such a lower bound must be prohibitively small in general, and this makes the ``perturbed'' LP unsolvable by means of standard solvers, requiring the adoption of infinite-precision ones.
As a result, these methods are highly unpractical.

\citet{farina2018practical} made a step further by defining a general theory of \emph{trembling} LPs, which are LPs parametrized by some parameter $\epsilon$, and proposing a polynomial-time algorithm that finds a limit solution to a sequence of trembling LPs as $\epsilon \to 0$.
Such a framework allows to encompass several NE refinements, including EFPE, QPE, and some newly-introduced variations of them~\citep{farina2021equilibrium}.
Moreover, the algorithm proposed by~\citet{farina2018practical} requires the solution of polynomially-many (standard) LPs, whose coefficients are \emph{not} prohibitively small as those of ``perturbed'' LPs.
This allows for the use of standard solvers, which render the algorithm deployable in practice.
However, such method does \emph{not} exploit the tree structure of the game, and this prevents it from scaling up to real-world-size game instances.

The only algorithms that compute EFPEs by iteratively visiting the game tree are those in~\citep{farina2017regret}~and~\citep{kroer2017smoothing}.
The former introduces a modification of the CFR algorithm that allows it to take into account opponents' mistakes so as to minimize regret at all the information sets, including those that are never reached at the equilibrium.
The latter applies a similar idea to first-order methods for sequential games.
In this way, such iterative methods are able to converge to an approximate equilibrium of a perturbed game.
However, they are \emph{not} concerned with the computation of a limit point of a sequence of equilibria, since they work by approximating an equilibrium while keeping the ``magnitude of trembles'' fixed.
As a result, they do \emph{not} provide any theoretical convergence guarantee to an exact EFPE, which would require taking a limit point as ``trembles'' vanish.

In conclusion, it is also worth citing some works that study the problem of computing NE refinements in $n$-player general-sum games; see, \emph{e.g.},~\citep{miltersen2010computing,hansen2010computational,farina2017extensive,gatti2020characterization}.

\section{Preliminaries}

In this section, we provide all the definitions and concepts needed in the paper.
First, we formally introduce \emph{two-player zero-sum extensive-form games with imperfect information} (EFGs for short).
Then, we define the sequence-form representation for players' strategies and the notion of EFPE.  

\subsection{Extensive-form Games}

An EFG is usually described by means of a game tree.
We denote by $\Hi$ the set of nodes of the tree and by $\Z\subset \Hi$ the set of terminal nodes, which are the leaves of the tree.
Each non-terminal node $h \in \Hi \setminus \Z$ is either a decision node in which a given player acts or a chance node where a random event occurs.
For any decision node $h$, we let $A(h)$ be the set of actions available to the player acting at $h$.
On the other hand, each terminal node $z\in Z$ is associated with a payoff $u(z)\in[-1,1]$ for the first player, while $-u(z)$ is the payoff of the second player since the game is zero sum.

In an EFG, imperfect information is usually described by means of \emph{information sets} (infosets for short).
A player's infoset $I$ is a collection of decision nodes of that player which are indistinguishable for them; formally, it must be the case that $A(h)=A(h^\prime)$ for any pair $h,h^\prime\in I$.
We let $A(I)$ be the set of actions available at all nodes of infoset $I$, with $n_I \coloneqq|A(I)|$ being its cardinality. 
Moreover, we denote by $\I_1$ and $\I_2$ the sets all the infosets of player $1$ and player $2$, respectively.
As customary in the literature, we restrict the attention to games with \emph{perfect recall}, where a player never forgets information once acquired.
This is equivalent to assuming that both $\I_1$ and $\I_2$ are partially ordered according to a suitably-defined precedence relation.
We denote by $\preceq$ such a relation between infosets, where $I \preceq J$ means that infoset $I$ \emph{precedes} infoset $J$.
Formally, $I \preceq J$ if and only if there exists a path in the game tree that connects a node belonging to infoset $I$ to a node in infoset $J$.
Furthermore, give any infoset $I \in \I_i$ of player $i$ and action $a \in A(I)$, we let $C_{I,a}$ be the set of all player $i$'s infosets that \emph{immediately follow $I$ through action $a$}, according to the relation $\preceq$.

\subsection{Sequence-form Representation}

Any node $h \in \Hi$ defines a \emph{sequence} $\sigma_i(h)$ of player $i$'s actions encountered on the path from the root of the game tree to $h$.
In EFGs with perfect recall, an infoset $I\in \I_i$ uniquely determines a {sequence} of player $i$'s actions, since $\sigma_i(h) = \sigma_i(h')$ for any $h, h' \in I$ by definition.
We denote such a sequence by $\sigma_i(I)$.
Since $\sigma_i(I)$ extended with any action $a \in A(I)$ is a valid sequence of player $i$'s actions, we can identify player $i$'s sequences with infoset-action pairs.
Thus, we let $\Sigma_i \coloneqq \{(I,a) \mid  I\in \I_i, a\in A(I)\}\cup\{\varnothing\}$ be the set of player $i$'s sequences, where $\varnothing$ is the empty sequence defined by paths in the tree in which player $i$ never plays.

By leveraging the sequence form, the \emph{mixed strategies} of a player can be encoded in terms of realization probabilities of sequences~\citep{von1996efficient}.  
Formally, a first player's strategy is a vector $\xvec\in[0,1]^{|\Sigma_1|}$ such that, for each $\sigma\in\Sigma_1$, the entry $\xvec[\sigma]$ is the probability of playing sequence $\sigma$.
To be well defined, $\xvec$ must satisfy the following linear constraints:
\[
\textstyle{
\xvec[\varnothing]=1 \,\,\, \text{and} \,\,\, \xvec[\sigma_i(I)] = \sum_{a\in A(I)} \xvec[\sigma_i(I)a] \quad  \forall I \in \I_1,
}
\]
which can be expressed as $\Fmat_1 \xvec = \fvec_1$ using matrix notation.
Similarly, we let $\yvec\in[0,1]^{|\Sigma_2|}$ be a strategy for the second player, which must satisfy the constraints $\Fmat_2 \yvec = \fvec_2$.
In the following, we denote by $\X$ and $\Y$ the polytopes of valid strategies for the first and the second player, respectively.

Thanks to the sequence-form strategy representation, we can define the first player's expected utility given two strategies $\xvec \in \X$ and $\yvec \in \Y$ as the bilinear term $\xvec^\top \Umat \yvec$, where $\Umat \in [-1,1]^{|\Sigma_1| \times |\Sigma_2|}$ is a utility matrix whose entry corresponding to $\sigma_1 \in \Sigma_1$ and $\sigma_2 \in \Sigma_2$ is defined as follows:
\[
\Umat[\sigma_1,\sigma_2] \coloneqq \sum\limits_{\substack{z\in\Z: \sigma_1(z)=\sigma_1 \wedge \sigma_2(z)=\sigma_2}} p(z)u(z),
\]
with $p(z) \in [0,1]$ being the product of probabilities associated to chance outcomes on the path from the root to $z\in\Z$.
The second player's expected utility is $-\xvec^\top \Umat \yvec$. 
\subsection{Extensive-form Perfect Equilibria}

The EFPE refines the NE by considering the possibility that players may make mistakes and play off-equilibrium actions with ``small and vanishing'' probability. 
Formally, an EFPE is defined as a limit point as $\epsilon \to 0$ of a sequence of NEs of ``perturbed games'' parametrized by $\epsilon > 0$, where each action $a \in A(I)$ at each infoset $I$ must be played with probability at least $\epsilon$~\citep{selten1975reexamination}.
In terms of sequence-form strategies, this is equivalent to ask, for the first player, that $\xvec[\sigma_i(I)a]\ge\epsilon \, \xvec[\sigma_i(I)]$ for every $I \in \I_1$ and $a \in A(I)$.
Such linear constraints can be expressed as a polytope of the form $\Mmat_1(\epsilon)\xvec\ge \mvec_1(\epsilon)$ (see~\citet{farina2017extensive}), so that we can define the set of valid first player's strategies for a perturbed game parametrized by $\epsilon > 0$ as 
\[
\X^{\epsilon} \coloneqq \{\xvec\in \X\mid \Fmat_1\xvec = \fvec_1, \Mmat_1(\epsilon)\xvec\ge \mvec_1(\epsilon)\}.
\]
Similarly, for the second player we have:
\[
\Y^{\epsilon} \coloneqq \{\yvec\in \Y\mid \Fmat_2\yvec = \fvec_2, \Mmat_2(\epsilon)\yvec\ge \mvec_2(\epsilon)\}.
\]

Then, we can provide the following definition of EFPE:\footnote{For ease of notation, $\epsilon\to0$ denotes the limit from the right.}
\begin{definition}[EFPE]\label{def:eps_efpe}
	Given any $\epsilon>0$, an \emph{$\epsilon$-EFPE} is defined as any pair of strategies $(\xvec^\star,\yvec^\star)\in\X^{\epsilon}\times\Y^{\epsilon}$ such that, for all $\xvec\in\X^{\epsilon}$ and $\yvec \in \Y^{\epsilon}$, it holds:
	\[
	\xvec^\top \Umat \yvec^\star \le \xvec^{\star,\top} \Umat \yvec^\star \le \xvec^{\star, \top} \Umat \yvec.
	\]
	Moreover, an \emph{EFPE} is a limit point of $\epsilon$-EFPEs as $\epsilon\to0$.
\end{definition}

In this work, we also introduce a relaxed notion of EFPE, which we call $\delta$-approximate $\epsilon$-EFPE and it is defined as $\delta$-approximate NEs of $\epsilon$-perturbed games.
Formally:
\begin{definition}\label{def:delta_eps_efpe}
	Given any $\epsilon>0$ and $\delta>0$, a \emph{$\delta$-approximate $\epsilon$-EFPE} is a pair of strategies $(\xvec^\star,\yvec^\star)\in\X^{\epsilon}\times\Y^{\epsilon}$ such that, for all $\xvec\in\X^{\epsilon}$ and $\yvec \in \Y^{\epsilon}$, it holds:
	\[
	\xvec^\top \Umat \yvec^\star-\delta \le \xvec^{\star,\top} \Umat \yvec^\star \le \xvec^{\star,\top} \Umat \yvec+\delta.
	\]
\end{definition}
Notice that, when $\delta=0$ and by letting $\epsilon\to0$, the definition of $\delta$-approximate $\epsilon$-EFPE coincides with that of EFPE.

Finally, in order to measure how well a pair of players' strategies approximates an NE, we need to introduce the \emph{Nash gap}.
For every $(\tilde\xvec,\tilde\yvec)\in\X\times\Y$, this is formally defined as $Gap(\tilde\xvec,\tilde\yvec):=\max_{\xvec\in\X}\xvec^\top\Umat\tilde\yvec-\min_{\yvec\in\Y}\tilde{\xvec}^\top\Umat\yvec$.

\section{Sequences of Equilibria Leading to EFPEs}\label{sec:methods}
In this section, we provide the core results that allow us to design our algorithm converging to an EFPE.
First, we introduce a modified version of an EFG, which alters the utility function of the game by adding suitable \emph{regularization} and \emph{perturbation} components, where the former guarantees equilibrium uniqueness, while the latter ensures that players' strategies at the equilibrium are valid for the ``perturbed games'' in the definition of EFPE.
Then, we show that such a modified game allows us to identify a \emph{sequence of equilibria} that admits an EFPE as a limit point, by carefully tuning the parameters controlling the regularization and the perturbation components in the modified game.
In the following Section~\ref{sec:algo}, we provide an algorithm that is able to track such a sequence of equilibria, thus converging to an EFPE.

\subsection{Regularized and Perturbed Games}

Next, we formally define regularized-perturbed games, and we show some of their properties that will be useful in proving our core results, namely equilibrium uniqueness and the connection between their equilibria and the $\delta$-approximate $\epsilon$-EFPEs of the original game (Theorem~\ref{thm:connection_efpe}).

Given any EFG and two parameters $\lambda,\epsilon > 0$, we define the \emph{regularized-perturbed game} $\G(\lambda,\epsilon)$ as a two-player zero-sum game in which the players' strategy sets are those of the original game (namely $\X$ and $\Y$) and the first player's utility for any pair $\xvec \in \X$ and $\yvec \in \Y$ is given by:
\begin{equation*}
f_{\lambda,\epsilon}(\xvec,\yvec) \coloneqq \xvec^\top \Umat\yvec - \frac{1}{\lambda} d_1^\epsilon(\xvec) + \frac{1}{\lambda} d_2^\epsilon(\yvec),
\end{equation*}
where $d_1^\epsilon:\mathcal{X}\to\mathbb{R}$ and $d_2^\epsilon:\mathcal{Y}\to\mathbb{R}$ are strongly convex functions that are defined recursively over the sequence-form strategy sets $\X$ and $\Y$, respectively, as follows:
\begin{align*}
d^\epsilon_1(\xvec) \coloneqq \sum\limits_{I\in\I_1} \alpha_I \, \xvec[\sigma_1(I)]d_{\Delta_I}^{\epsilon}\left(\frac{\xvec[I]}{\xvec[\sigma_1(I)]}\right),\quad d^\epsilon_2(\yvec) \coloneqq \sum\limits_{I\in\I_2} \alpha_I \, \yvec[\sigma_2(I)]d_{\Delta_I}^{\epsilon}\left(\frac{\yvec[I]}{\yvec[\sigma_2(I)]}\right).
\end{align*}

In the definitions of $d^\epsilon_1(\xvec)$ and $d^\epsilon_2(\yvec)$, for ease of presentation, we introduced the following useful additional notation:
\begin{itemize}
	\item for every infoset $I \in \mathcal{I}_1$ (respectively $I \in \mathcal{I}_2$), the vector $\xvec[I] \in [0,1]^{n_I}$ (respectively $\yvec[I] \in [0,1]^{n_I}$) is the sub-vector of $\xvec$ (respectively $\yvec$) made by all the components $\xvec[\sigma_1(I)a]$ (respectively $\yvec[\sigma_2(I) a]$) for $a \in A(I)$; 
	\item for every $I \in \mathcal{I}_1 \cup \mathcal{I}_2$, the function $d_{\Delta_I}^{\epsilon}:\Delta^{n_I}\to\mathbb{R}$ is such that $d_{\Delta_I}^{\epsilon}(\wvec) \coloneqq (\wvec-\onevec\epsilon)^\top\log(\wvec-\onevec\epsilon)$;\footnote{In this work, $\Delta_I$ denotes the $(n_I-1)$-dimensional simplex defined over the set $A(I)$ of actions at infoset $I \in \I_1 \cup \I_2$. We also use $\log(\wvec)$ as a shorthand for the vector whose $k$-th component is $\log(\wvec[k])$. Moreover, for the function $d^\epsilon_{\Delta_I}$ to be well defined for every $I \in \I_1 \cup \I_2$, we assume w.l.o.g. that $\epsilon\le \min_{I\in\I_1\cup\I_2}{1}/{2n_I}$.}
	\item for every $I \in \mathcal{I}_1 \cup \mathcal{I}_2$, the weights $\alpha_I \in \mathbb{R}_+$ are recursively defined as
	$
	\alpha_I \defeq 2+2\max_{a\in A(I)}\sum_{J\in \C_{I,a}}\alpha_{J},
	$
	so as to guarantee that $d_1^\epsilon$ and $d_2^\epsilon$ are $1$-strongly convex functions w.r.t. the $\ell_2$-norm (see Lemma~\ref{lem:strongconvexity} in Appendix~\ref{app:1}).
	
\end{itemize}

In the following, we will also introduce the \emph{Bregman divergences} $D_i^\epsilon(\cdot|\cdot)$ associated with the distance-generating functions $d_i^\epsilon(\cdot)$.
Formally, for $\xvec, \tilde\xvec \in \X$ we define:
\[
D_1^\epsilon(\xvec|\tilde\xvec)\coloneqq d_1^\epsilon(\xvec)-d_i^\epsilon(\tilde\xvec)-\nabla d_i^\epsilon(\tilde\xvec)^\top(\xvec-\tilde\xvec),
\]
and $D_2^\epsilon(\yvec|\tilde\yvec)$ is defined analogously.

Notice that the functions $d_1^\epsilon$ and $d_2^\epsilon$ are special cases of \emph{dilated} distance-generating functions, which are defined by directly exploiting the tree form of sequence-form strategy sets $\X$ and $\Y$~\citep{hoda2010smoothing,lee2021last}.
In particular, $d_1^\epsilon$ and $d_2^\epsilon$ employ the $d^\epsilon_{\Delta_I}$ as base distance-generating functions for the simplexes defining the strategy spaces at every infoset $I$.
These functions modify the classical negative entropy by offsetting by $-\epsilon$ the strategy given as input.
Intuitively, this is needed in order to ensure that the equilibria of the game belong to $\X^{\epsilon} \times \Y^{\epsilon}$, and, thus, the parameter $\epsilon$ can be used to tune the perturbation component.

We also notice that the term $1/\lambda$ that multiplies the functions $d_1^\epsilon$ and $d_2^\epsilon$ is the one controlling the regularization component, and it is crucial to recover the strong convexity-concavity of the utility function $f_{\lambda,\epsilon}$, which, in its turn, guarantees equilibrium uniqueness.
By letting $\zvec_{\lambda,\epsilon}^\star \coloneqq (\xvec_{\lambda,\epsilon}^\star, \yvec_{\lambda,\epsilon}^\star) \in \X^\epsilon \times \Y^\epsilon$ be an NE of the game $\G(\lambda,\epsilon)$, it is easy to check that it must satisfy the following conditions:
\begin{align*}
\xvec_{\lambda,\epsilon}^\star\in\arg\max_{\xvec\in\X} \left\{\xvec^\top\Umat \yvec_{\lambda,\epsilon}^\star -\frac{1}{\lambda}d_1^\epsilon(\xvec)\right\},\quad
\yvec_{\lambda,\epsilon}^\star\in\arg\min_{\yvec\in\Y} \left\{\xvec_{\lambda,\epsilon}^{\star,\top}\Umat \yvec +\frac{1}{\lambda}d_2^\epsilon(\yvec)\right\}.
\end{align*}
Then, by the fact that,the objectives of the problems above are strongly convex, one can immediately conclude that a pair of solutions $(\xvec_{\lambda,\epsilon}^\star, \yvec_{\lambda,\epsilon}^\star)$ must be unique and a $\C^\infty$.

We remark that equilibrium uniqueness in regularized-perturbed games plays a crucial in the construction underpinning our algorithm, as we will show in Section~\ref{sec:algo}.
Moreover, let us also notice that our approach to ensure equilibrium uniqueness is inspired by ideas taken from the \emph{quantal equilibrium}~\cite{mckelvey1995quantal}.
Indeed, it is easy to check that, by setting $\epsilon=0$ in $f_{\lambda,\epsilon}$, one immediately recovers the utility functions used to define quantal equilibria.

Next, we prove that the unique NE of a game $\G(\lambda,\epsilon)$ is indeed a $\delta$-approximate $\epsilon$-EFPE of the original EFG, where the approximation level $\delta$ linearly depends on the entity of the regularization component $1/\lambda$.
Formally:\footnote{All the proofs are in the Appendixes~\ref{app:1}~and~\ref{app:2}.}
\begin{restatable}[]{theorem}{deltaapproxefpe}\label{thm:connection_efpe}
	Given an EFG and $\lambda,\epsilon >0$, the unique NE $\zvec_{\lambda,\epsilon}^\star$ of $\G(\lambda,\epsilon)$ is a $O\left(  \frac{1}{\lambda} \right)$-approximate $\epsilon$-EFPE of the EFG.
\end{restatable}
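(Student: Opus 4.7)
The strategy is to exploit directly the two first-order (saddle) characterizations of $\zvec^\star_{\lambda,\epsilon}$ stated right before the theorem. These characterizations are identical to those defining a $\delta$-approximate $\epsilon$-EFPE (Definition~\ref{def:delta_eps_efpe}) up to the extra regularizer terms $\tfrac{1}{\lambda}d_i^\epsilon$. If I can show that the contribution of these extra terms to the inequalities is bounded by a game-dependent constant, then the mismatch in the saddle inequalities is $O(1/\lambda)$, which is exactly what needs to be proved.

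\textbf{Step 1 (feasibility in $\X^\epsilon\times \Y^\epsilon$).} First, I would verify that $\xvec^\star_{\lambda,\epsilon}\in\X^\epsilon$ and $\yvec^\star_{\lambda,\epsilon}\in\Y^\epsilon$. This relies on the barrier behavior of the dilated regularizer: since each base function $d_{\Delta_I}^\epsilon(\wvec)=(\wvec-\onevec\epsilon)^\top\log(\wvec-\onevec\epsilon)$ has gradient blowing up as any conditional probability $\wvec[a]\downarrow \epsilon$, the first-order optimality conditions of the two strongly-concave/convex problems stated before the theorem force every realized conditional behavioral probability to be strictly greater than $\epsilon$; for branches where the parent sequence $\xvec[\sigma_1(I)]$ is zero the $\X^\epsilon$ constraint is vacuous.

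\textbf{Step 2 (saddle inequalities and regularizer bound).} From the NE conditions of $\G(\lambda,\epsilon)$, for every $\xvec\in \X^\epsilon\subseteq\X$ and $\yvec\in \Y^\epsilon\subseteq\Y$,
\begin{equation*}
\xvec^\top\Umat\yvec^\star_{\lambda,\epsilon}-\xvec^{\star,\top}_{\lambda,\epsilon}\Umat\yvec^\star_{\lambda,\epsilon}\;\le\;\tfrac{1}{\lambda}\bigl(d_1^\epsilon(\xvec)-d_1^\epsilon(\xvec^\star_{\lambda,\epsilon})\bigr),\qquad \xvec^{\star,\top}_{\lambda,\epsilon}\Umat\yvec^\star_{\lambda,\epsilon}-\xvec^{\star,\top}_{\lambda,\epsilon}\Umat\yvec\;\le\;\tfrac{1}{\lambda}\bigl(d_2^\epsilon(\yvec)-d_2^\epsilon(\yvec^\star_{\lambda,\epsilon})\bigr).
\end{equation*}
I would then establish the uniform bound $\sup_{\xvec\in\X^\epsilon}|d_1^\epsilon(\xvec)|\le D$ and $\sup_{\yvec\in\Y^\epsilon}|d_2^\epsilon(\yvec)|\le D$ for some constant $D$ depending on the game (through the finite weights $\alpha_I$ and the infoset cardinalities) and on $\epsilon$, but not on $\lambda$. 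This reduces to the fact that $d_1^\epsilon,d_2^\epsilon$ are finite sums with $\alpha_I<\infty$, each summand is scaled by $\xvec[\sigma_1(I)]\in[0,1]$, and the base kernel satisfies $(u-\epsilon)\log(u-\epsilon)\in[-1/e,\,(1-\epsilon)\log(1-\epsilon)]$ for $u\in[\epsilon,1]$.

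\textbf{Step 3 (conclusion).} Combining the two saddle inequalities with the regularizer bound yields, for all $\xvec\in\X^\epsilon$ and $\yvec\in\Y^\epsilon$,
\begin{equation*}
\xvec^\top\Umat\yvec^\star_{\lambda,\epsilon}-\tfrac{2D}{\lambda}\;\le\;\xvec^{\star,\top}_{\lambda,\epsilon}\Umat\yvec^\star_{\lambda,\epsilon}\;\le\;\xvec^{\star,\top}_{\lambda,\epsilon}\Umat\yvec+\tfrac{2D}{\lambda},
\end{equation*}
which matches the condition in Definition~\ref{def:delta_eps_efpe} with $\delta=2D/\lambda=O(1/\lambda)$. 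The main obstacle I anticipate is the clean justification of Step 1 — concluding feasibility in $\X^\epsilon\times\Y^\epsilon$ — because the barrier argument must be made robust to branches where $\xvec[\sigma_1(I)]$ (or $\yvec[\sigma_2(I)]$) vanishes, so that the recursive definition of $d_i^\epsilon$ remains well-defined along the optimal trajectory; everything else reduces to bounding a finite, game-dependent quantity.
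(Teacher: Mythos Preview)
Your proposal is correct and follows essentially the same approach as the paper: both arguments rest on (i) the feasibility $\zvec^\star_{\lambda,\epsilon}\in\X^\epsilon\times\Y^\epsilon$ (which the paper asserts in the main text via the barrier property, while you spell out the argument), and (ii) a uniform bound on $|d_i^\epsilon|$ over the strategy polytope, which the paper packages as a separate lemma yielding a constant $C$ independent of both $\lambda$ and $\epsilon$. The only cosmetic difference is that the paper routes the comparison through an auxiliary function $\tilde f_\epsilon$ with hard indicator barriers and a generic $|\max f-\max g|\le\max|f-g|$ lemma, whereas you work directly from the two $\arg\max/\arg\min$ characterizations of $\zvec^\star_{\lambda,\epsilon}$; the resulting constants and the $O(1/\lambda)$ conclusion are the same.
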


\subsection{How to Select a Sequence Leading to an EFPE}

From Theorem~\ref{thm:connection_efpe}, one could na\"{\i}vely think that, by tracking a sequence of NEs of $\G(\lambda,\epsilon)$ as $\lambda\to+\infty$ and $\epsilon\to0$, it is possible to recover an EFPE of the original EFG.
In the following, we show that this is \emph{not} always the case, since, in order to identify the desired sequence of equilibria leading to an EFPE, one needs to carefully control how the parameters $\lambda$ and $\epsilon$ converge to $+\infty$ and zero, respectively. 

First, let us remark that, as an immediate corollary of Theorem~\ref{thm:connection_efpe}, we have that taking the limit as $\lambda\to+\infty$ before letting $\epsilon \to 0$ allows to recover an EFPE, since the theorem shows that a limit point as $\lambda\to+\infty$ of a sequence of $\zvec_{\lambda,\epsilon}^\star$ is an NE of a perturbed game.
However, doing so will result in losing all the benefits of the regularization component, which are needed in order to be able to design an efficient algorithm converging to an EFPE. 
Formally:
\begin{corollary}\label{cor:limits_efpe}
	Given any EFG, if $\zvec^\star \in \X \times \Y$ is such that
	$
		\lim_{\epsilon \to 0}\left[\lim_{\lambda\to\infty} \zvec_{\lambda,\epsilon}^\star\right] = \zvec^\star,
	$
	then $\zvec^\star $ is an EFPE.
\end{corollary}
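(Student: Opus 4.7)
The corollary is a fairly direct consequence of Theorem~\ref{thm:connection_efpe} together with the definition of EFPE, so my plan is to unpack the two nested limits one at a time.

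\textbf{Step 1: taking the inner limit $\lambda\to\infty$ at fixed $\epsilon$.} Fix any $\epsilon>0$ for which the inner limit $\zvec_\epsilon^\star\coloneqq \lim_{\lambda\to\infty}\zvec_{\lambda,\epsilon}^\star$ exists. Theorem~\ref{thm:connection_efpe} says that $\zvec_{\lambda,\epsilon}^\star=(\xvec_{\lambda,\epsilon}^\star,\yvec_{\lambda,\epsilon}^\star)\in\X^\epsilon\times\Y^\epsilon$ is an $O(1/\lambda)$-approximate $\epsilon$-EFPE, i.e.\ for every $\xvec\in\X^\epsilon$ and $\yvec\in\Y^\epsilon$,
\[
\xvec^\top\Umat\yvec_{\lambda,\epsilon}^\star - O(1/\lambda) \;\le\; \xvec_{\lambda,\epsilon}^{\star,\top}\Umat\yvec_{\lambda,\epsilon}^\star \;\le\; \xvec_{\lambda,\epsilon}^{\star,\top}\Umat\yvec + O(1/\lambda).
\]
Since $\X^\epsilon$ and $\Y^\epsilon$ are closed (in fact compact) polytopes, the limit point $\zvec_\epsilon^\star$ still lies in $\X^\epsilon\times\Y^\epsilon$. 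Passing $\lambda\to\infty$ in the displayed inequalities, using continuity of the bilinear form $(\xvec,\yvec)\mapsto\xvec^\top\Umat\yvec$ and the fact that the error $O(1/\lambda)$ vanishes, I obtain the exact saddle-point condition
\[
\xvec^\top\Umat\yvec_\epsilon^\star \;\le\; \xvec_\epsilon^{\star,\top}\Umat\yvec_\epsilon^\star \;\le\; \xvec_\epsilon^{\star,\top}\Umat\yvec \quad \forall\,\xvec\in\X^\epsilon,\,\yvec\in\Y^\epsilon,
\]
which by Definition~\ref{def:eps_efpe} means exactly that $\zvec_\epsilon^\star$ is an $\epsilon$-EFPE.

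\textbf{Step 2: taking the outer limit $\epsilon\to 0$.} By Step 1, for every $\epsilon>0$ for which the inner limit exists, $\zvec_\epsilon^\star$ is an $\epsilon$-EFPE. The hypothesis of the corollary is that $\zvec^\star=\lim_{\epsilon\to 0}\zvec_\epsilon^\star$. Hence $\zvec^\star$ is a limit point, as $\epsilon\to 0$, of a sequence of $\epsilon$-EFPEs, which by Definition~\ref{def:eps_efpe} is precisely an EFPE. This concludes the argument.

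\textbf{Expected obstacle.} There is no deep obstacle here: the whole content is the uniform-in-$\epsilon$ bound given by Theorem~\ref{thm:connection_efpe} (the $O(1/\lambda)$ approximation error does not depend on $\epsilon$), which lets me take $\lambda\to\infty$ cleanly. The only subtlety to keep an eye on is that $\zvec_\epsilon^\star$ must remain feasible in $\X^\epsilon\times\Y^\epsilon$ after passing to the limit, but this is immediate from closedness of those polytopes for fixed $\epsilon$. No analogous feasibility issue arises at the outer limit, because EFPE is defined precisely as a limit point as $\epsilon\to 0$.
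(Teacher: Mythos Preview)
Your proof is correct and follows precisely the route the paper indicates: Theorem~\ref{thm:connection_efpe} gives that each $\zvec_{\lambda,\epsilon}^\star$ is an $O(1/\lambda)$-approximate $\epsilon$-EFPE, so the inner limit yields an $\epsilon$-EFPE (an NE of the perturbed game), and then Definition~\ref{def:eps_efpe} handles the outer limit. The paper treats this as immediate and does not spell out the continuity/closedness details you provide, but your argument is exactly the intended one.
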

\begin{figure}[t!]
	\centering
	\includegraphics[width=0.5\textwidth]{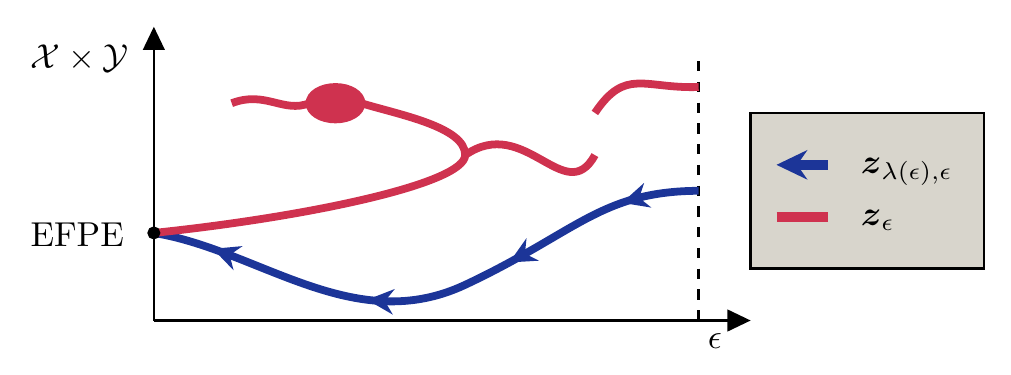}
	\caption{Examples of the ``smooth'' sequence of equilibria $\zvec_{\lambda(\epsilon),\epsilon}$ of regularized-perturbed games (in \emph{blue}) and the sequence of $\epsilon$-EFPEs $\zvec_{\epsilon}$ in the definition of EFPE (in \emph{red}).}
	\label{fig:homotopy}
\end{figure}
\begin{figure*}[!htp]
	\begin{minipage}[t]{0.42\textwidth}
		\begin{algorithm}[H]
			\small
			\caption{Compute EFPE}
			\label{alg:OOMD}
			\begin{algorithmic}[1]
				\Function{Compute-EFPE}{$\{ (\lambda_k,\epsilon_k) \}_{k \in \mathbb{N}} , \beta,\eta$}\Statex \hfill \textit{\textcolor{gray}{\begin{tabular}{l}
						$\{ (\lambda_k,\epsilon_k) \}_{k \in \mathbb{N}}$: As in the statement of Theorem~\ref{th:convergece}. \\
						$\beta > 1$: Phase growing rate.\\
						$\eta > 0$: Learning rate.
				\end{tabular}}}
				\State $\zvec^{(0)} \gets$ any $(\xvec,\yvec) \in \X \times \Y$; $k \gets 0$; $T\gets0$
				\While{\emph{not} exceeding time limit}
				\State $k \gets k+1$
				\State Instantiate game $\G_k \coloneqq \G(\lambda_k,\epsilon_k)$
				\State $T_k \gets \beta^k$; $T\gets T+T_k$ 
				\State $\zvec^{(k)}_{T_k} \gets$ \textsc{Solve}$(\G_k, T_k, \zvec^{(k-1)}_{T_{k-1}}, \eta)$
				\EndWhile
				\State \Return $\zvec^{(k)}_{T_k}$
				\EndFunction		%
			\end{algorithmic}%
		\end{algorithm}
	\end{minipage}
	\hfil
	\begin{minipage}[t]{0.58\textwidth}
		\begin{algorithm}[H]
			\small
			\caption{Solve game $\G_k$}
			\label{alg:OOMD_sub}
			\begin{algorithmic}[1]
				\Function{Solve}{$\G_k, T_k, \zvec^{(k)}_{-{1}/{2}},\eta$}
				\vspace{2pt}
				\For{$t= 0, \ldots, T_{k}-1$}
				\State $\xvec^{(k)}_{t+\frac{1}{2}} \gets \arg\max\limits_{\xvec\in\X}\left\{\xvec^\top  \Umat\yvec_t^{(k)}-\frac{1}{\lambda_k}{d_1^{\epsilon_k}(\xvec)}-\frac{1}{\eta}D_1^{\epsilon_k}\left(\xvec|\xvec^{(k)}_{t-\frac{1}{2}}\right)\right\}$
				\State $\xvec^{(k)}_{t+1}\gets \arg\max\limits_{\xvec\in\X}\left\{\xvec^\top \Umat\yvec_t^{(k)}-\frac{1}{\lambda_k}{d^{\epsilon_k}(\xvec)}-{\eta}D_1^{\epsilon_k}\left(\xvec|\xvec^{(k)}_{t+\frac{1}{2}}\right)\right\}$
				\State $\yvec^{(k)}_{t+\frac{1}{2}}\gets \arg\min\limits_{\yvec\in\Y}\left\{ \xvec^{(k),\top}_t\Umat\yvec+\frac{1}{\lambda_k}{d_2^{\epsilon_k}(\yvec)}+\frac{1}{\eta}D_2^{\epsilon_k}\left(\yvec|\yvec^{(k)}_{t-\frac{1}{2}}\right)\right\}$
				\State $\yvec^{(k)}_{t+1}\gets \arg\min\limits_{\yvec\in\Y}\left\{  \xvec^{(k),\top}_t\Umat\yvec+\frac{1}{\lambda_k}d_2^{\epsilon_k}(\yvec)+\frac{1}{\eta}D_2^{\epsilon_k}\left(\yvec|\yvec^{(k)}_{t+\frac{1}{2}}\right)\right\}$
				\EndFor
				\State \Return $\zvec^{(k)}_{{T_k}} \coloneqq \left(\xvec_{T_k}^{(k)},\yvec_{T_k}^{(k)}\right)$
				\vspace{2pt}
				\EndFunction		%
			\end{algorithmic}%
		\end{algorithm}
	\end{minipage}
\end{figure*}
Moreover, using \emph{any} sequence $\left\{ (\lambda_k,\epsilon_k) \right\}_{k \in \mathbb{N}}$ such that $\lambda_k \to +\infty$ and $\epsilon_k \to 0$ jointly as $k \to +\infty$ does \emph{not} always lead to EFPEs, as shown by the following proposition:
\begin{restatable}[]{proposition}{doublelimit}\label{th:doublelimit}
	There exist EFGs for which
	\[
		\lim\limits_{\lambda\to\infty}\left[\lim\limits_{\epsilon\to0} \zvec_{\lambda,\epsilon}^\star\right]\neq\lim\limits_{\epsilon\to0}\left[\lim\limits_{\lambda\to\infty} \zvec_{\lambda,\epsilon}^\star\right],
	\]
	and
	$
		\lim_{\lambda\to\infty}\left[\lim_{\epsilon\to0} \zvec_{\lambda,\epsilon}^\star\right]
	$
	is \emph{not} an EFPE.
\end{restatable}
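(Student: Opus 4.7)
The plan is to prove the proposition by exhibiting an explicit counterexample EFG on which both iterated limits can be computed and shown to differ. I would work with the small tree: at the root P1 chooses $L$ (payoff $0$) or $R$; after $R$, P2 chooses $l$ (payoff $-1$) or $r$; after $r$, P1 at an off-path infoset $I$ chooses $a$ (payoff $0$) or $b$; and after $b$, P2 at a further infoset $J$ chooses between terminal payoffs $+1$ and $-1$. A direct backward-induction analysis of the $\epsilon$-perturbed game shows that this game has a unique EFPE in which P1 plays $L$ at the root and $a$ with probability $1$ at $I$, so by Corollary~\ref{cor:limits_efpe} the iterated limit $\lim_{\epsilon\to 0}[\lim_{\lambda\to\infty}\zvec^\star_{\lambda,\epsilon}]$ assigns probability $1$ to $a$ at $I$.

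For the same game I would then analyze the opposite iterated limit. Its inner component $\lim_{\epsilon\to 0}\zvec^\star_{\lambda,\epsilon}$ is, by continuity, the unique equilibrium of $\G(\lambda,0)$, whose KKT conditions reduce, at each infoset, to a softmax response whose effective inverse temperature is $\lambda\cdot(\text{parent reach})/\alpha_I$. Since $L$ strictly dominates $R$ in the underlying game, a self-consistent fixed-point argument yields that the root probability $x_R$ decays exponentially in $\lambda$, so $\lambda x_R \to 0$; the same then holds for the parent reach at every off-path infoset. In this regime the regularizer overwhelms the payoff term in the softmax and pushes each off-path conditional strategy toward the uniform distribution. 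In particular, P2 at $J$ converges to $(1/2,1/2)$, which annihilates the payoff gap between $a$ and $b$ at $I$, and therefore the conditional at $I$ also converges to $(1/2,1/2)$. Hence $\lim_{\lambda\to\infty}[\lim_{\epsilon\to 0}\zvec^\star_{\lambda,\epsilon}]$ puts probability $1/2$ on $a$ at $I$, differing from the EFPE and thus proving both claims of the proposition.

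The hard part of the plan is the self-consistent asymptotic analysis in the second step. The KKT conditions couple the reach probabilities, the payoff gaps, and the conditional strategies at all four infosets, so one must simultaneously verify the exponential decay of $x_R$, the uniform limits at the three off-path infosets, and the consistency of these with the root softmax. The recursive weights $\alpha_I$ are explicit for this small tree, so once the exponential decay of $x_R$ is established the remaining limits follow by standard softmax expansions; the delicate point is to rule out self-consistent scalings in which $\lambda x_R$ does \emph{not} vanish, which would change the conclusion at $I$.
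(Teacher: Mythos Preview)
Your plan contains a genuine error in the analysis of $\G(\lambda,0)$. For the dilated regularizer used in this paper,
\[
d_1^{0}(\xvec)=\sum_{I\in\I_1}\alpha_I\,\xvec[\sigma_1(I)]\,d_{\Delta_I}^{0}\!\left(\frac{\xvec[I]}{\xvec[\sigma_1(I)]}\right),
\]
the reach weight $\xvec[\sigma_1(I)]$ multiplies the local entropy at $I$. Since the utility contribution at $I$ is \emph{also} proportional to $\xvec[\sigma_1(I)]$, the two factors cancel in the first-order conditions: at every infoset (on- or off-path) the conditional strategy is a softmax with inverse temperature $\lambda/\alpha_I$, \emph{not} $\lambda\,\xvec[\sigma_1(I)]/\alpha_I$ as you assert. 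This is exactly the ``nice'' decomposition property of dilated distance-generating functions exploited in Theorem~\ref{thm:efficency_1} and Algorithm~\ref{alg:decomposition}. Consequently, in your tree, even though $x_R\to 0$ exponentially, P2's conditional at $J$ converges to the minimizing action (not to $(1/2,1/2)$), P1's conditional at $I$ converges to $a$, and $\lim_{\lambda\to\infty}[\lim_{\epsilon\to 0}\zvec^\star_{\lambda,\epsilon}]$ coincides with the EFPE. Your example therefore does \emph{not} witness the proposition.

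The paper's proof uses a completely different mechanism. It takes a $3\times 3$ normal-form game (one infoset per player, so off-path behaviour is irrelevant) in which two of P2's columns are payoff-equivalent against P1's equilibrium mixture but one of them is weakly dominated. Setting $\epsilon=0$ turns $\zvec^\star_{\lambda,0}$ into the logit quantal response equilibrium, and the limit as $\lambda\to\infty$ inherits the ``independence of irrelevant alternatives'' symmetry of QRE, placing equal weight on the two tied columns. The unique EFPE, by contrast, eliminates the weakly dominated column. The discrepancy thus comes from how ties are broken by entropy regularization versus by $\epsilon$-perturbation, not from vanishing reach at deep infosets. If you want to repair your approach, you would need to build the counterexample around a tie/weak-dominance structure of this kind rather than around off-path drift to uniform.
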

Then, by well-known results on iterated limits~\citep{steinlage1971nearly}, we have that the double limit $\lim_{(\lambda, \epsilon)\to(+\infty,0)}\zvec^\star_{\lambda,\epsilon}$ may \emph{not} exist. 
This immediately implies that one cannot consider any arbitrary sequence $\left\{ (\lambda_k,\epsilon_k) \right\}_{k \in \mathbb{N}}$ in order to define a sequence of equilibria leading to an EFPE.
Indeed, the only guarantee that one gets is that the sequence leads to \emph{an} NE of the original EFG.
As a result, the sequence $\left\{ (\lambda_k,\epsilon_k) \right\}_{k \in \mathbb{N}}$ must be built in a specific way.

Next, we show that we need sequences $\left\{ (\lambda_k,\epsilon_k) \right\}_{k \in \mathbb{N}}$ that are defined so that the sequence made by the $\lambda_k$ converges faster than that of the $\epsilon_k$.
Formally:
\begin{restatable}[]{theorem}{convergence}\label{th:convergece}
	Given any EFG and a sequence $\left\{ (\lambda_k,\epsilon_k) \right\}_{k \in \mathbb{N}}$ such that $\lambda_k \to +\infty$, $\epsilon_k \to 0$ as $k \to +\infty$ and $1/\lambda_k<\epsilon_k^d$ for a sufficiently large $d \in \mathbb{N}$, $\lim_{k\to +\infty} \zvec^\star_{\lambda_k,\epsilon_k}$ is an EFPE. 
\end{restatable}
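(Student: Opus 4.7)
My plan is to ``chain'' the two iterated limits already isolated by Corollary~\ref{cor:limits_efpe}, using a quantitative rate for the inner limit so as to permit a joint passage $k\to\infty$. Concretely, for each fixed $\epsilon>0$ the regularized-perturbed game $\G(\lambda,\epsilon)$ has a unique NE $\zvec^\star_{\lambda,\epsilon}$ which, by Theorem~\ref{thm:connection_efpe}, is a $O(1/\lambda)$-approximate $\epsilon$-EFPE. A standard Tikhonov argument on zero-sum saddle problems (monotone bilinear operator plus a vanishing strongly convex--concave regularizer) shows that $\zvec^\star_{\lambda,\epsilon}\to \tilde\zvec^\epsilon$ as $\lambda\to\infty$, where $\tilde\zvec^\epsilon$ is the unique $\epsilon$-EFPE selected, across the convex NE set of the $\epsilon$-perturbed game, by minimizing $d_1^\epsilon-d_2^\epsilon$ in the appropriate saddle sense. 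Corollary~\ref{cor:limits_efpe} then ensures that any cluster point of $\{\tilde\zvec^{\epsilon_k}\}_k$ along $\epsilon_k\to 0$ is an EFPE.

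The technical heart is a \emph{quantitative} version of the inner convergence: the plan is to establish a bound of the form
\[
\bigl\|\zvec^\star_{\lambda,\epsilon}-\tilde\zvec^\epsilon\bigr\|\,\le\,\frac{C(\epsilon)}{\sqrt{\lambda}},
\]
with $C(\epsilon)$ at most polynomial in $1/\epsilon$. I would derive this by writing the first-order optimality (variational-inequality) conditions of both $\zvec^\star_{\lambda,\epsilon}$ and $\tilde\zvec^\epsilon$, subtracting them, and invoking monotonicity of the bilinear saddle operator induced by $\Umat$ jointly with the additional $\frac{1}{\lambda}$-strong monotonicity contributed by the $1$-strong convexity of $d_1^\epsilon$ and $d_2^\epsilon$ (Lemma~\ref{lem:strongconvexity}). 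Standard manipulation reduces this to controlling the error term $\frac{1}{\lambda}\bigl(\|\nabla d_1^\epsilon(\tilde\xvec^\epsilon)\|+\|\nabla d_2^\epsilon(\tilde\yvec^\epsilon)\|\bigr)$, which one can in turn estimate using the explicit form $d_{\Delta_I}^\epsilon(\wvec)=(\wvec-\onevec\epsilon)^\top\log(\wvec-\onevec\epsilon)$ of the base regularizer propagated through the dilated tree via the recursive weights $\alpha_I$ and the sequence-form factors $\tilde\xvec^\epsilon[\sigma_1(I)]$. Once such a bound is in hand, picking $d$ strictly larger than twice the polynomial degree of $C(\epsilon)$ immediately makes the hypothesis $1/\lambda_k<\epsilon_k^d$ yield $\|\zvec^\star_{\lambda_k,\epsilon_k}-\tilde\zvec^{\epsilon_k}\|\to 0$; combined with any subsequence along which $\tilde\zvec^{\epsilon_k}\to\zvec^\infty$ (which by compactness and Corollary~\ref{cor:limits_efpe} is an EFPE), the triangle inequality delivers $\zvec^\star_{\lambda_k,\epsilon_k}\to\zvec^\infty$ along the same subsequence, and since every cluster point of $\{\zvec^\star_{\lambda_k,\epsilon_k}\}_k$ arises this way, every cluster point is an EFPE.

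The hard part will be tracking the polynomial degree of $C(\epsilon)$ through the dilated structure. Two competing effects must be quantified simultaneously: the offset negative-entropy base regularizer contributes only a mild $\log(1/\epsilon)$ factor per infoset, but $\tilde\zvec^\epsilon$ will in general saturate the lower-bound constraints $\xvec[\sigma_1(I)a]=\epsilon\,\xvec[\sigma_1(I)]$ on off-equilibrium sequences, driving $\nabla d^\epsilon_1$ into its singular regime, while the sequence-form weights $\xvec[\sigma_1(I)]$ themselves can shrink like $\epsilon$ raised to the depth of $I$. These two singularities compound as one recurses up the tree through the weights $\alpha_I$. Carefully unrolling this recursion is what pins $d$ to a concrete value depending polynomially on the depth and the infoset count of the game, which is finite and hence ``sufficiently large'' in the sense of the theorem.
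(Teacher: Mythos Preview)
Your triangle-inequality skeleton is exactly the paper's: split $\|\zvec^\star_{\lambda_k,\epsilon_k}-\zvec^\star\|$ through the intermediate point $\tilde\zvec^{\epsilon_k}\coloneqq\lim_{\lambda\to\infty}\zvec^\star_{\lambda,\epsilon_k}$, use Corollary~\ref{cor:limits_efpe} for the outer piece, and argue that the hypothesis $1/\lambda_k<\epsilon_k^d$ forces the inner piece to vanish. The paper handles the inner piece non-quantitatively: it simply asserts that the radius $R_\epsilon(\tau)$ needed to make $\|\zvec^\star_{\lambda,\epsilon}-\tilde\zvec^\epsilon\|\le\tau$ stays bounded as $\epsilon\to 0$, so that a sufficiently large $d$ pushes $1/\epsilon_k^d$ above this uniform bound. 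Your attempt to prove a concrete rate $C(\epsilon)/\sqrt{\lambda}$ with $C(\epsilon)$ polynomial in $1/\epsilon$ is therefore strictly more ambitious than what the paper actually does.

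That quantitative step, however, has a genuine gap at precisely the place you flag as ``the hard part''. The VI manipulation you outline---subtract first-order conditions at $\zvec^\star_{\lambda,\epsilon}$ and at $\tilde\zvec^\epsilon$, use the $\tfrac{1}{\lambda}$-strong monotonicity contributed by $d_i^\epsilon$---requires writing down and bounding $\tfrac{1}{\lambda}\nabla d_i^\epsilon(\tilde\zvec^\epsilon)$. But $\nabla d_{\Delta_I}^\epsilon(\wvec)[a]=1+\log(\wvec[a]-\epsilon)$, and $\tilde\zvec^\epsilon$ is an NE of the \emph{unregularized} bilinear game on $\X^\epsilon\times\Y^\epsilon$; generically it sits at a vertex of that polytope and saturates constraints $\wvec[a]=\epsilon$ exactly, where this gradient is $-\infty$, not polynomially large in $1/\epsilon$. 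The dilated structure does not rescue you in the way you suggest: from the recursion (cf.\ Algorithm~\ref{alg:decomposition}) one has $\nabla d_1^\epsilon(\xvec)[\sigma_1(I)a]=\alpha_I\bigl(1+\log(\xvec[\sigma_1(I)a]/\xvec[\sigma_1(I)]-\epsilon)\bigr)+(\text{bounded child terms})$, so the local log-singularity is \emph{not} multiplied by the parent weight $\xvec[\sigma_1(I)]$, and the $\epsilon^{\text{depth}}$ attenuation you are counting on does not occur. A workable quantitative route would have to avoid evaluating $\nabla d^\epsilon$ at $\tilde\zvec^\epsilon$ altogether---for instance by combining the $O(1/\lambda)$ Nash gap of $\zvec^\star_{\lambda,\epsilon}$ in the $\epsilon$-perturbed game (Theorem~\ref{thm:connection_efpe}) with a polyhedral error bound for that game's NE set, letting a Hoffman-type constant absorb the polynomial $1/\epsilon$ dependence---but that is a different mechanism from the one you sketch.
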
 

The proof of Theorem~\ref{th:convergece} is based on the core idea that, by choosing $\lambda_k>1/\epsilon_k^d$, the resulting sequence $(1/\lambda_k,\epsilon_k)$ is ``close'' to the one defined by the iterated limit in Corollary~\ref{cor:limits_efpe}, which is guaranteed to be an EFPE.
Indeed, this amounts to showing that, for $k$ sufficiently large, the sequence $(1/\lambda_k,\epsilon_k)$ is close to $(0,\epsilon_k)$, where the latter sequence converges to an EFPE by Corollary~\ref{cor:limits_efpe}.

\subsection{The Need of Regularization}

Next, we argue why adding a regularization component is of paramount importance in our setting.
Indeed, one could argue that it would be more natural to track the sequence of $\epsilon$-EFPEs that appear in the definition of EFPE (Definition~\ref{def:eps_efpe}).
However, such a methodology would \emph{not} work.
Indeed, such a natural sequence, tough always admitting a limit point, can have discontinuities, bifurcations, multiple branches, and other irregularities, which are drawback inherited by the convoluted structure of NE in linear games.
In particular, non-uniqueness alone would doom any method that relies on last-iterate convergence, since uniqueness is a key assumption in such methods.

On the other hand, our sequence of equilibria defined in terms of regularized-perturbed games is $\C^\infty$-differentiable and it identifies a unique branch, while sharing the same limit point with the sequence in the definition of EFPE.
This renders our sequence of equilibria much easier to track.

Figure~\ref{fig:homotopy} provides a depiction of the sequence used in the definition of EFPE and the one generated by our regularized-perturbed games.

\section{Efficient Last-iterate Algorithm for EFPEs}\label{sec:algo}

We are now ready to introduce our \emph{last-iterate} algorithm that converges to an (exact) EFPE (Algorithm~\ref{alg:OOMD}).
The algorithm works by tracking a sequence of equilibria $\zvec^\star_k \coloneqq \zvec^\star_{\lambda_k, \epsilon_k}$ of regularized-perturbed games $\G_k\coloneqq\G(\lambda_k,\epsilon_k)$ by letting $k \to +\infty$, where the latter games are defined by means of a sequence $\left\{ (\lambda_k,\epsilon_k) \right\}_{k \in \mathbb{N}}$ as in the statement of Theorem~\ref{th:convergece}.

Algorithm~\ref{alg:OOMD} works in phases.
For every $k \in \mathbb{N}$, the $k$-th phase of the algorithm is devoted to finding a suitable approximation of the equilibrium $\zvec^\star_k$ of $\G_k$.
This is achieved by performing $T_k \coloneqq \beta^k$ (for a given $\beta > 1$) iterations of a last-iterate sub-procedure (Algorithm~\ref{alg:OOMD_sub}) that converges linearly to an approximate equilibrium of $\G_k$ (see Theorem~\ref{thm:algo_conv}).
We remark that, in any given phase $k$, finding an approximation of $\zvec^\star_k$ is sufficient, since the sequence made by the $\zvec_k^\star$ has the only purpose of identifying EFPEs as its limit points.

Algorithm~\ref{alg:OOMD_sub} is an extension of the OOMD algorithm, when instantiated for sequence-form strategy sets.
Similarly to the OOMD algorithm, Algorithm~\ref{alg:OOMD_sub} performs suitably-defined intermediate updates, labeled with the subscripts $t + \frac{1}{2}$. 
This enables the algorithm to converge last iterate.

Notice that, in our setting, it is \emph{not} possible to use the standard update rule of OOMD, which would work by directly feeding the algorithm with the gradients of the utility functions $f_{\lambda_k,\epsilon_k}$ of regularized-perturbed games $\G_k$.
This is because such functions are non-smooth due to the $d_i^{\epsilon_k}$ having unbounded gradient norm near the boundaries of $\X^{\epsilon_k}\times\Y^{\epsilon_k}$.
Indeed, $d_i^{\epsilon_k}$ are Legendre functions~\citep{cesa2006prediction}.
In order to circumvent this issue, we exploit an idea first introduced by the \emph{composite objective mirror descent} algorithm~\citep{duchi2010composite}, which consists in avoiding the linearization of the non-smooth part in the utility function in order to get away with the non-Lipschitzness of the gradients.
The following theorem formally states the last-iterate convergence guarantees of Algorithm~\ref{alg:OOMD_sub}.

\begin{restatable}[]{theorem}{linearconv}\label{thm:algo_conv}
	Given $\eta\le {1}/{\sqrt{2}{\|\Umat\|_2}}$, for every phase $k \in \mathbb{N}$ and round $t\in\{1,\ldots,T_k\}$, Algorithm~\ref{alg:OOMD_sub} guarantees that:
	\[
	\|\zvec_k^\star-\zvec^{(k)}_{t}\|_2\le2\left[2D^{\epsilon_k}\left( \zvec_k^\star|\zvec^{(k)}_0 \right)\right]^{1/2}\left(\frac{\lambda_k}{\lambda_k+\eta}\right)^{t/2},
	\]
	where $D^{\epsilon_k}(\zvec|\tilde\zvec):=D^{\epsilon_k}_1(\xvec|\tilde\xvec)+D^{\epsilon_k}_2(\yvec|\tilde\yvec)$.
\end{restatable}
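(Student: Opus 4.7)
The plan is to establish a per-iteration linear contraction of the Bregman divergence $D^{\epsilon_k}(\zvec_k^\star|\cdot)$ along the iterates produced by Algorithm~\ref{alg:OOMD_sub}, and then to convert this contraction into the stated $\ell_2$-bound using the $1$-strong convexity of $d^{\epsilon_k}$ (Lemma~\ref{lem:strongconvexity}). Uniqueness of $\zvec_k^\star$, guaranteed after Theorem~\ref{thm:connection_efpe}, is what allows us to measure progress against a fixed target point.

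First I would write down the first-order optimality conditions for each of the four updates. For instance, optimality of $\xvec_{t+1}^{(k)}$ yields, for every $\xvec \in \X$,
$$\left\langle \xvec - \xvec_{t+1}^{(k)},\, -\Umat \yvec_t^{(k)} + \tfrac{1}{\lambda_k}\nabla d_1^{\epsilon_k}(\xvec_{t+1}^{(k)}) + \tfrac{1}{\eta}\bigl(\nabla d_1^{\epsilon_k}(\xvec_{t+1}^{(k)}) - \nabla d_1^{\epsilon_k}(\xvec_{t+1/2}^{(k)})\bigr)\right\rangle \ge 0,$$
with analogous inequalities for $\xvec_{t+1/2}^{(k)}$ and for the two dual updates. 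Substituting $\xvec = \xvec_k^\star$ (respectively $\yvec = \yvec_k^\star$) and invoking the three-point identity
$$\langle \xvec - \xvec',\, \nabla d_1^{\epsilon_k}(\xvec') - \nabla d_1^{\epsilon_k}(\xvec'')\rangle = D_1^{\epsilon_k}(\xvec|\xvec'') - D_1^{\epsilon_k}(\xvec|\xvec') - D_1^{\epsilon_k}(\xvec'|\xvec'')$$
converts the gradient differences into a telescoping combination of Bregman divergences between consecutive half-step and full-step iterates.

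The core step is then to add the primal and dual inequalities at $\zvec_k^\star$. Two distinct effects arise. First, the bilinear terms gather into a cross expression of the form $\langle \Umat(\yvec_t^{(k)}-\yvec_{t+1/2}^{(k)}),\, \xvec_k^\star - \xvec_{t+1}^{(k)}\rangle$ and its dual counterpart; Cauchy--Schwarz and Young's inequality bound it by $\tfrac{\eta}{2}\|\Umat\|_2^2$-scaled squared norms plus $\tfrac{1}{2\eta}$-scaled squared norms, and the stepsize restriction $\eta \le 1/(\sqrt{2}\|\Umat\|_2)$ is exactly what lets these be absorbed by the negative stability terms $-\tfrac{1}{\eta}D^{\epsilon_k}(\zvec_{t+1/2}^{(k)}|\zvec_t^{(k)})$ and $-\tfrac{1}{\eta}D^{\epsilon_k}(\zvec_{t+1}^{(k)}|\zvec_{t+1/2}^{(k)})$ produced by the three-point identity, via $1$-strong convexity of $d^{\epsilon_k}$. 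Second, because the regularizers $d_i^{\epsilon_k}/\lambda_k$ are kept inside the subproblems rather than linearized, the optimality condition picks up an additional term $\tfrac{1}{\lambda_k}\bigl[d_i^{\epsilon_k}(\zvec_k^\star) - d_i^{\epsilon_k}(\zvec_{t+1}^{(k)}) - \langle \nabla d_i^{\epsilon_k}(\zvec_{t+1}^{(k)}),\zvec_k^\star-\zvec_{t+1}^{(k)}\rangle\bigr]=\tfrac{1}{\lambda_k}D_i^{\epsilon_k}(\zvec_k^\star|\zvec_{t+1}^{(k)})$, which is the strong-convexity margin that drives linear convergence. Combining all pieces produces a recursion of the form
$$\Bigl(1 + \tfrac{\eta}{\lambda_k}\Bigr)\, D^{\epsilon_k}\bigl(\zvec_k^\star\,\big|\,\zvec_{t+1/2}^{(k)}\bigr) \le D^{\epsilon_k}\bigl(\zvec_k^\star\,\big|\,\zvec_{t-1/2}^{(k)}\bigr),$$
which iterates to $D^{\epsilon_k}(\zvec_k^\star|\zvec_{t}^{(k)}) \le C\cdot D^{\epsilon_k}(\zvec_k^\star|\zvec_0^{(k)})\bigl(\lambda_k/(\lambda_k+\eta)\bigr)^{t}$ for an absolute constant $C$ arising from the one-shot comparison between the full-step and half-step divergences. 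Applying $\tfrac{1}{2}\|\zvec_k^\star-\zvec_t^{(k)}\|_2^2 \le D^{\epsilon_k}(\zvec_k^\star|\zvec_t^{(k)})$ and taking a square root yields the bound claimed in the theorem, with the displayed constants accounting for the $\ell_2$-versus-Bregman conversion and the full/half-step comparison.

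The main obstacle I anticipate is the bookkeeping around the composite, non-smooth term $d_i^{\epsilon_k}/\lambda_k$: one must arrange the three-point identities so that every occurrence of $\nabla d_i^{\epsilon_k}$ near the boundary of $\X^{\epsilon_k}\times\Y^{\epsilon_k}$ is paired against another (the algorithm never needs to bound $\|\nabla d_i^{\epsilon_k}\|$ explicitly, which is essential since these gradients blow up), and simultaneously ensure that the $1/\lambda_k$-penalty evaluated at the reference point $\zvec_k^\star$ recombines exactly into $\tfrac{1}{\lambda_k}D^{\epsilon_k}(\zvec_k^\star|\zvec_{t+1}^{(k)})$, rather than being lost to a linearization error. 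A minor secondary point is handling the base case $\zvec_{-1/2}^{(k)}$, which is resolved by initializing it to $\zvec_0^{(k)}$ and absorbing the resulting discrepancy into the constant factor.
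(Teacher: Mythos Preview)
Your high-level plan (first-order optimality plus the three-point identity, with the composite $d_i^{\epsilon_k}/\lambda_k$ term supplying the extra Bregman divergence that drives linear contraction) is the right skeleton, but the execution you describe does not fit Algorithm~\ref{alg:OOMD_sub}. In that algorithm both the half step $\zvec_{t+1/2}^{(k)}$ and the full step $\zvec_{t+1}^{(k)}$ are computed with the \emph{same} linear part $G(\zvec_t^{(k)})=(-\Umat\yvec_t^{(k)},\Umat^\top\xvec_t^{(k)})$, so plugging $\zvec_k^\star$ into all four optimality conditions cannot produce a cross term of the form $\langle \Umat(\yvec_t^{(k)}-\yvec_{t+1/2}^{(k)}),\,\xvec_k^\star-\xvec_{t+1}^{(k)}\rangle$; that shape belongs to extragradient/OOMD variants where the full step is taken at the half-step gradient. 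And even granting your cross term, your Young split leaves a piece comparable to $\tfrac{1}{2\eta}\|\zvec_k^\star-\zvec_{t+1}^{(k)}\|_2^2$, which cannot be absorbed by the stability terms $D^{\epsilon_k}(\zvec_{t+1/2}^{(k)}|\zvec_t^{(k)})$ and $D^{\epsilon_k}(\zvec_{t+1}^{(k)}|\zvec_{t+1/2}^{(k)})$ (those involve consecutive iterates, not $\zvec_k^\star$).

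The paper instead combines the optimality of $\zvec_{t+1/2}^{(k)}$ tested at $\zvec_k^\star$ with the optimality of $\zvec_t^{(k)}$ tested at $\zvec_{t+1/2}^{(k)}$ (both via Lemma~\ref{lem:onestepreg1}). After summing, the $H^{\epsilon_k}(\zvec_{t+1/2}^{(k)})$ contributions cancel and one is left with $\eta(\zvec_t^{(k)}-\zvec_k^\star)^\top G(\zvec_t^{(k)})+\tfrac{\eta}{\lambda_k}\bigl[H^{\epsilon_k}(\zvec_t^{(k)})-H^{\epsilon_k}(\zvec_k^\star)\bigr]$, which is nonnegative by the equilibrium variational inequality (Lemma~\ref{lem:ne_neq1})---an ingredient your outline omits---plus the cross term $(\zvec_t^{(k)}-\zvec_{t+1/2}^{(k)})^\top\bigl(G(\zvec_t^{(k)})-G(\zvec_{t-1}^{(k)})\bigr)$, which involves only iterates. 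This is controlled not by Young's inequality but by a prox-map stability estimate (Lemma~\ref{lem:gradientlemma}, in the style of~\citet{wei2020linear}): since $\zvec_t^{(k)}$ and $\zvec_{t+1/2}^{(k)}$ are both prox steps from $\zvec_{t-1/2}^{(k)}$ with different linear parts, $\|\zvec_t^{(k)}-\zvec_{t+1/2}^{(k)}\|_2\le\tfrac{\eta\lambda_k}{\eta+\lambda_k}\|G(\zvec_t^{(k)})-G(\zvec_{t-1}^{(k)})\|_2$. The contracting Lyapunov is $D^{\epsilon_k}(\zvec_k^\star|\zvec_{t+1/2}^{(k)})+D^{\epsilon_k}(\zvec_{t+1/2}^{(k)}|\zvec_t^{(k)})$, not $D^{\epsilon_k}(\zvec_k^\star|\zvec_{t+1/2}^{(k)})$ alone, and the final bound on $\|\zvec_k^\star-\zvec_t^{(k)}\|_2$ is obtained by the triangle inequality through $\zvec_{t+1/2}^{(k)}$.
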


We also remark that the technique we employ in the proof of the theorem above can be generalized to achieve last-iterate convergence to equilibria of EFGs for \emph{any} dilated Legendre regularization functions $d_i$.
Indeed, our results do \emph{not} rely on the specific shape of the $d_i$, and the ones we used are specifically tailored only to get convergence to EFPEs.

\subsection{Convergence Analysis}

By exploiting the convergence analysis of Algorithm~\ref{alg:OOMD_sub} (Theorem~\ref{thm:algo_conv}), the following theorem formally proves the convergence guarantees of Algorithm~\ref{alg:OOMD} to (exact) EFPEs.
\begin{restatable}[]{theorem}{anytimeconvtwo}\label{th:anytime_conv2}
	Given any sequence $\{(\lambda_k,\epsilon_k)\}_{k\in\mathbb{N}}$ defined as in Theorem~\ref{th:convergece} satisfying $\eta\le\lambda^2_k\le\eta\beta^k/2$ for every $k \in \mathbb{N}$, Algorithm~\ref{alg:OOMD} grantees that $\lim_{k\to\infty}\zvec_{T_k}^{(k)}$ is an EFPE.
\end{restatable}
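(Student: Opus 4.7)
Write $\zvec_k^\star \coloneqq \zvec_{\lambda_k,\epsilon_k}^\star$. Theorem~\ref{th:convergece} already guarantees that $\zvec_k^\star \to \zvec^\star$ for some EFPE $\zvec^\star$, so by the triangle inequality
\[
\|\zvec_{T_k}^{(k)} - \zvec^\star\|_2 \;\le\; \|\zvec_{T_k}^{(k)} - \zvec_k^\star\|_2 + \|\zvec_k^\star - \zvec^\star\|_2,
\]
it suffices to show that the per-phase optimization error $\|\zvec_{T_k}^{(k)} - \zvec_k^\star\|_2$ vanishes. The plan is to apply Theorem~\ref{thm:algo_conv} at the end of each phase $k$ (with $t=T_k=\beta^k$ and initialization $\zvec_0^{(k)}=\zvec_{T_{k-1}}^{(k-1)}$), obtaining
\[
\|\zvec_k^\star - \zvec_{T_k}^{(k)}\|_2 \;\le\; 2\bigl[2\, D^{\epsilon_k}\bigl(\zvec_k^\star \,\big|\, \zvec_{T_{k-1}}^{(k-1)}\bigr)\bigr]^{1/2} \left(\tfrac{\lambda_k}{\lambda_k + \eta}\right)^{T_k/2},
\]
and then bounding the two factors separately.

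For the contraction factor, I would use the elementary inequality $\log(1+x)\ge x/(1+x)$ to write
\[
\left(\tfrac{\lambda_k}{\lambda_k + \eta}\right)^{T_k/2} \;\le\; \exp\!\left(-\tfrac{\eta\,\beta^k}{2(\lambda_k + \eta)}\right),
\]
and then plug in the hypothesis $\lambda_k^2 \le \eta\beta^k/2$, i.e.\ $\lambda_k \le \sqrt{\eta\beta^k/2}$. This forces the exponent to grow at least as $\Omega(\sqrt{\beta^k})$ for large $k$, producing super-exponential decay in $k$. The lower bound $\lambda_k^2 \ge \eta$ from the hypothesis is then used to ensure $\eta/\lambda_k$ stays controlled throughout the phase, so that the learning-rate condition of Theorem~\ref{thm:algo_conv} is consistent with the analysis.

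For the Bregman-divergence factor, I would exploit the fact that the sub-procedure keeps iterates inside $\X^{\epsilon_{k-1}}\times\Y^{\epsilon_{k-1}}$ by construction, and that $\X^{\epsilon_{k-1}}\subset\X^{\epsilon_k}$ whenever $\epsilon_k < \epsilon_{k-1}$ (which can always be arranged when choosing the sequence). Therefore $\zvec_{T_{k-1}}^{(k-1)}$ sits with a positive margin $\epsilon_{k-1}-\epsilon_k$ away from the boundary of $\X^{\epsilon_k}$, along which the log-based base regularizer $d_{\Delta_I}^{\epsilon_k}$ and its gradient would blow up. A direct bound on $d_i^{\epsilon_k}$ and $\nabla d_i^{\epsilon_k}$ on this inner region then gives $D^{\epsilon_k}(\zvec_k^\star|\zvec_{T_{k-1}}^{(k-1)}) = O(\mathrm{poly}(1/\epsilon_k))$, which by the condition $1/\lambda_k < \epsilon_k^d$ and $\lambda_k\le\sqrt{\eta\beta^k/2}$ is at most $O(\mathrm{poly}(\beta^k))$.

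Combining the two, the super-exponential $\exp(-\Omega(\sqrt{\beta^k}))$ decay dominates any polynomial growth in $\beta^k$, so $\|\zvec_k^\star - \zvec_{T_k}^{(k)}\|_2 \to 0$ and the limit of $\zvec_{T_k}^{(k)}$ coincides with that of $\zvec_k^\star$, which is an EFPE. The main obstacle I anticipate is the Bregman-divergence bound: it demands careful bookkeeping of how close $\zvec_{T_{k-1}}^{(k-1)}$ can get to the moving boundary of $\X^{\epsilon_k}$ between consecutive phases, and an explicit estimate of the gradient of $d_i^{\epsilon_k}$ that scales benignly with the margin $\epsilon_{k-1}-\epsilon_k$ and with $\beta^k$. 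Everything else is a direct combination of Theorems~\ref{th:convergece} and~\ref{thm:algo_conv}.
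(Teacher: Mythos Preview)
Your overall architecture---triangle inequality splitting $\|\zvec_{T_k}^{(k)}-\zvec^\star\|_2$ into the per-phase optimization error and the $\|\zvec_k^\star-\zvec^\star\|_2$ term handled by Theorem~\ref{th:convergece}, then invoking Theorem~\ref{thm:algo_conv} to control the former---is exactly the paper's route. The difference is in how the contraction factor $(\lambda_k/(\lambda_k+\eta))^{T_k/2}$ is bounded. The paper proves a dedicated elementary lemma, namely $(x/(a+x))^{x^2/a}\le 2/x$ for $x\ge a>0$, and applies it with $x=\lambda_k$, $a=\eta$; the hypothesis $\lambda_k^2\le\eta\beta^k/2$ is precisely what makes $T_k/2\ge\lambda_k^2/\eta$, yielding the clean bound $(\lambda_k/(\lambda_k+\eta))^{T_k/2}\le 2/\lambda_k$ and hence $\|\zvec_k^\star-\zvec_{T_k}^{(k)}\|_2=O(1/\lambda_k)$ directly. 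Your route via $\log(1+x)\ge x/(1+x)$ gives the stronger decay $\exp(-\Omega(\beta^{k/2}))$, which is what lets you absorb a polynomially-growing Bregman prefactor.

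On that prefactor: you are right to flag it, and in fact you are being more careful than the paper, which simply writes $\|\zvec_k^\star-\zvec_{T_k}^{(k)}\|_2=O\bigl((\lambda_k/(\lambda_k+\eta))^{T_k/2}\bigr)$ and absorbs $D^{\epsilon_k}(\zvec_k^\star\mid\zvec_0^{(k)})$ into the $O(\cdot)$ without justification (the same factor is left explicit in Corollary~\ref{cor:matrixgapfinal}). Your containment argument $\X^{\epsilon_{k-1}}\subset\X^{\epsilon_k}$ and the margin $\epsilon_{k-1}-\epsilon_k$ is the natural way to make this rigorous; it only needs the sequence $(\epsilon_k)$ to be strictly decreasing, which is harmless. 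So your proposal is correct and arguably tighter on this point; the paper's version is shorter because it trades your exponential-vs-polynomial balancing for a single algebraic lemma, at the cost of leaving the Bregman term unexamined.
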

Moreover, we can exploit the strongly convex-concave structure of the utility functions $f_{\lambda,\epsilon}$ to prove Nash gap guarantees for regularized-perturbed games $\G_{\lambda,\epsilon}$, which can then be combined with Theorem~\ref{thm:connection_efpe} to obtain guarantees on the exploitability of $\zvec_{T_k}^{(k)}$ in the original EFG.
Formally:
\begin{restatable}[]{theorem}{saddelpointgap}\label{thm:saddelpointgap}
	Let $\tilde\zvec \coloneqq(\tilde\xvec,\tilde\yvec)\in\X\times\Y$ be such that it holds $\|\zvec^\star_{\lambda,\epsilon}-\tilde\zvec\|_2\le\nu$ for some $\nu > 0$, then we have:
	\[
	Gap(\tilde\xvec, \tilde\yvec)\le\nu |\Sigma|^{3/2}+{8C}/{\lambda},
	\]
	where $|\Sigma|\coloneqq\max\{|\Sigma_1|,|\Sigma_2|\}$ and $C > 0$ is a constant that depends polynomially in $\max_{i\in\{1,2\}}|\I_i|$, $\max_{I\in\I_1\cup\I_2}\alpha_I$, and $\max_{I\in\I_1\cup\I_2}\log n_I$.
\end{restatable}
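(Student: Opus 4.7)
The plan is to anchor the analysis at the unique NE $\zvec^\star:=\zvec^\star_{\lambda,\epsilon}$ of the regularized-perturbed game and decompose
\[
Gap(\tilde\xvec,\tilde\yvec) \le Gap(\xvec^\star,\yvec^\star) + \bigl|Gap(\tilde\xvec,\tilde\yvec)-Gap(\xvec^\star,\yvec^\star)\bigr|,
\]
recovering the $8C/\lambda$ summand from the first term (via the saddle-point structure of $\G(\lambda,\epsilon)$ and Theorem~\ref{thm:connection_efpe}) and the $\nu|\Sigma|^{3/2}$ summand from the second (via Lipschitz continuity of the bilinear form together with the hypothesis $\|\tilde\zvec-\zvec^\star\|_2\le\nu$).

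For the propagation term I would use $|\max f-\max g|\le\max|f-g|$ together with the analogous inequality for $\min$, applied to the two halves of $Gap$. Each resulting difference reduces to a quantity of the form $|\xvec^\top\Umat(\tilde\yvec-\yvec^\star)|$ or $|(\tilde\xvec-\xvec^\star)^\top\Umat\yvec|$, which Cauchy--Schwarz estimates by $\|\xvec\|_2\,\|\Umat\|_2\,\|\tilde\yvec-\yvec^\star\|_2$. Since sequence-form strategies have entries in $[0,1]$ over at most $|\Sigma|$ coordinates, $\|\xvec\|_2,\|\yvec\|_2\le|\Sigma|^{1/2}$; the entry bound $|\Umat[\sigma_1,\sigma_2]|\le 1$ yields $\|\Umat\|_2\le\|\Umat\|_F\le|\Sigma|$; and the distance assumption covers the third factor. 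Putting the pieces together delivers the claimed $\nu|\Sigma|^{3/2}$ contribution.

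For the main term I would exploit the saddle-point characterization of $\zvec^\star$: from $\xvec^\star\in\argmax_{\xvec\in\X}\{\xvec^\top\Umat\yvec^\star-\tfrac{1}{\lambda}d_1^\epsilon(\xvec)\}$ I get, for every $\xvec\in\X^\epsilon$,
\[
\xvec^\top\Umat\yvec^\star-(\xvec^\star)^\top\Umat\yvec^\star \le \tfrac{1}{\lambda}\bigl(d_1^\epsilon(\xvec)-d_1^\epsilon(\xvec^\star)\bigr),
\]
and symmetrically on the $\yvec$-side. The range of the dilated entropies is controlled by noting that each base function $d_{\Delta_I}^\epsilon$ is a shifted negative entropy on $\Delta_I$, hence bounded in absolute value by $O(\log n_I)$; summing against the weights $\alpha_I$ and the sequence probabilities $\xvec[\sigma_i(I)]\le 1$ over all infosets yields a uniform bound $|d_i^\epsilon|\le C$, polynomial in $\max_i|\I_i|$, $\max_I\alpha_I$, and $\max_I\log n_I$, matching the claimed dependence.

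The main obstacle is that the preceding saddle-point inequality certifies optimality only against $\xvec\in\X^\epsilon$, whereas $Gap$ optimizes over the larger polytope $\X$. I would close this gap by the same uniform-mixture perturbation argument already used in the proof of Theorem~\ref{thm:connection_efpe} to certify $\zvec^\star$ as a $O(1/\lambda)$-approximate $\epsilon$-EFPE: approximating an arbitrary $\xvec\in\X$ by its $\epsilon$-mixture with a strictly interior strategy of $\X^\epsilon$ and showing that the induced utility perturbation is absorbed into the $d_i^\epsilon$-range bound with constants of the same form. Collecting the two-sided and two-player contributions—each at most $2C/\lambda$—yields the final $8C/\lambda$ bound and completes the proof.
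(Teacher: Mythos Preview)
Your argument is correct but uses a different decomposition than the paper. Rather than splitting $Gap(\tilde\zvec)\le Gap(\zvec^\star_{\lambda,\epsilon})+|Gap(\tilde\zvec)-Gap(\zvec^\star_{\lambda,\epsilon})|$ and bounding the two pieces separately, the paper works directly with the \emph{regularized} gap $f_{\lambda,\epsilon}(\xvec,\tilde\yvec)-f_{\lambda,\epsilon}(\tilde\xvec,\yvec)$ for arbitrary $(\xvec,\yvec)$, then adds and subtracts the four quantities $\xvec^\top\Umat\yvec^\star_{\lambda,\epsilon}$, $\xvec^{\star,\top}_{\lambda,\epsilon}\Umat\yvec$, $\tfrac{1}{\lambda}d_1^\epsilon(\xvec^\star_{\lambda,\epsilon})$, $\tfrac{1}{\lambda}d_2^\epsilon(\yvec^\star_{\lambda,\epsilon})$ to obtain three pieces: (i) the bilinear perturbation $\xvec^\top\Umat(\tilde\yvec-\yvec^\star_{\lambda,\epsilon})-(\tilde\xvec-\xvec^\star_{\lambda,\epsilon})^\top\Umat\yvec$, bounded via Cauchy--Schwarz exactly as you propose; (ii) a regularizer difference $\tfrac{1}{\lambda}[d_1^\epsilon(\tilde\xvec)-d_1^\epsilon(\xvec^\star_{\lambda,\epsilon})+d_2^\epsilon(\tilde\yvec)-d_2^\epsilon(\yvec^\star_{\lambda,\epsilon})]\le 4C/\lambda$ by Lemma~\ref{lem:bounded_d}; and (iii) the regularized saddle-point gap $f_{\lambda,\epsilon}(\xvec,\yvec^\star_{\lambda,\epsilon})-f_{\lambda,\epsilon}(\xvec^\star_{\lambda,\epsilon},\yvec)\le 0$. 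A final pass with $|f_{\lambda,\epsilon}(\cdot,\cdot)-(\cdot)^\top\Umat(\cdot)|\le 2C/\lambda$ converts back to the bilinear gap at the cost of another $4C/\lambda$. Both routes reach the same estimate; the paper's has the minor advantage that the saddle-point property is invoked once, in its native regularized form, so $Gap(\xvec^\star_{\lambda,\epsilon},\yvec^\star_{\lambda,\epsilon})$ is never isolated as a standalone quantity.

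One correction to your last paragraph: the ``uniform-mixture perturbation argument already used in the proof of Theorem~\ref{thm:connection_efpe}'' that you invoke does not exist there. That proof never leaves $\X^\epsilon\times\Y^\epsilon$, because a $\delta$-approximate $\epsilon$-EFPE is, by Definition~\ref{def:delta_eps_efpe}, only tested against deviations in $\X^\epsilon\times\Y^\epsilon$. In the paper's proof of the present theorem the $\X$ vs.\ $\X^\epsilon$ issue you raise is not handled by a perturbation step either: it is simply absorbed by quoting Lemma~\ref{lem:bounded_d} (stated for all of $\X\times\Y$) and by using the saddle-point inequality for $f_{\lambda,\epsilon}$ on its full domain. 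So your obstacle is real, but the paper sidesteps it by working at the level of $f_{\lambda,\epsilon}$ rather than patching the bilinear inequality afterwards.
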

As a direct corollary, we can prove the following guarantees in terms of Nash gap for Algorithm~\ref{alg:OOMD}.
\begin{restatable}[]{corollary}{matrixgapfinal}\label{cor:matrixgapfinal}
	Given any sequence $\{(\lambda_k,\epsilon_k)\}_{k\in\mathbb{N}}$ as in Theorem~\ref{th:anytime_conv2}, at the end of every phase $k \in \mathbb{N}$ of Algorithm~\ref{alg:OOMD}:
	\[
		Gap\left( \xvec^{(k)}_{T_k},\yvec^{(k)}_{T_k} \right)\le\frac{4}{\lambda_k}|\Sigma|^{\frac{3}{2}}\left[2D^{\epsilon_k}(\zvec_{k}^\star|\zvec_0^{(k)})\right]^{\frac{1}{2}} +\frac{8C}{\lambda_k},
	\]
	where $\Sigma$ and $C$ are as in Theorem~\ref{th:anytime_conv2}.
\end{restatable}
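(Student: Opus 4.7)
The plan is to combine the last-iterate convergence guarantee of Algorithm~\ref{alg:OOMD_sub} (Theorem~\ref{thm:algo_conv}) with the Nash-gap bound of Theorem~\ref{thm:saddelpointgap}, with the only nontrivial step being to verify that the linear convergence rate provided by Theorem~\ref{thm:algo_conv}, when compounded over $T_k = \beta^k$ iterations, becomes small enough to dominate the $1/\lambda_k$ scaling of the gap bound.

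First, I would apply Theorem~\ref{thm:algo_conv} at the end of phase $k$, namely at round $t=T_k$, to conclude that
\[
\|\zvec_k^\star-\zvec^{(k)}_{T_k}\|_2 \;\le\; 2\bigl[2D^{\epsilon_k}(\zvec_k^\star|\zvec^{(k)}_0)\bigr]^{1/2}\left(\frac{\lambda_k}{\lambda_k+\eta}\right)^{T_k/2}.
\]
Then I would invoke Theorem~\ref{thm:saddelpointgap} on the point $\tilde\zvec \coloneqq \zvec^{(k)}_{T_k}$ with $\nu$ set equal to the right-hand side above, obtaining
\[
Gap(\xvec^{(k)}_{T_k},\yvec^{(k)}_{T_k})\;\le\; 2\bigl[2D^{\epsilon_k}(\zvec_k^\star|\zvec^{(k)}_0)\bigr]^{1/2}\left(\frac{\lambda_k}{\lambda_k+\eta}\right)^{T_k/2}|\Sigma|^{3/2}+\frac{8C}{\lambda_k}.
\]
To recover the stated form, it remains to show that $\left(\lambda_k/(\lambda_k+\eta)\right)^{T_k/2} \le 2/\lambda_k$, which would yield the $\tfrac{4}{\lambda_k}$ prefactor in the first term.

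The main (and really only) technical step is this rate estimate, which follows from the hypothesis $\eta\le\lambda_k^2\le\eta\beta^k/2$ together with $T_k=\beta^k$. Using the elementary inequality $\ln(1+r)\ge r/2$ for $r\in[0,1]$ (which applies since $r=\eta/\lambda_k\le 1$ once $\lambda_k\ge \eta$, guaranteed by $\eta\le\lambda_k^2$ up to rescaling), one gets
\[
\left(\frac{\lambda_k}{\lambda_k+\eta}\right)^{T_k/2}=\exp\!\left(-\frac{T_k}{2}\ln\!\left(1+\frac{\eta}{\lambda_k}\right)\right)\le \exp\!\left(-\frac{T_k\,\eta}{4\lambda_k}\right).
\]
Substituting the lower bound $T_k\ge 2\lambda_k^2/\eta$ coming from $\lambda_k^2\le\eta\beta^k/2$, the exponent is at most $-\lambda_k/2$, which in turn is dominated by $2/\lambda_k$ for all $\lambda_k$ past a fixed constant (using $e^{-x}\le 1/x$ for $x\ge 1$). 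This is exactly what is needed.

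The hard part is really just making the constants in this last estimate line up cleanly with the $\tfrac{4}{\lambda_k}$ appearing in the statement; no deeper obstacle is present since both Theorem~\ref{thm:algo_conv} and Theorem~\ref{thm:saddelpointgap} already do the heavy lifting. Putting the two inequalities together produces the claimed bound
\[
Gap\!\left(\xvec^{(k)}_{T_k},\yvec^{(k)}_{T_k}\right)\;\le\;\frac{4}{\lambda_k}|\Sigma|^{\frac{3}{2}}\bigl[2D^{\epsilon_k}(\zvec_k^\star|\zvec^{(k)}_0)\bigr]^{\frac{1}{2}}+\frac{8C}{\lambda_k},
\]
completing the proof of the corollary.
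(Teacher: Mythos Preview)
Your proposal is correct and follows essentially the same approach as the paper: apply Theorem~\ref{thm:algo_conv} at $t=T_k$, feed the resulting $\nu$ into Theorem~\ref{thm:saddelpointgap}, and then show the contraction factor $\bigl(\lambda_k/(\lambda_k+\eta)\bigr)^{T_k/2}$ is at most $2/\lambda_k$ using $T_k\ge 2\lambda_k^2/\eta$. The only cosmetic difference is that the paper isolates this last rate estimate as a standalone lemma (Lemma~\ref{lem:inequality}, which directly shows $(x/(a+x))^{x^2/a}\le 2/x$ for $x\ge a>0$), whereas you prove it inline via $\ln(1+r)\ge r/2$ and $e^{-x}\le 1/x$; both arguments require $\lambda_k\ge\eta$, which is already assumed in the proof of Theorem~\ref{thm:algo_conv}.
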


The result above justifies the need of the exponential growth of the number of rounds $T_k$.
Indeed, it shows that, after $T=\sum_{j=1}^k\beta^j=O(\beta^k)$ rounds, one gets a Nash gap of the order of $O(1/\lambda_k)$.
Thus, by choosing $\lambda_k=O(\beta^{k/2})$ (which is the largest order of $\lambda_k$ allowed by Corollary~\ref{cor:matrixgapfinal}), we have a bound of $O(1/\sqrt{T})$ on the Nash gap.

Notice that Theorem~\ref{th:convergece} shows asymptotic convergence in terms of $\ell_2$-distance to an EFPE, while Corollary~\ref{cor:matrixgapfinal} shows convergence in terms of Nash gap.
Determining whether it is possible to show finite convergence rates also for the $\ell_2$-distance to the set of EFPEs requires additional research.
The difficulty of such an inquiry is that, even in (non-perturbed) normal-form games, our problem reduces to the one of finding convergence rates to limit-quantal equilibria, which is still an open problem.

\subsection{Efficient Decomposition}

The pseudo-code formulation of Algorithm~\ref{alg:OOMD_sub} does \emph{not} allow it to scale efficiently, since at each iteration it requires the solution of four convex optimization problems defined over the whole sequence-form strategy set $\X$ or $\Y$.

Next, we show that each iteration of Algorithm~\ref{alg:OOMD_sub} can indeed be implemented \emph{efficiently} by directly working on the game tree.
In particular, we show how to implement each update in Algorithm~\ref{alg:OOMD_sub} so that: (i) it is performed recursively by means of a bottom-up visit of one player's infosets; and (ii) at every visited infoset, it only requires the application of a \emph{local} update that is done by applying a closed-form formula.

In order to do that, we exploit a technique that has been originally introduced in~\citep{hoda2010smoothing, farina2021better, lee2021last}.
By focusing on the first player (analogous facts hold for the second player), we have that if the following two facts hold:
\begin{enumerate}
	\item the update rule can be expressed as a \emph{proximal gradient update}, \emph{i.e.}, as the problem of computing the conjugate gradient $\nabla d_1^{\epsilon,*}(\tilde\gvec) \coloneqq \arg\max_{\xvec\in\X}\left\{\xvec^\top\tilde\gvec-d_1^\epsilon(\xvec)\right\}$ of the dilated function $d_1^{\epsilon}$ for some $\tilde{\gvec} \in \mathbb{R}^{|\Sigma_1|}$, and
	\item for all $I\in\I_1$ and $\tilde\gvec\in\mathbb{R}^{n_I}$, the \emph{local conjugate gradient} $\nabla d_{\Delta_I}^{\epsilon,*}(\tilde\gvec)\coloneqq \arg\max_{\wvec\in\Delta_{I}}\left\{\wvec^\top\tilde\gvec-d_1^\epsilon(\wvec)\right\}$ and the \emph{local gradient} $\nabla d_{\Delta_I}^\epsilon(\tilde\gvec)$ have a closed-form solution,
\end{enumerate}
then the overall updated can be computed efficiently in terms of requirements (i) and (ii) described above.

However, Algorithm~\ref{alg:OOMD_sub} employs update rules that, a priori, are differently to those studied in~\citep{hoda2010smoothing} and its follow ups.
Thankfully, the following theorem show that the approach described above can still be employed on the update rules of Algorithm~\ref{alg:OOMD_sub}.
Formally:
\begin{restatable}[]{theorem}{efficencyone}\label{thm:efficency_1}
	\textbf{(i)} The updates of Algorithm~\ref{alg:OOMD_sub} of the form
	\begin{align*}
		\arg\max\limits_{\xvec\in\X}\left\{\xvec^\top\gvec-\frac{d_1^\epsilon(\xvec)}{\lambda}-\frac{1}{\eta}D_1^\epsilon(\xvec|\tilde{\xvec})\right\}
	\end{align*}
	for some vectors $\gvec, \tilde{\xvec}\in \mathbb{R}^{|\Sigma_1|}$ and $\epsilon >0$ can be formulated as the computation of the conjugate gradient $\nabla d_1^{\epsilon,*}(\tilde\gvec)$ for a suitably-defined, efficiently-computable vector $\tilde\gvec \in \mathbb{R}^{|\Sigma_1|}$.

	\textbf{(ii)} For every $\epsilon > 0$, $I \in \I_1$, and $\tilde\gvec\in\mathbb{R}^{n_I}$, it holds:
	\[
	\nabla d_{\Delta_I}^{\epsilon,*}(\tilde\gvec)[a]=(1-\epsilon \, n_I)\frac{e^{\tilde\gvec[a]}}{\sum_{b\in A(I)} e^{\tilde\gvec[b]}}+\epsilon \quad \forall a \in A(I).
	\]
	Moreover, the local gradient $\nabla d_{\Delta_I}^\epsilon(\tilde\gvec)$ can be computed as $\nabla d_{\Delta_I}^{\epsilon}(\tilde\gvec)[a]=1+\log(\tilde\gvec[a]-\epsilon)$ for all $a\in A(I)$.

\end{restatable}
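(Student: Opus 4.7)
The plan is to prove parts (i) and (ii) separately, and then remark briefly that together with the dilated decomposition of $d_1^\epsilon$ they yield the desired efficient, tree-walk implementation. For part (i), I would start by expanding the Bregman divergence $D_1^\epsilon(\xvec|\tilde\xvec)=d_1^\epsilon(\xvec)-d_1^\epsilon(\tilde\xvec)-\nabla d_1^\epsilon(\tilde\xvec)^\top(\xvec-\tilde\xvec)$ inside the maximization. After dropping terms constant in $\xvec$, the objective collapses to
\[
\xvec^\top\!\left(\gvec+\tfrac{1}{\eta}\nabla d_1^\epsilon(\tilde\xvec)\right)-\left(\tfrac{1}{\lambda}+\tfrac{1}{\eta}\right)d_1^\epsilon(\xvec).
\]
Dividing by the positive scalar $c\coloneqq \tfrac{1}{\lambda}+\tfrac{1}{\eta}$ does not alter the argmax, so the update equals $\nabla d_1^{\epsilon,*}(\tilde\gvec)$ with
\[
\tilde\gvec=\tfrac{\lambda\eta}{\lambda+\eta}\,\gvec+\tfrac{\lambda}{\lambda+\eta}\,\nabla d_1^\epsilon(\tilde\xvec).
\]
To argue that $\tilde\gvec$ is efficiently computable, I would invoke the recursive, dilated structure of $d_1^\epsilon$: its gradient admits a standard bottom-up evaluation over the game tree that at each infoset $I\in\I_1$ only invokes the local gradient $\nabla d_{\Delta_I}^{\epsilon}$ (cf.~\citep{hoda2010smoothing,lee2021last}), whose closed form is the object of part (ii).

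For part (ii), the local gradient formula $\nabla d_{\Delta_I}^{\epsilon}(\tilde\gvec)[a]=1+\log(\tilde\gvec[a]-\epsilon)$ follows by componentwise differentiation of $d_{\Delta_I}^\epsilon(\wvec)=\sum_{a}(\wvec[a]-\epsilon)\log(\wvec[a]-\epsilon)$, using $\tfrac{d}{dw}[(w-\epsilon)\log(w-\epsilon)]=\log(w-\epsilon)+1$. For the local conjugate gradient, I would form the Lagrangian of the strictly convex problem $\max_{\wvec\in\Delta_{n_I}}\{\wvec^\top\tilde\gvec-d_{\Delta_I}^\epsilon(\wvec)\}$ with multiplier $\mu$ for $\onevec^\top\wvec=1$, and write the first-order conditions $\tilde\gvec[a]-1-\log(\wvec[a]-\epsilon)-\mu=0$, which yield $\wvec[a]=\epsilon+e^{\tilde\gvec[a]-1-\mu}$. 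Imposing $\sum_{a}\wvec[a]=1$ then gives $e^{-1-\mu}=(1-\epsilon n_I)/\sum_{b\in A(I)}e^{\tilde\gvec[b]}$, and back-substitution produces exactly the claimed closed form. Since we assume $\epsilon\le 1/(2n_I)$, the computed $\wvec[a]$ is bounded below by $\epsilon>0$, so the implicit nonnegativity constraints of the simplex are inactive and the KKT solution is the unique global maximizer.

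I do not anticipate a genuine conceptual obstacle: the bulk of the work is algebraic bookkeeping to verify that, after expanding the Bregman term, the regularization, divergence, and linear parts assemble into a single conjugate-gradient computation. The subtlest point is ensuring the intermediate quantity $\tilde\gvec$ is finite and the simplex solution is interior, both of which follow from the standing hypothesis on $\epsilon$ and the finiteness of $\nabla d_1^\epsilon(\tilde\xvec)$ at any strictly feasible $\tilde\xvec\in\X^\epsilon$. Combining (i) and (ii) with the dilated decomposition of $d_1^\epsilon$ then shows that each update of Algorithm~\ref{alg:OOMD_sub} reduces to one bottom-up visit of the first player's infosets, applying at each infoset only the closed-form local formulas of part (ii); the argument for the second player is entirely symmetric.
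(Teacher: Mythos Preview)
Your proposal is correct and, for part (i), is essentially identical to the paper's proof: both expand the Bregman divergence, drop constants, and rescale by $\gamma\coloneqq(\tfrac{1}{\lambda}+\tfrac{1}{\eta})^{-1}=\tfrac{\lambda\eta}{\lambda+\eta}$ to obtain $\tilde\gvec=\gamma\gvec+\tfrac{\gamma}{\eta}\nabla d_1^\epsilon(\tilde\xvec)$, then appeal to the dilated decomposition and the local formulas of (ii) for efficiency.

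For part (ii) your derivation of the local conjugate gradient differs slightly from the paper's. The paper proceeds by two affine changes of variable, $\wvec\mapsto\wvec-\onevec\epsilon$ and then $\wvec\mapsto\wvec/(1-\epsilon n_I)$, to reduce $\nabla d_{\Delta_I}^{\epsilon,*}$ to the standard negative-entropy conjugate $\nabla d_{\Delta_I}^{0,*}$ on $\Delta_I$ and invoke the known softmax formula; you instead write the KKT conditions directly and solve for the multiplier. Both routes are elementary and yield the same closed form; your version has the mild advantage of making the interiority check (inactivity of the $\wvec\ge 0$ constraints under $\epsilon\le 1/(2n_I)$) explicit, whereas the paper's change of variables leaves this implicit. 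The local gradient formula is obtained identically in both.
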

\section{Experimental Evaluation}\label{sec:exper}

We conclude the paper by experimentally evaluating our algorithm (Algorithm~\ref{alg:OOMD}) on a standard testbed of EFGs.
In particular, we consider two simplified versions of Poker, called \emph{Kuhn}~\citep{kuhn1950simplified} and \emph{Leduc}~\citep{southey2005bayes}, and a two-player card game called \emph{Goofspiel}~\cite{ross1971goofspiel}.\footnote{See Appendix~\ref{app:3} for a description of the games.}

We evaluate our algorithm in terms of two metrics. 
The first one is the \emph{Nash gap}, which is the standard metric employed to evaluate NE approximation.
The second metric is the \emph{average infoset regret}, called $R_I$ for short.
Given a pair of players' strategies, this is defined as the average of the regrets that such strategies incur at all the infosets of the game.
Specifically, the regret at an infoset is computed by assuming that such infoset is reached with probability one and by applying the Bayes' rule to get a distribution encoding the probability of reaching each node in the infoset.
Then, the regret is defined as the increase in utility from best-responding at that infoset and at all the subsequent ones.

We compare Algorithm~\ref{alg:OOMD} with two baseline algorithms for computing NEs, namely the CFR algorithm~\citep{zinkevich2007regret} and the OOMD algorithm instantiated for EFG strategy spaces, as in~\citet{farina2019optimistic}.
We run Algorithm~\ref{alg:OOMD} with $\beta=1.001$, $\eta=2$, $\epsilon_k=0.9999^k$, and $\lambda_k=1/\epsilon_k^2$.
Moreover, we also compare the performances of Algorithm~\ref{alg:OOMD} with those obtained by the same algorithm while keeping the value of $\epsilon$ unchanged across its execution and eliminating regularization (\emph{i.e.}, $1/\lambda=0$).
%
In particular, we test $\epsilon = 0.01$ and $\epsilon = 0.001$.
Since these baselines can also be viewed as the OOMD algorithm obtained by dilation of regularizer $d_{\Delta_I}^\epsilon$, we name them OOMD($\epsilon$).

Figure~\ref{fig:exp} shows some results of our experimental evaluation (additional ones are in Appendix~\ref{app:3}).\footnote{For Algorithm~\ref{alg:OOMD}, OOMD and OOMD$(\epsilon)$ we considered the last-iterates, while for CFR we considered the time average.}
As it is clear from the plots, Algorithm~\ref{alg:OOMD} outperforms of orders of magnitude the others in terms of average infoset regret.
This was expected, since our algorithm is specifically tailored for converging to an EFPE, while the others only guarantee convergence to an NE or only find $\epsilon$-EFPEs.
Notice that Algorithm~\ref{alg:OOMD} also out-competes its variations that keep the value of $\epsilon$ unchanged, showing the importance of tuning regularization and perturbation terms jointly in order to converge to exact EFPEs.
Indeed, every method that finds approximate equilibria of the perturbed game provably fails in getting both zero Nash gap and zero average infoset regret.

In terms of Nash gap, Algorithm~\ref{alg:OOMD} beats the baselines in \emph{Kuhn}, it is (almost) matched by the CFR algorithm in \emph{Leduc}, while it is outperformed by both the CFR and OOMD algorithms in \emph{Goofspiel}.
This is \emph{not} surprising, since such algorithms are designed with the only (easier) objective of converging to an NE, and, thus, they perform better than Algorithm~\ref{alg:OOMD} in doing so on certain games.
Indeed, the fact that Algorithm~\ref{alg:OOMD} is able to compete or even beat both the CFR and the OOMD algorithms on some game instances (see \emph{Kuhn} and \emph{Leduc}) was unexpected.
This shows that the techniques that we employ not only do converge last-iterate to EFPEs but in also have the potential of leading to the design of algorithms that are superior to current state-of-the-art equilibrium-computation algorithms.

\begin{figure}[t]
	\centering
	\begin{subfigure}{.49\columnwidth}
		\centering
		\includegraphics[width=\linewidth]{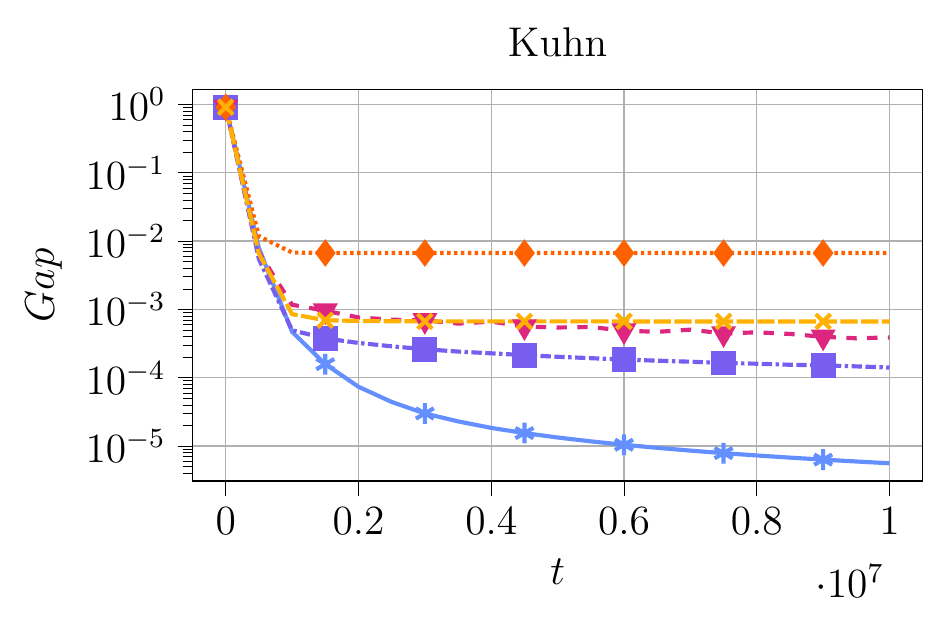}
	\end{subfigure}%
	\hfill
	\begin{subfigure}{.49\columnwidth}
		\centering
		\includegraphics[width=\linewidth]{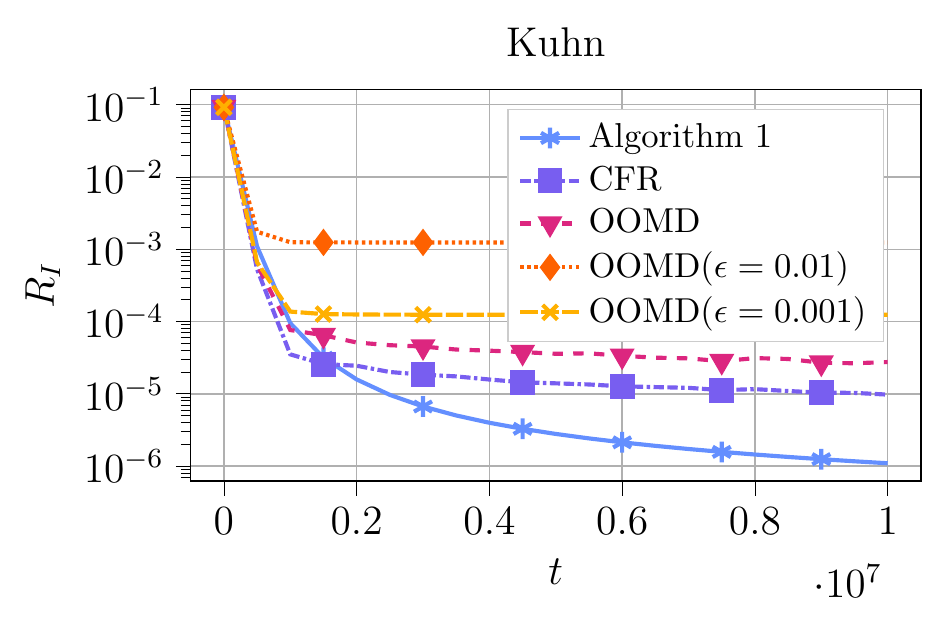}
	\end{subfigure}

	\begin{subfigure}{.49\columnwidth}
		\centering
		\includegraphics[width=\linewidth]{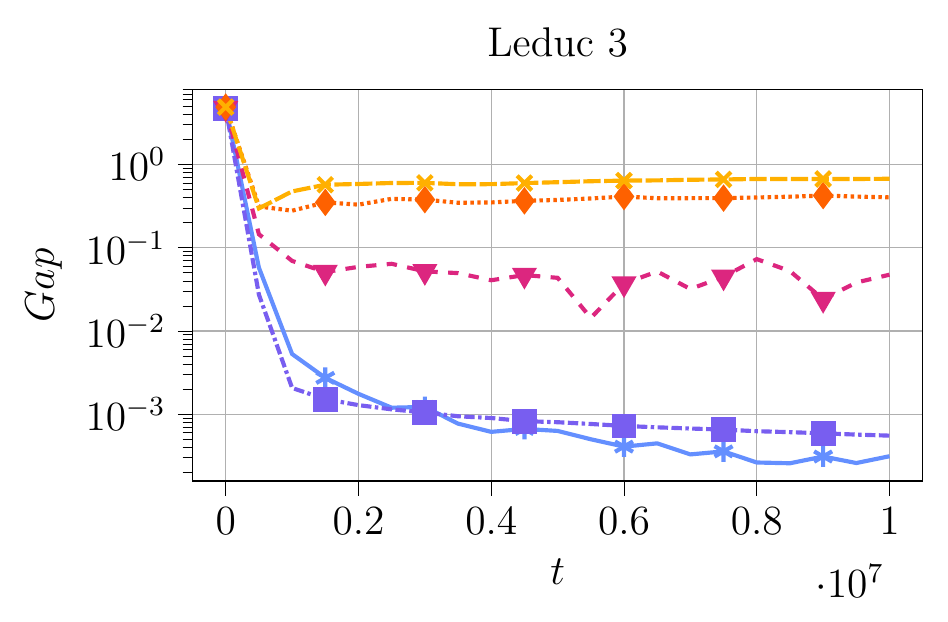}
	\end{subfigure}%
	\hfill
	\begin{subfigure}{.49\columnwidth}
		\centering
		\includegraphics[width=\linewidth]{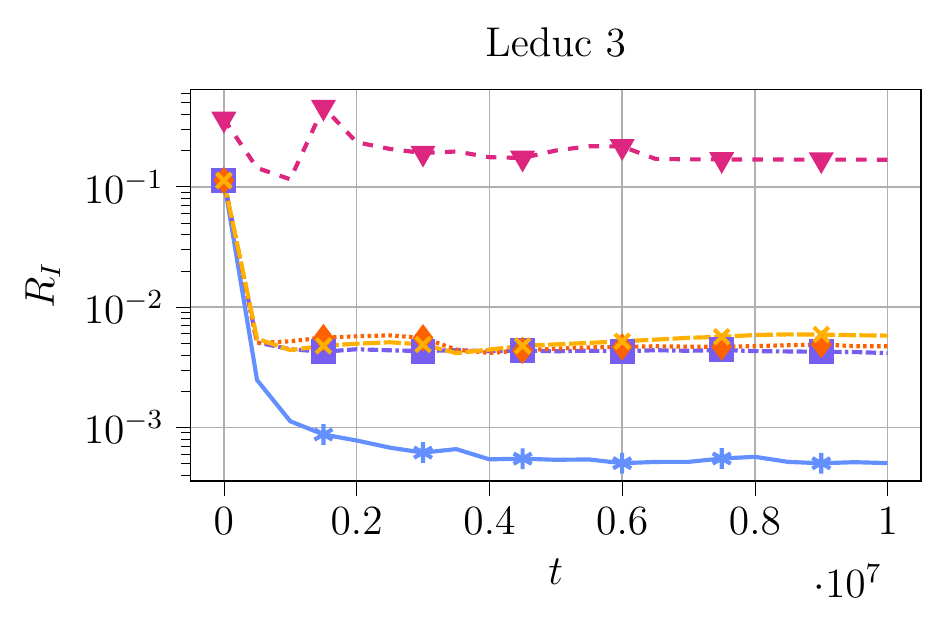}
	\end{subfigure}

	\begin{subfigure}{.49\columnwidth}
		\centering
		\includegraphics[width=\linewidth]{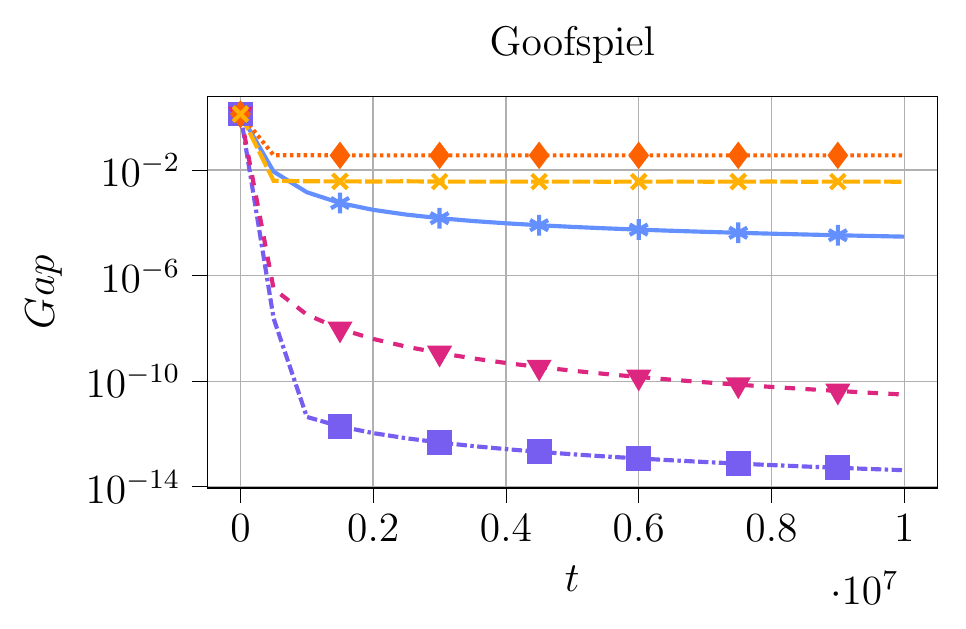}
	\end{subfigure}%
	\hfill
	\begin{subfigure}{.49\columnwidth}
		\centering
		\includegraphics[width=\linewidth]{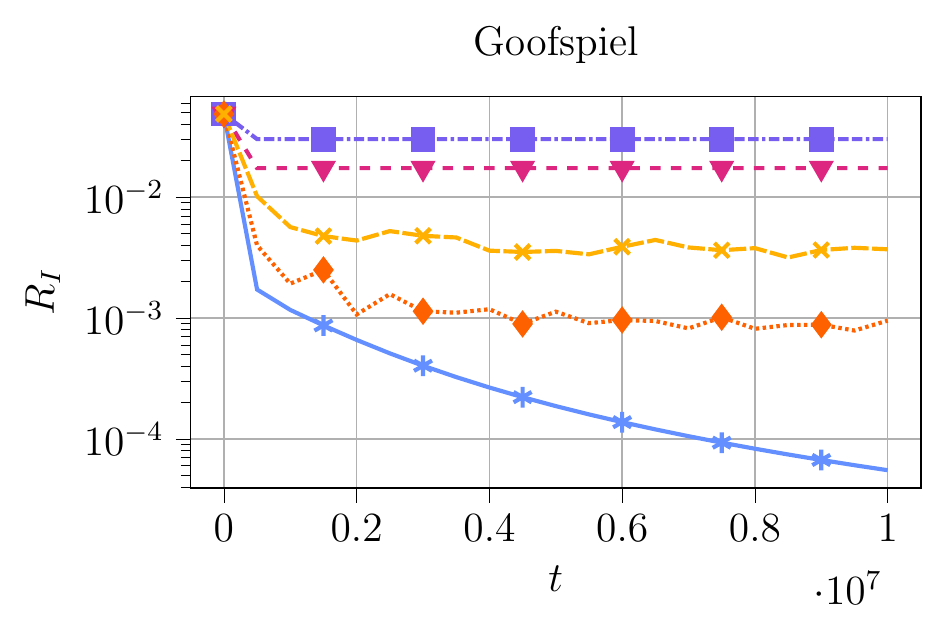}
	\end{subfigure}
	\caption{Results of the experimental evaluation. Algorithm~\ref{alg:OOMD} is compared with the baselines in terms of Nash gap (\emph{Left}) and average infoset regret (\emph{Right}).
}
	\label{fig:exp}
\end{figure}

\clearpage

\bibliographystyle{named}
\bibliography{bibliography}

\onecolumn
\appendix
\clearpage

\section*{Appendix Index}

The appendix is structured as follows:

\begin{itemize}
	\item Appendix~\ref{app:1} presents the proofs omitted from Section~\ref{sec:methods}.
	\item Appendix~\ref{app:2} provides the proofs and additional details omitted from Section~\ref{sec:algo} in the construction of Algorithm~\ref{alg:OOMD}.
	\item Appendix~\ref{app:3} provides additional experiments details and results.
\end{itemize}

\section{Proofs omitted from Section~\ref{sec:methods}}\label{app:1}

\begin{lemma}\label{lem:strong_convexity2}
	For every $I \in \I_1 \cup \I_2$ and $\epsilon >0$, the function $d_{\Delta_I}^\epsilon$ is $1$-strongly convex with respect to the Euclidean norm.
\end{lemma}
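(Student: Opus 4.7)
The plan is to verify strong convexity directly from the Hessian, exploiting the fact that each coordinate of a simplex point is at most $1$.

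First, I would compute the gradient and Hessian of $d_{\Delta_I}^{\epsilon}$ coordinatewise. Since $d_{\Delta_I}^{\epsilon}(\wvec) = \sum_{a \in A(I)} (\wvec[a]-\epsilon) \log(\wvec[a]-\epsilon)$ is separable and each summand is the standard $x \mapsto x\log x$ evaluated at $\wvec[a]-\epsilon$, I obtain
\[
\nabla d_{\Delta_I}^{\epsilon}(\wvec)[a] = \log(\wvec[a]-\epsilon)+1, \qquad
\nabla^2 d_{\Delta_I}^{\epsilon}(\wvec) = \operatorname{diag}\!\left(\tfrac{1}{\wvec[a]-\epsilon}\right)_{a\in A(I)}.
\]
Note that these are well defined on the effective domain of $d_{\Delta_I}^{\epsilon}$, namely on $\{\wvec \in \Delta_I : \wvec[a] \ge \epsilon \text{ for all } a\}$, which is exactly where the perturbed strategies live (and which is non-empty under the footnote's assumption that $\epsilon \le 1/(2 n_I)$).

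The key observation is then the following: since $\wvec \in \Delta_I$ forces $\wvec[a] \le 1$, we have $\wvec[a]-\epsilon \le 1-\epsilon < 1$, so $\tfrac{1}{\wvec[a]-\epsilon} \ge 1$ for every $a \in A(I)$. Consequently, as a symmetric matrix,
\[
\nabla^2 d_{\Delta_I}^{\epsilon}(\wvec) \succeq \Imat,
\]
and therefore $\vvec^\top \nabla^2 d_{\Delta_I}^{\epsilon}(\wvec) \vvec \ge \|\vvec\|_2^{\,2}$ for every $\vvec \in \Reals^{n_I}$ (no need to restrict to the tangent space of the simplex).

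The last step is to translate the Hessian bound into the usual strong-convexity inequality. Using the integral form of Taylor's theorem along the segment connecting any two $\wvec, \wvec' \in \operatorname{dom}(d_{\Delta_I}^{\epsilon})$ (which is convex, so the segment stays in the domain), the uniform bound $\nabla^2 d_{\Delta_I}^{\epsilon} \succeq \Imat$ yields
\[
d_{\Delta_I}^{\epsilon}(\wvec') \ge d_{\Delta_I}^{\epsilon}(\wvec) + \nabla d_{\Delta_I}^{\epsilon}(\wvec)^\top (\wvec'-\wvec) + \tfrac{1}{2}\|\wvec'-\wvec\|_2^{\,2},
\]
which is the claimed $1$-strong convexity in the Euclidean norm. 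There is no real obstacle here: the only mild subtlety is ensuring that the Hessian computation is carried out on the open effective domain and then extended to its closure by continuity, but this is routine since $x\log x$ is continuous up to $x = 0$ with the convention $0\log 0 = 0$.
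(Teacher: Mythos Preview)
Your proof is correct and follows essentially the same approach as the paper: compute the diagonal Hessian $\operatorname{diag}\!\bigl(1/(\wvec[a]-\epsilon)\bigr)$ and observe that each entry is at least $1$ because $\wvec[a]\le 1$, so $\nabla^2 d_{\Delta_I}^\epsilon(\wvec)\succeq \Imat$. The paper's proof is just a terser version of yours, omitting the explicit justification that $\wvec[a]-\epsilon<1$ and the Taylor step.
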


\begin{proof}
	For every $ \wvec\in \Delta_{I}$, we have that:
	\[
	\frac{\partial^2 d_{\Delta_I}^\epsilon(\wvec)}{\partial\wvec[i]\partial \wvec[j]}=
	\begin{cases}
		\frac{1}{\wvec[i]-\epsilon}&\text{if}\quad i=j\\
		0&\text{otherwise}
	\end{cases}.
	\]
	Thus, $\nabla^2 d_{\Delta_I}^\epsilon(\wvec)\ge \,\boldsymbol{I}_{n_I}$, where $\boldsymbol{I}_n$ denotes the $n$-dimensional identity matrix. This completes the proof.
\end{proof}

\begin{lemma}\label{lem:strongconvexity}
	For any $\epsilon > 0$, the functions $d_1^{\epsilon}$ and $d_2^{\epsilon}$ are $1$-strongly convex with respect to the Euclidean norm.
\end{lemma}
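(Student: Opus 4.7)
The plan is to establish $1$-strong convexity of $d_1^\epsilon$ in the Euclidean norm by induction on the infoset tree, from the leaves up to the root (the argument for $d_2^\epsilon$ is identical up to relabeling). For each $I\in\I_1$ I would introduce the subtree regularizer
$$\phi_I^\epsilon(\xvec)\coloneqq \alpha_I\,\xvec[\sigma_1(I)]\,d_{\Delta_I}^\epsilon\!\left(\frac{\xvec[I]}{\xvec[\sigma_1(I)]}\right)+\sum_{a\in A(I)}\sum_{J\in C_{I,a}}\phi_J^\epsilon(\xvec),$$
so that $d_1^\epsilon(\xvec)=\sum_{I:\sigma_1(I)=\varnothing}\phi_I^\epsilon(\xvec)$. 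The recursive choice $\alpha_I=2+2\max_{a\in A(I)}\sum_{J\in C_{I,a}}\alpha_J$ is engineered so that the additive ``$2$'' absorbs the $1$-strong convexity contribution at $I$ coming from Lemma~\ref{lem:strong_convexity2}, while the ``$2\max_a\sum_J\alpha_J$'' buffer compensates for the coupling between the local term at $I$ and the subtree terms $\phi_J^\epsilon$ for the children $J\in C_{I,a}$.

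The inductive hypothesis will state that, on the coordinates $\xvec[\sigma]$ indexed by sequences in the subtree rooted at $I$, the Hessian $\nabla^2\phi_I^\epsilon(\xvec)$ dominates the identity in a quantitative sense governed by $\alpha_I$, and that this bound aggregates to the identity once we sum over the root-level infosets. The base case, where $I$ has no descendant infosets, amounts to differentiating $\alpha_I\,\xvec[\sigma_1(I)]\,d_{\Delta_I}^\epsilon(\xvec[I]/\xvec[\sigma_1(I)])$ twice and invoking Lemma~\ref{lem:strong_convexity2}, using the sequence-form identity $\xvec[\sigma_1(I)]=\sum_{a\in A(I)}\xvec[\sigma_1(I)a]$ built into $\X$ to convert the perspective Hessian of a $1$-strongly convex function into a genuine strongly convex contribution on the joint coordinates.

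For the inductive step, the Hessian of $\phi_I^\epsilon$ splits into the local perspective Hessian at $I$ and the already-controlled Hessians of the $\phi_J^\epsilon$ for $J\in C_{I,a}$. The only subtlety is that these blocks share the coordinates indexed by the sequences $\sigma_1(I)a=\sigma_1(J)$, so the matrices are not block-diagonal; here the recursive choice of $\alpha_I$ dominates any negative off-diagonal interaction when the combined Hessian is rewritten in a Schur-complement form along those shared coordinates. Summing $\phi_I^\epsilon$ over the top-level infosets then delivers $1$-strong convexity of $d_1^\epsilon$ on $\X$.

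The hard part will be making the perspective/Schur-complement bookkeeping precise, since the perspective of a $1$-strongly convex function is only convex (it is flat along its homogeneity cone). The argument therefore hinges on the sequence-form linear constraints, which collapse the would-be flat direction at each infoset into a sum of children coordinates, letting the per-infoset strong convexity from Lemma~\ref{lem:strong_convexity2} propagate up the tree at the cost of the growing weights $\alpha_I$.
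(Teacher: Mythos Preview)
Your plan is sound and would work, but it is a substantially heavier argument than what the paper actually does. The paper's own proof is a one-liner: it observes that each local $d_{\Delta_I}^\epsilon$ is $1$-strongly convex in $\ell_2$ (Lemma~\ref{lem:strong_convexity2}) and then invokes \citet[Corollary~1]{farina2019optimistic}, a black-box result stating that any dilated regularizer built from $1$-strongly convex simplex regularizers with weights $\alpha_I = 2 + 2\max_{a}\sum_{J\in C_{I,a}}\alpha_J$ is $1$-strongly convex on the sequence-form polytope. No induction, no Hessian bookkeeping, no Schur complements appear in the paper.

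What you are proposing is, in effect, to reprove that cited corollary from scratch. Your outline matches the standard argument in the dilated-DGF literature (the tree induction, the perspective-function flat direction, the role of the sequence-form constraints in killing that direction, and the recursive weight formula absorbing the cross terms), so the strategy is correct. The trade-off is clear: the paper's route is short and modular but opaque about \emph{why} the particular recursion for $\alpha_I$ is the right one; your route is self-contained and explains the mechanism, at the cost of several pages of Hessian and Schur-complement calculations that the paper deliberately outsources. If you carry it out, be careful that strong convexity is asserted only over $\X$ (equivalently, over the tangent space cut out by the sequence-form equalities), since, as you correctly note, the unconstrained perspective term has a degenerate direction.
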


\begin{proof}
	This follows from Lemma~\ref{lem:strong_convexity2}, the expression of the weights $\alpha_I$ in $d_{\Delta_I}^\epsilon$ in the definitions of $d_1^\epsilon(\cdot)$ and $d_2^\epsilon(\cdot)$, and by using~\citet[Corollary~1]{farina2019optimistic}.
\end{proof}

\begin{lemma}\label{lem:bounded_d}
	We have that $|d_1^{\epsilon}(\xvec)|\le C$ and $|d_2^{\epsilon}(\yvec)|\le C$ for all $\xvec,\yvec\in\X\times\Y$ and any $\epsilon\le \min\limits_{I\in\I_1\cup\I_2}\frac{1}{2n_I}$.
	Moreover $C\le\|\boldsymbol{\alpha}\|_\infty\cdot \max\limits_{i\in\{1,2\}}|I_i|\max\limits_{I\in \I_1\cup\I_2}\log(2n_I)$, where $\boldsymbol{\alpha}\in\mathbb{R}^{|\I_1|+|\I_2|}$ is a vector that contains all the components $\alpha_I$ with $I\in\I_1\cup\I_2$.
\end{lemma}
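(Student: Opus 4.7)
The plan is to reduce the bound on the dilated regularizer to a uniform bound on the base functions $d_{\Delta_I}^\epsilon$ and then lift back through the recursive definition. Since $d_1^\epsilon$ and $d_2^\epsilon$ are defined symmetrically, I will argue only for $d_1^\epsilon$; the argument for $d_2^\epsilon$ is identical.

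First I would prove the pointwise bound $|d_{\Delta_I}^\epsilon(\wvec)| \le \log(2 n_I)$ for every $\wvec$ in the effective simplex domain (i.e.\ $\wvec \in \Delta^{n_I}$ with $\wvec \ge \onevec \epsilon$ componentwise, which is the relevant domain since elsewhere the factor $\xvec[\sigma_1(I)]$ in the dilated sum kills the ill-defined base term). I reparametrize by setting $s \coloneqq 1 - n_I \epsilon$ and $p_a \coloneqq (w_a - \epsilon)/s$, so that $\pvec \in \Delta^{n_I}$. A direct computation gives
\[
d_{\Delta_I}^\epsilon(\wvec) \;=\; \sum_{a \in A(I)} (s p_a)\log(s p_a) \;=\; s \log s \;+\; s \sum_{a \in A(I)} p_a \log p_a .
\]
The assumption $\epsilon \le \tfrac{1}{2n_I}$ gives $s \in [\tfrac{1}{2},1]$, so $|s \log s| \le \log 2$, and the negative-entropy term on a simplex satisfies $|\sum_a p_a \log p_a| \le \log n_I$. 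Combining these with $s \le 1$ yields $|d_{\Delta_I}^\epsilon(\wvec)| \le \log 2 + \log n_I = \log(2 n_I)$, as claimed.

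Next I would lift this to the dilated function by the triangle inequality applied to the recursive definition. Since $\xvec \in \X$ implies $\xvec[\sigma_1(I)] \in [0,1]$ for every $I \in \I_1$, and for the infosets with $\xvec[\sigma_1(I)] = 0$ the corresponding summand is interpreted as zero (by the standard convention, as $t \log t \to 0$ when $t \to 0^+$), I get
\[
|d_1^\epsilon(\xvec)| \;\le\; \sum_{I \in \I_1} \alpha_I \, \xvec[\sigma_1(I)]\, |d_{\Delta_I}^\epsilon(\xvec[I]/\xvec[\sigma_1(I)])| \;\le\; \|\boldsymbol{\alpha}\|_\infty \, \max_{I \in \I_1 \cup \I_2} \log(2 n_I) \, \sum_{I \in \I_1} \xvec[\sigma_1(I)] .
\]
Bounding $\sum_{I \in \I_1} \xvec[\sigma_1(I)] \le |\I_1| \le \max_{i \in \{1,2\}} |\I_i|$ gives the desired constant. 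The analogous computation for $\yvec$ yields the same bound, so one can set $C \coloneqq \|\boldsymbol{\alpha}\|_\infty \cdot \max_{i \in \{1,2\}} |\I_i| \cdot \max_{I \in \I_1 \cup \I_2} \log(2 n_I)$.

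The only delicate point is the base-simplex bound: a naive application of $|t \log t|\le 1/e$ term by term would give a worse bound scaling with $n_I$, so the reparametrization to extract the entropy term is essential in order to recover the logarithmic dependence $\log(2n_I)$ advertised in the statement. Everything else is routine given the recursive structure of dilated regularizers and the perfect-recall constraint $\sum_{a \in A(I)} \xvec[\sigma_1(I)a] = \xvec[\sigma_1(I)]$.
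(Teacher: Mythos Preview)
Your proof is correct and follows essentially the same route as the paper: bound $|d_{\Delta_I}^\epsilon|$ uniformly by $\log(2n_I)$ on the local simplex, then push this through the dilated sum using $\xvec[\sigma_1(I)]\le 1$ and $|\I_1|\le \max_i|\I_i|$. The only cosmetic difference is how the base bound is obtained: you reparametrize $w_a-\epsilon = s\,p_a$ and split into $s\log s$ plus a scaled entropy term, whereas the paper simply notes that $d_{\Delta_I}^\epsilon\le 0$ attains its largest absolute value at the center $\wvec=\onevec/n_I$, giving $(1-n_I\epsilon)\log\!\bigl(n_I/(1-n_I\epsilon)\bigr)\le \log(2n_I)$ directly.
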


\begin{proof}
	Consider the following inequalities: 
	\begin{align*}
	\max\limits_{\xvec\in\X} |d_1^\epsilon(\xvec)|
	&\le\sum\limits_{I\in\I_1}{\alpha_I}\max\limits_{\wvec\in\Delta_{I}}|d_{\Delta_I}^\epsilon(\wvec)|\\
	&\le \sum\limits_{I\in\I_1}\alpha_I\log\left(\frac{n_I}{1-\epsilon\cdot n_I}\right)\\
	&\le\sum\limits_{I\in\I_1}\alpha_I\log(2n_I):=C_\X,
\end{align*}
where we used that the maximum of $d_{\Delta_I}^\epsilon(\wvec)$ is attained in the center $\wvec[a]=1/n_I$, for all $a\in A(I)$, and the last inequality follows from $\epsilon\le \min\limits_{I\in\I_1\cup\I_2}\frac{1}{2n_I}$.
We can define $C_\Y$ analogously for the second player and take $C:=\max\{C_\X,C_\Y\}$.

Then define the vector $\boldsymbol{\alpha}$ as the vector that contains the components $\alpha_I$ for the first player \emph{and} second player.
Then, by Holder inequality, we have that $C\le\|\boldsymbol{\alpha}\|_\infty\cdot \max\limits_{i\in\{1,2\}}|I_i|\max\limits_{I\in \I_1\cup\I_2}\log(2n_I)$.

\end{proof}

\begin{lemma}\label{lem:maxinequality}
	The following inequalities holds:
	\begin{align}
		\left|\max\limits_{\xvec}f(\xvec)-\max\limits_{\yvec}g(\yvec)\right|\le\max\limits_{\xvec}|f(\xvec)-g(\xvec)|\label{eq:lemma11}\\
		\left|\min\limits_{\xvec}f(\xvec)-\min\limits_{\yvec}g(\yvec)\right|\le\max\limits_{\xvec}|f(\xvec)-g(\xvec)|\label{eq:lemma12}
	\end{align}
\end{lemma}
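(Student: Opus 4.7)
The plan is to handle the two inequalities symmetrically, since the min version follows from the max version applied to $-f$ and $-g$ (using $\min h = -\max(-h)$ and $|{-}a - ({-}b)| = |a-b|$). So I will focus on proving \eqref{eq:lemma11}.

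To bound $\max_{\xvec} f(\xvec) - \max_{\yvec} g(\yvec)$, I would introduce maximizers $\xvec^\star_f \in \arg\max_{\xvec} f(\xvec)$ and $\xvec^\star_g \in \arg\max_{\yvec} g(\yvec)$ (assumed to exist; otherwise one works with suprema and takes $\eta$-near-maximizers, which only changes the argument by an $\eta \to 0$ limit). The key one-line estimate is
\[
\max_{\xvec} f(\xvec) - \max_{\yvec} g(\yvec) = f(\xvec^\star_f) - g(\xvec^\star_g) \le f(\xvec^\star_f) - g(\xvec^\star_f) \le \max_{\xvec} |f(\xvec) - g(\xvec)|,
\]
where the first inequality uses optimality of $\xvec^\star_g$ for $g$, namely $g(\xvec^\star_g) \ge g(\xvec^\star_f)$. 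By swapping the roles of $f$ and $g$, the same argument yields $\max_{\yvec} g(\yvec) - \max_{\xvec} f(\xvec) \le \max_{\xvec} |f(\xvec) - g(\xvec)|$, and combining the two inequalities gives \eqref{eq:lemma11}.

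For \eqref{eq:lemma12}, I would apply the identity $\min_{\xvec} h(\xvec) = -\max_{\xvec} (-h(\xvec))$ to both sides and invoke \eqref{eq:lemma11} with $-f$ and $-g$, noting that $\max_{\xvec}|{-}f(\xvec) - ({-}g(\xvec))| = \max_{\xvec}|f(\xvec) - g(\xvec)|$. Alternatively, one can redo the same two-line argument with minimizers in place of maximizers. Since both functions in the statement share the same domain implicitly (the bound of the right-hand side is a single $\max$ over a common variable), no obstacle arises; the only subtle point is ensuring that the maximizers exist, which can be sidestepped by approximation as mentioned above. No nontrivial obstacle is expected.
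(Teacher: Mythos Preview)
Your proof is correct and follows essentially the same approach as the paper: bound one direction by evaluating at the maximizer and using optimality, obtain the other direction by swapping $f$ and $g$, and deduce \eqref{eq:lemma12} from \eqref{eq:lemma11} via $h\mapsto -h$. The only cosmetic difference is that the paper phrases the first step as the pointwise inequality $f(\xvec)\le|f(\xvec)-g(\xvec)|+g(\xvec)$ followed by taking $\max$ on both sides, which sidesteps the need to invoke (or approximate) maximizers.
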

\begin{proof}
	Consider the following inequality:
	\[
	f(\xvec)\le |f(\xvec)-g(\xvec)|+g(\xvec)
	\]
	Applying the $\max$ operator to both sides of the previous equation and observing that $\max\limits_{\xvec}(|f(\xvec)-g(\xvec)|+g(\xvec))\le\max\limits_{\xvec}|f(\xvec)-g(\xvec)|+\max\limits_{\xvec}g(\xvec))$ we can rearrange the inequality to obtain:
	\[
	\max\limits_{\xvec}f(\xvec)-\max\limits_{\xvec}g(\xvec)\le \max\limits_{\xvec}|f(\xvec)-g(\xvec)|.
	\]
	We can in the same way obtain that:
	\[
	\max\limits_{\xvec}g(\xvec)-\max\limits_{\xvec}f(\xvec)\le \max\limits_{\xvec}|f(\xvec)-g(\xvec)|,
	\]
	which combined with the above let us conclude Equation~\eqref{eq:lemma11}. Now Equation~\eqref{eq:lemma12} follows from considering  Equation~\eqref{eq:lemma11} with $-f(\xvec)$ instead of $f(\xvec)$ and $-g(\xvec)$ instead of $g(\xvec)$.
\end{proof}

\deltaapproxefpe*

\begin{proof}
	Let us consider the function $\tilde{f}_\epsilon: \X \times \Y \to \mathbb{R}$ such that, for every $\xvec \in \X$ and $\yvec \in \Y$, it holds:
	\[
	\tilde{f}_\epsilon(\xvec, \yvec) \coloneqq \xvec^\top \Umat \yvec - \mathbb{I}_{\X^{\epsilon}}(\xvec)+ \mathbb{I}_{\Y^{\epsilon}}(\yvec),
	\]
	where $\mathbb{I}_{\mathcal{W}}(\wvec)$ is $0$ if $\wvec\in \mathcal{W}$, while it is $+\infty$ if $\wvec\not\in \mathcal{W}$.

	First, it is easy to check that any $(\xvec,\yvec) \in \X \times \Y$ such that $\tilde{f}_\epsilon(\xvec, \yvec) = \max_{\tilde\xvec \in \X}\min_{\tilde\yvec \in \Y} \tilde{f}_\epsilon(\tilde\xvec, \tilde\yvec)$ is an $\epsilon$-EFPE of the original EFG (see Definition~\ref{def:eps_efpe}).

	Moreover, for any $(\xvec, \yvec)\in \X^{\epsilon}\times\Y^{\epsilon}$ it is easy to show that $|f_{\lambda,\epsilon}-\tilde f_{\epsilon}|=O(1/\lambda)$:
	\begin{align*}
		|f_{\lambda,\epsilon}(\xvec,\yvec)-\tilde{f}_\epsilon(\xvec,\yvec)|&\le\frac{\left|d_1^\epsilon(\xvec)-d_2^\epsilon(\yvec)\right|}{\lambda}\le \frac{2C}{\lambda},
	\end{align*}
	where $C$ is defined in Lemma~\ref{lem:bounded_d}.
	
	Finally, let us consider the unique NE $\zvec^\star_{\lambda,\epsilon}=(\xvec_{\lambda,\epsilon}^\star,\yvec_{\lambda,\epsilon}^\star)$ of $\G(\lambda,\epsilon)$.
	By Lemma~\ref{lem:maxinequality}, we have that:
	\begin{align*}
		\max_{\xvec\in  \X^\epsilon} \tilde{f}_\epsilon(\xvec,\yvec^\star_{\lambda,\epsilon})-  \min_{\yvec\in \Y^\epsilon} \tilde{f}_\epsilon(\xvec^\star_{\lambda,\epsilon},\yvec)
		& \leq \left|\max_{\xvec \in \X^\epsilon}\tilde{f}_\epsilon(\xvec,\yvec^\star_{\lambda,\epsilon})-\max_{\xvec \in  \X^\epsilon} f_{\lambda,\epsilon}(\xvec,\yvec^\star_{\lambda,\epsilon})\right|+
		\left|\min_{\yvec \in \Y^\epsilon}\tilde{f}_\epsilon(\xvec^\star_{\lambda,\epsilon},\yvec)-\min_{\yvec \in \Y^\epsilon} f_{\lambda,\epsilon}(\xvec^\star_{\lambda,\epsilon},\yvec)\right|\\
		& \le \frac{4C}{\lambda},
	\end{align*}
	which concludes the proof.
\end{proof}

\doublelimit*

\begin{proof}

	Let us consider a game in which the two players play simultaneously and only once, having each of them three different actions available (\emph{i.e.}, a $3 \times 3$ game in normal form).
	The following matrix encodes the first player's payoffs for all the possible combinations of players' actions: 
	\[
	\left[ {\begin{array}{ccc}
			0.3 & 0.5 & 0.3 \\
			0.7& 0.3 & 0.7 \\
			0.6 & 0.2 & 0.2
	\end{array} } \right]
	\]
	
	By using the fact that any quantal equilibrium enjoys the "independence
	of irrelevant alternatives" property~\cite{mckelvey1995quantal}, we can prove that it holds:
	\[
	\lim\limits_{\lambda\to\infty}\left[\lim\limits_{\epsilon\to0} \zvec_{\lambda,\epsilon}^\star\right]=\left(\left[ {\begin{array}{ccc}
			1/2\\
			1/2 \\
			0
	\end{array} } \right],\left[ {\begin{array}{ccc}
			1/6\\
			2/3 \\
			1/6
	\end{array} } \right]\right),
	\]
	where we expressed player's strategies as the probability distributions that they induce over the three actions for ease of presentation.
	Moreover, the only EFPE of the game is:
	\[\zvec^\star=
	\left(\left[ {\begin{array}{ccc}
			1/2\\
			1/2 \\
			0
	\end{array} } \right],\left[ {\begin{array}{ccc}
			0\\
			2/3 \\
			1/3
	\end{array} } \right]\right),
	\]
	since this is the unique NE that eliminates weakly dominated strategies, and, thus, it is an~\citep[{Corollary~2.2.6}]{van1991stability}.
	This proves the proposition.
\end{proof}

\convergence*

\begin{proof}
	By Corollary~\ref{cor:limits_efpe}, we have that
	\[
	\zvec^\star\coloneqq\lim\limits_{\epsilon\to0}\left[\lim\limits_{\lambda\to\infty} \zvec_{\lambda,\epsilon}^\star\right]
	\]
	is an EFPE.
	Let $\zvec_\epsilon^\star \coloneqq \lim_{\lambda\to\infty}\zvec^\star_{\lambda,\epsilon}$ for all $\epsilon > 0$.
	By definition of limit, for every $\tau >0$ there exists $R_{\epsilon}(\tau) \in \mathbb{R}_+$ such that $|\zvec^\star_{\lambda,\epsilon}-\zvec^\star_{\epsilon}|\le \tau$ for all $\lambda \in \mathbb{R}: \lambda>R_{\epsilon}(\tau)$.
	Moreover, by looking at the outer limit in the definition of $\zvec^\star$ we have that, for every $\tau^\prime>0$, there exists $H(\tau^\prime) \in \mathbb{R}_+$ such that $|\zvec^\star_\epsilon-\zvec^\star|\le\tau^\prime$ for all $\epsilon \in \mathbb{R} : |\epsilon|\le H(\tau^\prime)$.
	
	By using the triangular inequality, for every $\tau >0$:
	\begin{align*}\label{eq:triangular}
		|\zvec^\star_{\epsilon,\lambda}-\zvec^\star|\le |\zvec^\star_{\epsilon,\lambda}-\zvec^\star_\epsilon|+|\zvec^\star_\epsilon-\zvec^\star|\le \tau
	\end{align*}
	for all $\lambda \in \mathbb{R}: \lambda>R_{\epsilon}(\tau/2)$ and $\epsilon \in \mathbb{R} : |\epsilon|\le H(\tau/2)$.

	Moreover, if $\epsilon_k \leq H(\tau/2)$, then $\lambda_k >R_{\epsilon_k}(\tau/2)$.
	This follows from the following inequalities:
	\[
	\lambda_k > \frac{1}{\epsilon_k^d} \geq \frac{1}{H(\tau/2)^d} \geq R_{\epsilon_k}(\tau/2),
	\]
	where the last inequality holds for any sufficiency large $d$ and the fact that $R_{\epsilon_k}(\tau/2)$ can be upper bounded for every $k$ (otherwise $\zvec_0^\star \coloneqq \lim_{\lambda\to\infty}\zvec^\star_{\lambda,0}$ would not exists).
	
	As a result, for every $\tau > 0$ and for all $k \in \mathbb{N}$ such that $\epsilon_k \leq H(\tau/2)$, it holds:
	\begin{align*}
		|\zvec^\star_{\epsilon_k,\lambda_k}-\zvec^\star|\le \tau,
	\end{align*}
	which concludes the proof.
\end{proof}

\section{Proofs omitted from Section~\ref{sec:algo}}\label{app:2}

In this section is convenient to consider the join updates for the first and second player. If we define $\zvec=(\xvec,\yvec)\in\X\times\Y$, then it is easy to verify that updates of the form:
\begin{align}
	\hat\xvec=\arg\max\limits_{\xvec\in\X}\left\{ \xvec^\top \Umat\yvec_0-\frac{1}{\lambda}d_1^\epsilon(\xvec)-\frac{1}{\eta}D_1^{\epsilon}(\xvec|\tilde{\xvec})\right\}\\
	\hat\yvec=\arg\min\limits_{\yvec\in\Y}\left\{ \xvec_0^\top \Umat\yvec+\frac{1}{\lambda}d_2^\epsilon(\yvec)+\frac{1}{\eta}D_2^{\epsilon}(\yvec|\tilde{\yvec})\right\},
\end{align}
can be jointly expressed as:

\[
\hat\zvec=\arg\min\limits_{\zvec\in\X\times\Y}\left\{ G(\zvec_0)^\top \zvec+\frac{1}{\lambda}H^\epsilon(\zvec)+\frac{1}{\eta}D^{\epsilon}(\zvec|\tilde{\zvec})\right\},
\]

where we define $H^{\epsilon}(\zvec):=d^{\epsilon}_1(\xvec)+d^{\epsilon}_2(\yvec)$, $D^\epsilon(\zvec|\tilde\zvec):=D_1^\epsilon(\xvec|\tilde\xvec)+D_2^\epsilon(\yvec|\tilde\yvec)$ and $G(\zvec):=(-\Umat\yvec, \Umat^\top\xvec)$.

\subsection{Convergence Analysis}
\begin{lemma}\label{lem:ne_neq1}
	Let $\zvec_{\lambda,\epsilon}^\star$ the unique NE of the game $\G_{\lambda,\epsilon}$, then for all $\zvec$:
	\[
	G(\zvec)^\top(\zvec_{\lambda,\epsilon}^\star-\zvec)+\frac{1}{\lambda}\left[H^\epsilon(\zvec_{\lambda,\epsilon}^\star)-H^\epsilon(\zvec)\right]\le 0
	\]
\end{lemma}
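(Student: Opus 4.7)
The plan is to translate the saddle-point condition defining $\zvec_{\lambda,\epsilon}^\star$ into a variational inequality, then exploit the skew-symmetry of the bilinear operator $G$ together with the convexity of $H^\epsilon$. First, since $\zvec_{\lambda,\epsilon}^\star=(\xvec^\star,\yvec^\star)$ is the unique NE of $\G(\lambda,\epsilon)$, the excerpt shows that $\xvec^\star$ maximizes $\xvec^\top\Umat\yvec^\star-\tfrac{1}{\lambda}d_1^\epsilon(\xvec)$ over $\X$ and $\yvec^\star$ minimizes $\xvec^{\star,\top}\Umat\yvec+\tfrac{1}{\lambda}d_2^\epsilon(\yvec)$ over $\Y$. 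Writing the first-order optimality conditions for each player and summing them gives the joint variational inequality
\[
\bigl(G(\zvec^\star)+\tfrac{1}{\lambda}\nabla H^\epsilon(\zvec^\star)\bigr)^\top(\zvec-\zvec^\star)\;\ge\;0 \qquad \forall\,\zvec\in\X\times\Y,
\]
where I used $G_\xvec(\zvec)=-\Umat\yvec$, $G_\yvec(\zvec)=\Umat^\top\xvec$ and $H^\epsilon(\zvec)=d_1^\epsilon(\xvec)+d_2^\epsilon(\yvec)$.

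Second, I would observe that $G$ is skew-symmetric, i.e. a direct computation yields
\[
(G(\zvec)-G(\zvec^\star))^\top(\zvec-\zvec^\star)=-(\yvec-\yvec^\star)^\top\Umat^\top(\xvec-\xvec^\star)+(\xvec-\xvec^\star)^\top\Umat(\yvec-\yvec^\star)=0.
\]
This lets me replace $G(\zvec^\star)$ by $G(\zvec)$ in the inner product of the variational inequality; flipping the sign, this becomes
\[
G(\zvec)^\top(\zvec^\star-\zvec)+\tfrac{1}{\lambda}\nabla H^\epsilon(\zvec^\star)^\top(\zvec^\star-\zvec)\;\le\;0.
\]

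Third, I would invoke convexity of $H^\epsilon$ (which holds since $d_1^\epsilon$ and $d_2^\epsilon$ are strongly convex by Lemma~\ref{lem:strongconvexity}) to write the standard first-order inequality $H^\epsilon(\zvec)\ge H^\epsilon(\zvec^\star)+\nabla H^\epsilon(\zvec^\star)^\top(\zvec-\zvec^\star)$, which rearranges to $H^\epsilon(\zvec^\star)-H^\epsilon(\zvec)\le\nabla H^\epsilon(\zvec^\star)^\top(\zvec^\star-\zvec)$. Substituting this upper bound on $H^\epsilon(\zvec^\star)-H^\epsilon(\zvec)$ into the displayed inequality above yields the claim. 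There is no real obstacle: the argument is a short chain of three standard ingredients (first-order conditions, skew-symmetry of the bilinear part, gradient inequality for convex $H^\epsilon$); the only thing worth being careful about is signs and the direction of the first-order condition for max versus min when the two are combined into a single inequality involving $G$.
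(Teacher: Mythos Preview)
Your proof is correct. It differs in structure from the paper's argument, though the two are close cousins.

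The paper works directly with the saddle-point inequalities in function-value form: from $f_{\lambda,\epsilon}(\xvec^\star,\yvec^\star)\le f_{\lambda,\epsilon}(\xvec^\star,\yvec)$ and $f_{\lambda,\epsilon}(\xvec,\yvec^\star)\le f_{\lambda,\epsilon}(\xvec^\star,\yvec^\star)$ it extracts the two inequalities
\[
\xvec^{\star,\top}\Umat(\yvec^\star-\yvec)+\tfrac{1}{\lambda}\bigl[d_2^\epsilon(\yvec^\star)-d_2^\epsilon(\yvec)\bigr]\le 0,
\qquad
(\xvec-\xvec^\star)^\top\Umat\yvec^\star+\tfrac{1}{\lambda}\bigl[d_1^\epsilon(\xvec^\star)-d_1^\epsilon(\xvec)\bigr]\le 0,
\]
adds them, and finishes with the algebraic identity $\xvec^\top\Umat\yvec^\star-\xvec^{\star,\top}\Umat\yvec=G(\zvec)^\top(\zvec^\star-\zvec)$. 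No gradients of $H^\epsilon$ appear, and convexity of $H^\epsilon$ is never invoked as a separate step.

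Your route instead passes through the first-order (gradient) variational inequality, uses the skew-symmetry $(G(\zvec)-G(\zvec^\star))^\top(\zvec-\zvec^\star)=0$ to swap $G(\zvec^\star)$ for $G(\zvec)$, and then applies the convexity inequality for $H^\epsilon$ to return from gradients to function values. This is the standard monotone-operator viewpoint and is perfectly valid; its small extra cost is that you need $\nabla H^\epsilon(\zvec^\star)$ to exist (true here since the Legendre regularizer forces $\zvec^\star$ into the interior of $\X^\epsilon\times\Y^\epsilon$) and you spend one additional inequality (convexity) that the paper's direct saddle-point argument absorbs for free. Conversely, your formulation makes explicit the two structural facts---skew-symmetry of $G$ and convexity of $H^\epsilon$---that are doing the work, which can be useful if one later wants to generalize beyond the bilinear case.
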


\begin{proof}
	Since $\zvec_{\lambda,\epsilon}^\star$ is an equilibrium of $\G_{\lambda,\epsilon}$ we have that:
	\begin{equation*}
		f_{\lambda,\epsilon}(\xvec_{\lambda,\epsilon}^\star,\yvec_{\lambda,\epsilon}^\star)\le f_{\lambda,\epsilon}(\xvec_{\lambda,\epsilon}^\star, \yvec),\quad\forall\yvec\in\Y
	\end{equation*}
	which implies that:
	\begin{equation}\label{eq:lemma21}
		\xvec_{\lambda,\epsilon}^{\star,\top}\Umat(\yvec^\star_{\lambda,\epsilon}-\yvec)+\frac{1}{\lambda}\left[d^{\epsilon}_2(\yvec^\star_{\lambda,\epsilon})-d^{\epsilon}_2(\yvec)\right]\le 0.
	\end{equation}
	Similarly we can observe that:
	\begin{equation*}
		f_{\lambda,\epsilon}(\xvec,\yvec_{\lambda,\epsilon}^\star)\le f_{\lambda,\epsilon}(\xvec_{\lambda,\epsilon}^\star, \yvec^\star_{\lambda,\epsilon}),\quad\forall\xvec\in\X
	\end{equation*}
	which implies that:
	\begin{equation}\label{eq:lemma22}
		(\xvec-\xvec^\star_{\lambda,\epsilon})^{\top}\Umat\yvec^\star_{\lambda,\epsilon}+\frac{1}{\lambda}\left[d^{\epsilon}_1(\xvec^\star_{\lambda,\epsilon})-d^{\epsilon}_1(\xvec)\right]\le 0.
	\end{equation}
	By summing Equation~\eqref{eq:lemma21} and Equation~\eqref{eq:lemma22} and observing that $\xvec^\top\Umat\yvec_{\lambda,\epsilon}^\star-\xvec_{\lambda,\epsilon}^{\star,\top}\Umat\yvec=G(\zvec)^\top(\zvec^\star_{\lambda,\epsilon}-\zvec)$ we can conclude the statement of the lemma.
\end{proof}

\begin{lemma} \label{lem:onestepreg1}
	If $\hat\zvec=\arg\min\limits_{\zvec\in \X\times\Y}\left\{ G(\zvec_0)^\top \zvec+\frac{1}{\lambda}H^\epsilon(\zvec)+\frac{1}{\eta}D^{\epsilon}(\zvec|\tilde{\zvec})\right\}$
	then:
	\begin{align}
		\eta (\hat\zvec-\zvec)^\top G(\zvec_0)+\frac{\eta}{\lambda} H^\epsilon(\hat\zvec)-\frac{\eta}{\lambda}H^\epsilon(\zvec)
		\le D^\epsilon(\zvec|\tilde{\zvec})- D^\epsilon(\hat\zvec|\tilde{\zvec})-\frac{\lambda+\eta}{\lambda}D^\epsilon(\zvec|\hat\zvec)
	\end{align}
\end{lemma}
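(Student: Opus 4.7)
The plan is to treat this as a standard one-step proximal/mirror-descent inequality, with the important twist that the composite term $\tfrac{1}{\lambda}H^\epsilon(\zvec)$ is kept intact (not linearized), which is what produces the coefficient $\tfrac{\lambda+\eta}{\lambda}$ rather than just $1$ in front of $D^\epsilon(\zvec|\hat\zvec)$. My starting point is the first-order optimality condition for $\hat\zvec$: since the objective is strongly convex and $\X\times\Y$ is convex, $\hat\zvec$ satisfies
\[
\left\langle G(\zvec_0) + \tfrac{1}{\lambda}\nabla H^\epsilon(\hat\zvec) + \tfrac{1}{\eta}\bigl(\nabla H^\epsilon(\hat\zvec) - \nabla H^\epsilon(\tilde\zvec)\bigr),\; \zvec-\hat\zvec\right\rangle \ge 0 \quad \forall \zvec\in\X\times\Y.
\]
Multiplying by $\eta$ and rewriting as $\hat\zvec-\zvec$, this becomes
\[
\eta(\hat\zvec-\zvec)^\top G(\zvec_0) + \tfrac{\eta}{\lambda}\nabla H^\epsilon(\hat\zvec)^\top(\hat\zvec-\zvec) + \bigl(\nabla H^\epsilon(\hat\zvec) - \nabla H^\epsilon(\tilde\zvec)\bigr)^\top(\hat\zvec-\zvec) \le 0.
\]

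Next I would apply the standard three-point identity for Bregman divergences generated by $H^\epsilon$, namely
\[
\bigl(\nabla H^\epsilon(\hat\zvec)-\nabla H^\epsilon(\tilde\zvec)\bigr)^\top(\hat\zvec-\zvec) = D^\epsilon(\zvec|\hat\zvec) + D^\epsilon(\hat\zvec|\tilde\zvec) - D^\epsilon(\zvec|\tilde\zvec),
\]
which follows directly from expanding the three definitions. At the same time, because $D^\epsilon$ is itself the Bregman divergence of $H^\epsilon$, we have the exact identity
\[
\nabla H^\epsilon(\hat\zvec)^\top(\hat\zvec-\zvec) = H^\epsilon(\hat\zvec)-H^\epsilon(\zvec) + D^\epsilon(\zvec|\hat\zvec),
\]
obtained by rearranging the definition $D^\epsilon(\zvec|\hat\zvec)=H^\epsilon(\zvec)-H^\epsilon(\hat\zvec)-\nabla H^\epsilon(\hat\zvec)^\top(\zvec-\hat\zvec)$. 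Crucially, this last step is an equality (not just an inequality from convexity), which is what preserves the extra divergence term.

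Substituting both identities into the optimality inequality and collecting terms yields
\[
\eta(\hat\zvec-\zvec)^\top G(\zvec_0) + \tfrac{\eta}{\lambda}\bigl[H^\epsilon(\hat\zvec)-H^\epsilon(\zvec)\bigr] \le D^\epsilon(\zvec|\tilde\zvec) - D^\epsilon(\hat\zvec|\tilde\zvec) - D^\epsilon(\zvec|\hat\zvec) - \tfrac{\eta}{\lambda}D^\epsilon(\zvec|\hat\zvec),
\]
and combining the two $D^\epsilon(\zvec|\hat\zvec)$ contributions into $\tfrac{\lambda+\eta}{\lambda}D^\epsilon(\zvec|\hat\zvec)$ gives exactly the claim.

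There is no real technical obstacle here; the argument is essentially bookkeeping. The only subtle point worth being careful about is ensuring that the optimality condition is actually valid with $\nabla H^\epsilon$, which requires $\hat\zvec$ to lie in the relative interior of $\X\times\Y$ so that the (unbounded-gradient) Legendre regularizer is differentiable at $\hat\zvec$; this is guaranteed because the $\tfrac{1}{\eta}D^\epsilon(\cdot|\tilde\zvec)$ term forces the minimizer into the interior, so $\nabla d_i^\epsilon(\hat\zvec)$ is well defined. Once this is noted, the three displays above can be chained together without further work.
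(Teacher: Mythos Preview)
Your proof is correct and follows essentially the same route as the paper's: write the first-order optimality condition for $\hat\zvec$, use $\nabla_{\zvec} D^\epsilon(\zvec|\tilde\zvec)=\nabla H^\epsilon(\zvec)-\nabla H^\epsilon(\tilde\zvec)$, and rewrite the resulting gradient terms via Bregman identities. Your organization---handling the three-point identity and the composite term $\tfrac{\eta}{\lambda}\nabla H^\epsilon(\hat\zvec)^\top(\hat\zvec-\zvec)=\tfrac{\eta}{\lambda}[H^\epsilon(\hat\zvec)-H^\epsilon(\zvec)+D^\epsilon(\zvec|\hat\zvec)]$ separately---makes the origin of the $\tfrac{\lambda+\eta}{\lambda}$ coefficient slightly more transparent than the paper's grouping, but the argument is the same.
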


\begin{proof}
	By the fact that for any Bregman divergence we have $\nabla_{\zvec} D^\epsilon(\zvec|\tilde{\zvec})=\nabla H^{\epsilon}(\zvec)-\nabla H^\epsilon(\tilde{\zvec})$, and rearranging the optimality condition of $\hat{\zvec}$ we get that:
	\[
	\left[G(\zvec_0)+\left(\frac{1}{\lambda}+\frac{1}{\eta}\right) \nabla H^{\epsilon}(\hat{\zvec})-\frac{1}{\eta}\nabla H^{\epsilon}(\tilde{\zvec})\right]^\top (\zvec-\hat{\zvec})\ge0,
	\]
	which implies that:
	\begin{equation}\label{eq:lemma4}
		\frac{\lambda \eta}{\lambda +\eta} G(\zvec_0)^\top(\hat{\zvec}-\zvec)\le \left[\nabla H^{\epsilon}(\hat{\zvec})-\frac{\lambda}{\lambda+\eta}\nabla H^{\epsilon}(\tilde{\zvec})\right]^\top(\zvec-\hat{\zvec}).
	\end{equation}
	It is then easy to check that for the right hand side of Equation~\eqref{eq:lemma4} the following equality     holds:
	\[
	\left[\nabla H^{\epsilon}(\hat{\zvec})-\frac{\lambda}{\lambda+\eta}\nabla H^{\epsilon}(\tilde{\zvec})\right]^\top(\zvec-\hat{\zvec})=\frac{\lambda}{\lambda+\eta}D^{\epsilon}(\zvec|\tilde{\zvec})-\frac{\lambda}{\lambda+\eta} D^{\epsilon}(\hat{\zvec}|\tilde{\zvec})+D^{\epsilon}(\zvec|\hat{\zvec})+\left[\frac{\lambda}{\lambda+\eta}-1\right]H^{\epsilon}(\hat{\zvec})-\left[\frac{\lambda}{\lambda+\eta}-1\right]H^{\epsilon}({\zvec}),
	\]
	which concludes the proof by rearranging the terms.
\end{proof}

Next we prove a similar result to~\citet[Lemma~11]{wei2020linear} that relates the distance between the gradients of two updates to the distance in the outputs. Formally:
\begin{lemma}\label{lem:gradientlemma}
	Let
	\[
	\hat\zvec_1=\arg\min\limits_{\zvec\in\X\times\Y}\left\{ G(\zvec_{0,1})^\top \zvec+\frac{1}{\lambda}H^\epsilon(\zvec)+\frac{1}{\eta}D^\epsilon(\zvec\mid\zvec_1)\right\}
	\]
	and
	\[
	\hat\zvec_2=\arg\min\limits_{\zvec\in\X\times\Y}\left\{  G(\zvec_{0,2})^\top \zvec+\frac{1}{\lambda}H^\epsilon(\zvec)+\frac{1}{\eta}D^\epsilon(\zvec\mid\zvec_1)\right\}
	\]
	then
	\[
	\|\hat\zvec_{1}-\hat\zvec_2\|_2\le \frac{\eta\lambda}{\eta+\lambda}\|G(\zvec_{0,1})-G(\zvec_{0,2})\|_2
	\]
\end{lemma}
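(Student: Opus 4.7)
The plan is to derive the bound by writing down the first-order optimality conditions that characterize $\hat\zvec_1$ and $\hat\zvec_2$, summing them against the test point given by the other minimizer, and then exploiting the $1$-strong convexity of $H^\epsilon$ (from Lemma~\ref{lem:strongconvexity}) to turn a linear inequality into a quadratic one that can be converted to the desired estimate via Cauchy--Schwarz. This is the standard ``nonexpansiveness of the proximal operator'' argument, adapted to the case where the proximal objective contains an additional strongly convex regularizer $\frac{1}{\lambda}H^\epsilon$ beyond the Bregman term $\frac{1}{\eta}D^\epsilon(\cdot\mid\zvec_1)$.

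First, I would use the identity $\nabla_{\zvec} D^\epsilon(\zvec\mid\zvec_1)=\nabla H^\epsilon(\zvec)-\nabla H^\epsilon(\zvec_1)$ and the variational inequality characterizing a minimum of a convex function over $\X\times\Y$ to get, for $i\in\{1,2\}$, the condition
\[
\Bigl\langle G(\zvec_{0,i})+\Bigl(\tfrac{1}{\lambda}+\tfrac{1}{\eta}\Bigr)\nabla H^\epsilon(\hat\zvec_i)-\tfrac{1}{\eta}\nabla H^\epsilon(\zvec_1),\,\zvec-\hat\zvec_i\Bigr\rangle\ge 0\quad\forall\zvec\in\X\times\Y.
\]
Instantiating the first inequality with $\zvec=\hat\zvec_2$ and the second with $\zvec=\hat\zvec_1$ and adding them, the $\nabla H^\epsilon(\zvec_1)$ contributions cancel because both proximal steps use the same base point $\zvec_1$. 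After rearranging I obtain
\[
\Bigl\langle G(\zvec_{0,1})-G(\zvec_{0,2}),\,\hat\zvec_2-\hat\zvec_1\Bigr\rangle\ge\Bigl(\tfrac{1}{\lambda}+\tfrac{1}{\eta}\Bigr)\bigl\langle\nabla H^\epsilon(\hat\zvec_1)-\nabla H^\epsilon(\hat\zvec_2),\,\hat\zvec_1-\hat\zvec_2\bigr\rangle.
\]

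Next, I would apply the $1$-strong convexity of $H^\epsilon=d_1^\epsilon+d_2^\epsilon$ with respect to the Euclidean norm (Lemma~\ref{lem:strongconvexity}) to lower bound the right-hand side by $\bigl(\tfrac{1}{\lambda}+\tfrac{1}{\eta}\bigr)\|\hat\zvec_1-\hat\zvec_2\|_2^2$, and Cauchy--Schwarz to upper bound the left-hand side by $\|G(\zvec_{0,1})-G(\zvec_{0,2})\|_2\,\|\hat\zvec_1-\hat\zvec_2\|_2$. Dividing through by $\|\hat\zvec_1-\hat\zvec_2\|_2$ (the bound is trivial when this quantity is zero) and inverting $\tfrac{1}{\lambda}+\tfrac{1}{\eta}=\tfrac{\lambda+\eta}{\lambda\eta}$ yields the claimed inequality.

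The only subtle point in executing this plan is confirming that the two variational inequalities genuinely share the term $\tfrac{1}{\eta}\nabla H^\epsilon(\zvec_1)$ so that it cancels upon summing, which is guaranteed here because both proximal problems use the \emph{same} Bregman center $\zvec_1$ (as stated in the lemma). If the centers were different, an extra Bregman-divergence remainder would appear and the nonexpansive constant would degrade. Apart from that, every ingredient of the argument is already supplied by the paper, so no separate technical obstacle is expected.
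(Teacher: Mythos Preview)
Your proposal is correct and follows essentially the same argument as the paper: write the first-order optimality conditions for both minimizers, add them (so the common $\nabla H^\epsilon(\zvec_1)$ term cancels), then apply $1$-strong convexity of $H^\epsilon$ on the right and Cauchy--Schwarz on the left before dividing by $\|\hat\zvec_1-\hat\zvec_2\|_2$. The only cosmetic difference is that the paper multiplies through by $\eta$ before invoking strong convexity, yielding the constant $1+\eta/\lambda$ rather than your equivalent $\tfrac{1}{\lambda}+\tfrac{1}{\eta}$.
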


\begin{proof}
	By summing the first order conditions of the two equations above we obtain:
	\begin{align}
		&\eta\left[G(\zvec_{0,1})-G(\zvec_{0,2})\right]^\top(\hat\zvec_2-\hat\zvec_1)\ge\left(1+\frac{\eta}{\lambda}\right)\left[\nabla H^\epsilon(\hat\zvec_1)-\nabla H^\epsilon(\hat\zvec_2)\right]^\top(\hat\zvec_1-\hat\zvec_2).
	\end{align}
	By strong convexity of $H^\epsilon(\cdot)$, given by Lemma~\ref{lem:strong_convexity2}, the right hand side is lower bounded by $\left(1+\frac{\eta}{\lambda}\right)\|\hat\zvec_1-\hat\zvec_2\|_2^2$, while by Cauchy-Schwartz inequality we have that the wight hand side is upper bounded by $\eta\|G(\zvec_{0,1})-G(\zvec_{0,2})\|_2\|\hat\zvec_1-\hat\zvec_1\|_2$.
	Thus:
	\[
	\eta\|G(\zvec_{0,1})-G(\zvec_{0,2})\|_2\|\hat\zvec_1-\hat\zvec_1\|_2\ge \left(1+\frac{\eta}{\lambda}\right)\|\hat\zvec_1-\hat\zvec_2\|_2^2,
	\] 
	which, after dividing by $\|\hat\zvec_1-\hat\zvec_2\|_2$, concludes the proof.
\end{proof}

\linearconv*
\begin{proof}
	Notice that the update of Algorithm~\ref{alg:OOMD} can be jointly written as:
	\begin{align*}
		\zvec^{(k)}_{t+1}=\arg\min\limits_{\zvec}\left\{\zvec^\top G(\zvec^{(k)}_t)+\frac{1}{\lambda_k}H^{\epsilon_k}(\zvec) + \frac{1}{\lambda_k}D^{\epsilon_k}\left(\zvec|\zvec^{(k)}_{t+\frac{1}{2}}\right) \right\},\\
		\zvec^{(k)}_{t+\frac{1}{2}}=\arg\min\limits_{\zvec}\left\{\zvec^\top G(\zvec^{(k)}_t)+\frac{1}{\lambda_k}H^{\epsilon_k}(\zvec) + \frac{1}{\lambda_k}D^{\epsilon_k}\left(\zvec|\zvec^{(k)}_{t-\frac{1}{2}}\right) \right\}.
	\end{align*}
	
	Then we can use Lemma~\ref{lem:onestepreg1} for the update of $\zvec^{(k)}_{t+\frac{1}{2}}$ which gives:
	\begin{align}\label{eq:uno_onestep_reg}
		&\eta (\zvec^{(k)}_{t+\frac{1}{2}}-\zvec)^\top G(\zvec_t)+\frac{\eta}{\lambda_k} H^{\epsilon_k}(\zvec^{(k)}_{t+\frac{1}{2}})-\frac{\eta}{\lambda_k}H^{\epsilon_k}(\zvec)\le D^{\epsilon_k}(\zvec|\zvec^{(k)}_{t-\frac{1}{2}})- D^{\epsilon_k}(\zvec^{(k)}_{t+\frac{1}{2}}|\zvec^{(k)}_{t-\frac{1}{2}})-\frac{\lambda_k+\eta}{\lambda_k}D^{\epsilon_k}(\zvec|\zvec^{(k)}_{t+\frac{1}{2}}),
	\end{align}
	which holds for any $\zvec$ and thus also for $\zvec_k^\star$.
	
	On the other hand using Lemma~\ref{lem:onestepreg1} for the update of $\zvec^{(k)}_{t}$ which gives:
	\begin{align}\label{eq:due_onestep_reg}
		&\eta (\zvec^{(k)}_{t}-\zvec)^\top G(\zvec^{(k)}_{t-1})+\frac{\eta}{\lambda_k} H^{\epsilon_k}(\zvec^{(k)}_{t})-\frac{\eta}{\lambda_k}H^{\epsilon_k}(\zvec) \le D^{\epsilon_k}(\zvec|\zvec^{(k)}_{t-\frac{1}{2}})- D^{\epsilon_k}(\zvec^{(k)}_{t}|\zvec_{t-\frac{1}{2}})-\frac{\lambda_k+\eta}{\lambda_k}D^{\epsilon_k}(\zvec|\zvec^{(k)}_{t}),
	\end{align}
	which holds fro any $\zvec$ and thus also for $\zvec^{(k)}_{t+\frac{1}{2}}$.
	Summing Equation~\eqref{eq:uno_onestep_reg} with $\zvec=\zvec_k^\star$ to Equation~\eqref{eq:due_onestep_reg} with $\zvec=\zvec^{(k)}_{t+\frac{1}{2}}$, and rearranging results in the following inequality:
	\begin{align}
		\eta(\zvec^{(k)}_t-\zvec^\star)^\top G(\zvec^{(k)}_t)&+\frac{\eta}{\lambda_k}\left[H^{\epsilon_k}(\zvec^{(k)}_t)-H^{\epsilon_k}(\zvec^\star)\right]\nonumber\\
		&\le D^{\epsilon_k}(\zvec_k^\star|\zvec^{(k)}_{t-\frac{1}{2}})-\frac{\lambda_k+\eta}{\lambda_k} D^{\epsilon_k}(\zvec_k^\star|\zvec^{(k)}_{t+\frac{1}{2}})-D^{\epsilon_k}(\zvec^{(k)}_t|\zvec^{(k)}_{t-\frac{1}{2}})-\frac{\lambda_k+\eta}{\lambda_k} D^{\epsilon_k}(\zvec^{(k)}_{t+\frac{1}{2}}|\zvec^{(k)}_{t})\nonumber\\
		&+\eta(\zvec^{(k)}_t-\zvec^{(k)}_{t+\frac{1}{2}})^\top(G(\zvec^{(k)}_t)-G(\zvec^{(k)}_{t-1}))\label{eq:th4_1}
	\end{align}
	Now consider the last term of Equation~\eqref{eq:th4_1} and the following inequalities:
	\begin{align}
		(\zvec^{(k)}_t-\zvec^{(k)}_{t+\frac{1}{2}})^\top(G(\zvec^{(k)}_t)-G(\zvec^{(k)}_{t-1}))&\le\|\zvec^{(k)}_t-\zvec^{(k)}_{t+\frac{1}{2}}\|_2\|G(\zvec^{(k)}_t)-G(\zvec^{(k)}_{t-1})\|_2\\
		&\le\frac{\lambda_k\eta}{\lambda_k+\eta}\|G(\zvec^{(k)}_{t-1})-G(\zvec^{(k)}_t)\|_2^2\\
		&\le \frac{\lambda_k\eta}{\lambda_k+\eta}L_{\Umat}^2\|\zvec^{(k)}_{t-1}-\zvec^{(k)}_t\|_2^2\\
		&\le \frac{2\lambda_k\eta}{\lambda_k+\eta}L_{\Umat}^2\left[\|\zvec^{(k)}_{t-1}-\zvec^{(k)}_{t-\frac{1}{2}}\|_2^2+\|\zvec^{(k)}_{t-\frac{1}{2}}-\zvec^{(k)}_{t-1}\|^2_2\right]\\
		&\le \frac{4\lambda_k\eta}{\lambda_k+\eta}L_{\Umat}^2\left[D^{\epsilon_k}(\zvec^{(k)}_t|\zvec^{(k)}_{t-\frac{1}{2}})+D^{\epsilon_k}(\zvec^{(k)}_{t-\frac{1}{2}}|\zvec^{(k)}_{t-1})\right],
	\end{align}
	where the first inequality is the Cauchy Schwartz inequality and the second inequality follows from Lemma~\ref{lem:gradientlemma}. Then we used the fact that the operator $G(\cdot)$ is linear with matrix $\Amat$ defined as:
	\[
	\Amat=\left[ {\begin{array}{cc}
			\zerovec & -\Umat^\top \\
			\Umat& \zerovec
	\end{array} } \right],
	\]
	and $L_{\Umat}\coloneqq\|\Umat\|_2=\|\Amat\|_2$.
	Thus, continuing from Equation~\eqref{eq:th4_1} we get:
	\begin{align*}
		&\eta(\zvec^{(k)}_t-\zvec^\star_k)^\top G(\zvec^{(k)}_t)+\frac{\eta}{\lambda_k}\left[H^{\epsilon_k}(\zvec^{(k)}_t)-H^{\epsilon_k}(\zvec_k^\star)\right]\\
		&\le D^{\epsilon_k}(\zvec_k^\star|\zvec^{(k)}_{t-\frac{1}{2}})-\frac{\lambda_k+\eta}{\lambda_k} D^{\epsilon_k}(\zvec_k^\star|\zvec^{(k)}_{t+\frac{1}{2}})-D^{\epsilon_k}(\zvec^{(k)}_t|\zvec^{(k)}_{t-\frac{1}{2}})-\frac{\lambda_k+\eta}{\lambda_k} D^{\epsilon_k}(\zvec^{(k)}_{t+\frac{1}{2}}|\zvec^{(k)}_{t})\\
		&+\frac{4\lambda_k\eta^2}{\lambda_k+\eta}L_{\Umat}^2\left[D^{\epsilon_k}(\zvec^{(k)}_t|\zvec^{(k)}_{t-\frac{1}{2}})+D^{\epsilon_k}(\zvec^{(k)}_{t-\frac{1}{2}}|\zvec^{(k)}_{t-1})\right]
	\end{align*}
	By assuming that $\lambda_k\ge\eta$ and $\eta\le\frac{1}{\sqrt{2}L_{\Umat}}$ we have that  $\frac{4\lambda_k\eta^2}{\lambda_k+\eta}L_{\Umat}^2\le 1$. Moreover, thanks to Lemma~\ref{lem:ne_neq1} we have that:
	\[
	\eta(\zvec^{(k)}_t-\zvec_k^\star)^\top G(\zvec^{(k)}_t)+\frac{\eta}{\lambda_k}\left[H^{\epsilon_k}(\zvec^{(k)}_t)-H^{\epsilon_k}(\zvec_k^\star)\right]\ge 0.
	\]
	Thus, by rearranging Equation\eqref{eq:th4_1} and recalling that $D^{\epsilon_k}(\cdot|\cdot)\ge 0$ we obtain:
	\[
	\frac{\lambda_k+\eta}{\lambda_k}\left[D^{\epsilon_k}(\zvec_k^\star|\zvec^{(k)}_{t+\frac{1}{2}})+D^{\epsilon_k}(\zvec^{(k)}_{t+\frac{1}{2}}|\zvec^{(k)}_t)\right]\le D^{\epsilon_k}(\zvec_k^\star|\zvec^{(k)}_{t-\frac{1}{2}})+D^{\epsilon_k}(\zvec^{(k)}_{t-\frac{1}{2}}|\zvec^{(k)}_{t-1}).
	\]
	Thus, by iterating the above expression, we get that:
	\[
	D^{\epsilon_k}(\zvec_k^\star|\zvec^{(k)}_{t+\frac{1}{2}})+D^{\epsilon_k}(\zvec^{(k)}_{t+\frac{1}{2}}|\zvec^{(k)}_t)\le \left(\frac{\lambda_k}{\lambda_k+\eta}\right)^{t}D^{\epsilon_k}(\zvec_k^\star|\zvec^{(k)}_{0}),
	\]
	and since $D^{\epsilon_k}(\zvec_1,\zvec_2)\ge\frac{1}{2}\|\zvec_1-\zvec_2\|_2^2\ge0$ we have that:
	\[
	\|\zvec_k^\star-\zvec^{(k)}_{t+\frac{1}{2}}\|_2^2\le2 \left(\frac{\lambda_k}{\lambda_k+\eta}\right)^{t}D^{\epsilon_k}(\zvec_k^\star|\zvec^{(k)}_{0}),
	\]
	and
	\[
	\|\zvec^{(k)}_t-\zvec^{(k)}_{t+\frac{1}{2}}\|_2^2\le2 \left(\frac{\lambda_k}{\lambda_k+\eta}\right)^{t}D^{\epsilon_k}(\zvec_k^\star|\zvec^{(k)}_{0}),
	\]
	Thus we can conclude that:
	\begin{align}
		\|\zvec_k^\star-\zvec^{(k)}_t\|_2&\le\|\zvec_k^\star-\zvec^{(k)}_{t+\frac{1}{2}}\|_2+	\|\zvec^{(k)}_t-\zvec^{(k)}_{t+\frac{1}{2}}\|_2\le2\sqrt{2D^{\epsilon_k}(\zvec_k^\star|\zvec^{(k)}_0)}\left(\frac{\lambda_k}{\lambda_k+\eta}\right)^{t/2}.
	\end{align}
which concludes the proof.
\end{proof}

\subsection{Exploitability }

In this section we will prove all the results needed to prove the convergence rate of Algorithm~\ref{alg:OOMD} in terms of Nash gap.

We first need the following lemma.

\begin{lemma}\label{lem:inequality}
	For all $x\ge a>0$ we have the following inequality: 
	\[
	\left(\frac{x}{a+x}\right)^{x^2/a}\le\frac{2}{x}.
	\]
\end{lemma}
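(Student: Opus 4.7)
The plan is to pass to logarithms and reduce the lemma to a clean one-variable inequality. Taking $\log$ of both sides and multiplying by $-1$, the desired inequality $\left(\frac{x}{a+x}\right)^{x^2/a}\le\frac{2}{x}$ is equivalent to
\[
\frac{x^2}{a}\,\log\!\left(1+\frac{a}{x}\right) \;\ge\; \log\!\left(\frac{x}{2}\right).
\]
So it is enough to lower bound the left-hand side in a way that no longer involves $a$, and then verify the resulting inequality in $x$ alone.

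The key auxiliary fact I would use is
\[
\log(1+t) \;\ge\; t\log 2 \qquad \text{for all } t\in[0,1].
\]
This is immediate because $g(t)\coloneqq\log(1+t)-t\log 2$ is concave on $[0,1]$ with $g(0)=g(1)=0$, hence $g\ge 0$ on the whole interval. Under the hypothesis $x\ge a>0$ we have $t\coloneqq a/x\in(0,1]$, so I can apply this bound to obtain
\[
\frac{x^2}{a}\,\log\!\left(1+\frac{a}{x}\right) \;\ge\; \frac{x^2}{a}\cdot\frac{a}{x}\,\log 2 \;=\; x\log 2.
\]

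It then remains to check the purely one-variable statement $x\log 2 \ge \log(x/2)$, i.e.\ $(x+1)\log 2 \ge \log x$, i.e.\ $2^{x+1}\ge x$. This is trivial for $x\le 2$ (the left side is $\ge 2$) and for $x>2$ one notes that $h(x)\coloneqq 2^{x+1}-x$ has derivative $h'(x)=2^{x+1}\log 2 - 1 > 0$ for all $x\ge 0$, while $h(0)=2>0$, so $h(x)>0$ throughout $x\ge 0$. Chaining the three steps yields the lemma.

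The only delicate point is the choice of the intermediate estimate on $\log(1+t)$: the weaker inequality $\log(1+t)\ge t/(1+t)$ would give $x/(1+a/x)\log 2$-type bounds that are not strong enough on the worst-case regime $a=x$, which is exactly where the lemma is tight. Using the secant bound $\log(1+t)\ge t\log 2$ on $[0,1]$ is tight precisely at the boundary $t=1$ (i.e.\ $a=x$), matching the extremal case of the lemma, and this is what makes the two-line reduction go through.
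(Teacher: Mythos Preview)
Your proof is correct and takes a genuinely different route from the paper's argument. The paper does \emph{not} take logs first; instead it writes $\left(\tfrac{x}{a+x}\right)^{x^2/a}=e^{-u}$ with $u=\tfrac{x^2}{a}\log(1+a/x)$, applies the elementary bound $e^{-u}\le \tfrac{1}{1+u}$, and then uses the standard estimate $\log(1+t)\ge \tfrac{t}{1+t}$ together with $a\le x$ to get $u\ge x/2$, yielding the chain $\left(\tfrac{x}{a+x}\right)^{x^2/a}\le \tfrac{1}{1+x/2}\le \tfrac{2}{x}$. Your argument instead passes to logs immediately, uses the secant bound $\log(1+t)\ge t\log 2$ on $[0,1]$ (tight at both endpoints), and reduces everything to the one-variable fact $2^{x+1}\ge x$. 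Both approaches are elementary; yours is arguably more direct and avoids the intermediate $\tfrac{1}{1+x/2}$ expression.

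One small inaccuracy in your closing commentary: the ``weaker'' inequality $\log(1+t)\ge t/(1+t)$ \emph{would} in fact suffice in your framework. It gives $\tfrac{x^2}{a}\log(1+a/x)\ge \tfrac{x^2}{x+a}\ge \tfrac{x}{2}$ when $a\le x$, and $\tfrac{x}{2}\ge\log(x/2)$ holds for all $x>0$ (set $u=x/2$ and note $u-\log u\ge 1$). This is essentially the paper's path. So your secant bound is a valid and elegant choice, but not strictly necessary; the remark that the weaker bound fails at $a=x$ is not accurate. This does not affect the correctness of your main argument.
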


\begin{proof}
	
	Consider $(1-y)^r$ with $r>0$ and $y<1$ and define $b=\log\left(\frac{1}{1-y}\right)$. Then the following inequality holds:
	\begin{align}\label{eq:lemma81}
		(1-y)^r=e^{-rb}\le\frac{1}{1+rb}\le\frac{1}{1+r\log\left(\frac{1}{1-y}\right)}.
	\end{align}
	
	Define $\phi=a+x$. Thanks to Equation~\eqref{eq:lemma81} we get:
	\begin{align*}
		\left(\frac{x}{a+x}\right)^{x^2/a}=\left(1-\frac{a}{\phi}\right)^\frac{(\phi-a)^2}{a}\le\frac{1}{1+\frac{(\phi-a)^2}{a}\log\left(\frac{\phi}{\phi-a}\right)}.
	\end{align*}
	Substituting back $x=\phi-a$ we can notice that $\log(1+a/x)\ge \frac{a/x}{1-a/x}=\frac{a}{x-a}\ge \frac{a}{2x}$, where the last inequality follows from $x\ge a$. Thus we obtain:
	\[
	\left(\frac{x}{a+x}\right)^{x^2/a}\le\frac{1}{1+\frac{x^2}{a}\log\left(1+\frac{a}{x}\right)}\le\frac{1}{1+\frac{x^2}{a}\frac{a}{x-a}}=\frac{1}{1+x/2}\le\frac{2}{x},
	\]
	thus concluding the proof.
\end{proof}

\anytimeconvtwo*

\begin{proof}
	Consider now the following chain of inequalities:
	\[
	\|\zvec_k^\star-\zvec_{T_k}^{(k)}\|_2\stackrel{(i)}{=}O\left(\left(\frac{\lambda_k}{\lambda_k+\eta}\right)^{T_k/2}\right)\stackrel{(ii)}{=}O({1/\lambda_k}),
	\]
	where $(i)$ is due to Theorem~\ref{thm:algo_conv} and $(ii)$ is due to Lemma~\ref{lem:inequality} and the assumption that $T_k:=\beta^k\ge\frac{2\lambda_k^2}{\eta}$.
	Thus $\lim_{k\to\infty}\zvec_{T_k}^{(k)}=\zvec_k^\star$.
	Moreover from Theorem~\ref{th:convergece} we know that $\zvec_k^\star\to\zvec^\star$ and thus by triangular inequality we can show that:
	\[
	\|\zvec^\star-\zvec_{T_k}^{(k)}\|_2\le\|\zvec^\star-\zvec_k^\star\|_2+\|\zvec_k^\star-\zvec_{T_k}^{(k)}\|_2,
	\]
	and both terms on the right hand side goes to zero with $k\to\infty$, thus concluding the proof.
\end{proof}

\saddelpointgap*

\begin{proof}
	We will first consider the saddle point gap with respect the utility function $f_{\lambda,\epsilon}$ of the regularized-perturbed game:
	\begin{align}
		f_{\lambda,\epsilon}(\xvec,\tilde\yvec)-f_{\lambda,\epsilon}(\tilde\xvec,\yvec)\coloneqq &\xvec^\top \Umat\tilde\yvec-\frac{1}{\lambda}d_1^\epsilon(\xvec)+\frac{1}{\lambda}d_2^\epsilon(\tilde\yvec)-\tilde{\xvec}^\top\Umat\yvec+\frac{1}{\lambda}d_1^\epsilon(\tilde\xvec)-\frac{1}{\lambda}d_2^\epsilon(\yvec)
	\end{align}
	Now we can add and subtract the following quantities $\xvec^\top\Umat\yvec^\star_{\lambda,\epsilon}, \xvec^{\star,\top}_{\lambda,\epsilon}\Umat\yvec, \frac{1}{\lambda}d_1^\epsilon(\xvec^\star_{\lambda,\epsilon})$ and $\frac{1}{\lambda}d_2^\epsilon(\yvec^\star_{\lambda,\epsilon})$ to obtain:
	\begin{align}
		f_{\lambda,\epsilon}(\xvec,\tilde\yvec)-f_{\lambda,\epsilon}(\tilde\xvec,\yvec)=&\underbrace{\xvec^\top\Umat(\tilde\yvec-\yvec^\star_{\lambda,\epsilon})-(\tilde\xvec-\xvec^\star_{\lambda, \epsilon})^\top\Umat\yvec}_{\text{\mycircled{1}}}+\underbrace{\frac{1}{\lambda}d_1^\epsilon(\tilde\xvec)-\frac{1}{\lambda}d_1^\epsilon(\xvec^\star_{\lambda,\epsilon})+\frac{1}{\lambda}d_2^\epsilon(\tilde\yvec)-\frac{1}{\lambda}d_2^\epsilon(\yvec^\star_{\lambda,\epsilon})}_{\text{\mycircled{2}}}\\
		&+\underbrace{f_{\lambda,\epsilon}(\xvec,\yvec^\star_{\lambda, \epsilon})-f_{\lambda,\epsilon}(\xvec^\star_{\lambda, \epsilon},\yvec)}_{\text{\mycircled{3}}}.
	\end{align}
	We can observe that by Cauchy-Schwartz inequality we have that the \mycircled{1} is can be upper bounded by:
	\begin{align}
		\nu L_{\Umat} (\|\xvec\|_2+\|\yvec\|_2)&\le \nu L_{\Umat}\sqrt{2|\Sigma_1|+2|\Sigma_2|}\\
		&\le 2\nu |\Sigma|^{3/2},
	\end{align}
	
	where $L_{\Umat}:=\|\Umat\|_2\le\sqrt{|\Sigma_1|\cdot|\Sigma_2|}\le|\Sigma|$.

	On the other hand \mycircled{2} is upper bounded by $4C/\lambda$ (see Lemma~\ref{lem:bounded_d} for the definition of the constant $C$). Finally \mycircled{3} is negative by definition of the equilibria $\zvec_{\lambda,\epsilon}^\star=(\xvec_{\lambda,\epsilon}^\star, \yvec_{\lambda,\epsilon}^\star)$.
	Thus we have:
	\[
	f_{\lambda,\epsilon}(\xvec,\tilde\yvec)-f_{\lambda,\epsilon}(\tilde\xvec,\yvec)\le\nu|\Sigma|^{3/2}+\frac{4C}{\lambda}.
	\]
	On the other hand, following the same argument as in the proof of Theorem~\ref{thm:connection_efpe} we know that $|f_{\lambda,\epsilon}(\xvec,\yvec)-\xvec^\top\Umat\yvec|\le\frac{2C}{\lambda}$ which directly implies that: 
	\[
	\xvec^\top\Umat\tilde\yvec-\tilde{\xvec}^\top\Umat\yvec\le\nu |\Sigma|^{3/2}+\frac{8C}{\lambda}.
	\]
\end{proof}

\matrixgapfinal*

\begin{proof}
	From Theorem~\ref{thm:algo_conv} we have that at the end of the $k$-th phase of Algorithm~\ref{alg:OOMD} we can guarantee:
	\[
	\|\zvec_k^\star-\zvec^{(k)}_{T_k}\|_2\le2\sqrt{2D^{\epsilon_k}\left( \zvec_k^\star|\zvec^{(k)}_0 \right)}\left(\frac{\lambda_k}{\lambda_k+\eta}\right)^{T_k/2}.
	\]
	Moreover, by the assumption on $\lambda_k$ that $T_k:=\beta^k\ge\frac{2\lambda_k^2}{\eta}$, and combining Theorem~\ref{thm:algo_conv} with Lemma~\ref{lem:inequality} we have that:
	\[
	\left(\frac{\lambda_k}{\lambda_k+\eta}\right)^{T_k/2}\le\frac{2}{\lambda_k}
	\] 
	This, thanks to Theorem~\ref{thm:algo_conv}, let us conclude that:
	\[
		\|\zvec_k^\star-\zvec^{(k)}_{T_k}\|_2\le \frac{4}{\lambda_k}\left[2D^{\epsilon_k}(\zvec_{k}^\star|\zvec_0^{(k)})\right]^{1/2}
	\]
	and thanks to Theorem~\ref{thm:saddelpointgap} we can readily conclude that:
	\[
	\xvec^\top\Umat\yvec^{(k)}_{T_k}-{\xvec}^{(k),\top}_{T_k}\Umat\yvec\le \frac{4}{\lambda_k}|\Sigma|^{3/2}\left[2D^{\epsilon_k}(\zvec_{k}^\star|\zvec_0^{(k)})\right]^{1/2} +\frac{8C}{\lambda_k}.
	\]

\end{proof}

\subsection{Decomposition}

Here we prove the results on the decomposition of Algorithm~\ref{alg:OOMD} on the tree. We also report here for completeness of exposition the decomposition of in~\citet{farina2019optimistic,farina2021better} that shows how to compute the conjugate gradient $\nabla d^{\epsilon,*}_i(\cdot)$ and the gradient $\nabla d^{\epsilon}_i(\cdot)$ over the entire tree by computing local conjugate gradient $\nabla d_{\Delta_I}^{\epsilon,*}(\cdot)$ and gradient $\nabla d_{\Delta_I}^{\epsilon}(\cdot)$  at each infoset $I\in\I_i$.

\begin{algorithm}[H]
	\small
	\caption{Prox-mapping decomposition~\citep{farina2021better}}
	\label{alg:decomposition}
	\begin{algorithmic}[1]
		\Function{$\nabla d^{\epsilon,*}_i$}{$\gvec$}\Comment{\textcolor{gray}{Conjugate gradient computation.}}
		\State $\vvec\gets\zerovec\in\Reals^{|\Sigma_i|}$
		\State $\vvec[\varnothing]\gets 1$
		\For{$I\in\I_i$ in bottom-up order}\Comment{\textcolor{gray}{Compute the behavioral strategy $\vvec$ in bottom-up fashion.}}
			\State $\vvec[\sigma_i(I)a]\gets\nabla d_{\Delta_I}^{\epsilon,*}(\gvec[I]/\alpha_I)[a]$
			\State $\gvec[\sigma_I] \gets \gvec[\sigma_I]-\alpha_Id_{\Delta_I}^\epsilon(\vvec[I])-\gvec[I]^\top\vvec[I]$
		\EndFor
		\For{$I\in\I_i$ in top-down order}\Comment{\textcolor{gray}{Convert the strategy $\vvec\in\Reals^{|\Sigma_i|}$ from behavioral to sequence form.}}
			\State $\vvec[\sigma(I)a]\gets \vvec[\sigma(I)a]\vec[\sigma(I)]$
		\EndFor
		\State \Return $\vvec$
		\EndFunction		%
	\\\hrulefill
	\Function{$\nabla d^{\epsilon}_i$}{$\gvec$}\Comment{\textcolor{gray}{Gradient computation.}}
		\State $\vvec\gets\zerovec\in\Reals^{|\Sigma_i|}$
		\For{$I\in\I_i$ in bottom-up order}\Comment{\textcolor{gray}{Compute the gradient $\vvec$ in bottom-up fashion.}}
		\State $\vvec[\sigma_i(I)a]\gets\vvec[\sigma_i(I)a]+\alpha_I\nabla d_{\Delta_I}^\epsilon\left(\frac{\gvec[\sigma_i(I)a]}{\gvec[\sigma_i(I)]}\right)$
		\State $\vvec[\sigma_i(I)]\gets\vvec[\sigma_i(I)]+\alpha_I d_{\Delta_I}^\epsilon\left(\frac{\gvec[\sigma_i(I)a]}{\gvec[\sigma_i(I)]}\right)-\alpha_I\left(\frac{\gvec[\sigma_i(I)a]}{\gvec[\sigma_i(I)]}\right)^\top\nabla d_{\Delta_I}^\epsilon\left(\frac{\gvec[\sigma_i(I)a]}{\gvec[\sigma_i(I)]}\right)$
		\EndFor
		\State \Return $\vvec$
	\EndFunction	
	\end{algorithmic}%
\end{algorithm}

\efficencyone*

\begin{proof}
	\textbf{(i)} Let us firs prove the first statement, and let us consider the following chain of equations:
	\begin{align}
		\hat{\xvec}&:=\arg\max\limits_{\xvec\in\X}\left\{\xvec^\top\gvec-\frac{d_1^\epsilon(\xvec)}{\lambda}-\frac{1}{\eta}D_1^\epsilon(\xvec|\tilde{\xvec})\right\}\\
		&=\arg\max\limits_{\xvec\in\X}\left\{\xvec^\top\gvec-\left(\frac{1}{\eta}+\frac{1}{\lambda}\right)d_1^\epsilon(\xvec)+\frac{1}{\eta}d_1^\epsilon(\tilde\xvec)+\frac{1}{\eta}\nabla d_1^\epsilon(\tilde\xvec)^\top(\xvec-\tilde\xvec)\right\}\\
		&=\arg\max\limits_{\xvec\in\X}\left\{\xvec^\top\left(\gvec+\frac{1}{\eta}\nabla d_1^\epsilon(\tilde\xvec)\right)-\left(\frac{1}{\eta}+\frac{1}{\lambda}\right)d_1^\epsilon(\xvec)\right\}\\
		&=\arg\max\limits_{\xvec\in\X}\left\{\xvec^\top\left(\gamma\gvec+\frac{\gamma}{\eta}\nabla d_1^\epsilon(\tilde\xvec)\right)-d_1^\epsilon(\xvec)\right\}\\
		&\coloneqq\nabla d_1^{\epsilon,*}(\tilde\gvec),
	\end{align}
	where we defined $\tilde\gvec:=\gamma\gvec+\frac{\gamma}{\eta}\nabla d_1^\epsilon(\tilde\xvec)$ and $\frac{1}{\eta}+\frac{1}{\lambda}=\frac{1}{\gamma}$, which concludes the first statement whenever $\nabla d_1^\epsilon(\tilde\xvec)$ is efficiently computable, which, thanks to Algorithm~\ref{alg:decomposition} happens whenever the local gradient $\nabla d_{\Delta_I}^\epsilon(\tilde\xvec)$ have a closed formula (see below the proof for statement \textbf{(ii)}).

	\textbf{(ii)} Now let us turn to the second statement, which concerns the closed formula updates of the local conjugate gradients of $d_{\Delta_I}^\epsilon$.
	It is well known that $\nabla d_{\Delta_I}^{0,*}(\tilde\gvec)$ is solved by $\frac{e^{-\tilde\gvec[a]}}{\sum\limits_{b\in A(I)} e^{-\tilde\gvec[b]}}$.
	Moreover, it is straightforward to verify that:
	\begin{align}
		\nabla d_{\Delta_I}^{\epsilon,*}(\tilde\gvec)&:=\arg\max\limits_{\wvec\in\Delta_{I}} \left\{\wvec^\top\tilde\gvec-d_{\Delta_I}^\epsilon(\wvec)\right\}\\
		&=\boldsymbol{1}\epsilon+\arg\max\limits_{\substack{\wvec\ge 0,\\ \sum_{k=1}^{n_I}\wvec[k]=1-\epsilon\cdot n_I}} \left\{\wvec^\top\tilde\gvec-d_{\Delta_I}^0(\wvec)\right\}\,
	\end{align}
	where we used the change of variable $\wvec \mapsto \wvec-\boldsymbol{1}\epsilon$.
	Clearly a similar statement holds for the updates of the second player.
	
	Now we employ the change of variable $\wvec\mapsto\wvec/(1-\epsilon \cdot n_I)$ which gives:
	\begin{align}
		\nabla d_{\Delta_I}^{\epsilon,*}(\tilde\gvec)&=\boldsymbol{1}\epsilon+(1-\epsilon \cdot n_I)\arg\max\limits_{\wvec\in \Delta_{I}}\left\{\wvec^\top\tilde\gvec\nabla -d_{\Delta_I}^0(\wvec)\right\}\\
		&=\boldsymbol{1}\epsilon+(1-\epsilon \cdot n_I)\arg\max\limits_{\wvec\in \Delta_{I}}\nabla d_{\Delta_I}^{0,*}(\tilde\gvec)\\
		&=\boldsymbol{1}\epsilon+(1-\epsilon \cdot n_I)\frac{e^{\tilde\gvec}}{\sum\limits_{b\in A(I)}e^{\tilde\gvec[b]}},
	\end{align}
	which proves the second statement.
	
	Finally the computation of the gradient of $d^\epsilon_{\Delta_I}(\wvec):=(\wvec-\onevec\epsilon)^\top\log( \wvec-\onevec\epsilon)$ trivially follows from direct computation.

\end{proof}

\section{Additional Experiments}\label{app:3}

\subsection{Games Description}

\begin{wrapfigure}{l}{0.36\textwidth}
	\centering
	\begin{tabular}{l|c|c}
		& $|\mathcal{I}|$ & $|\Sigma|$ \\ \noalign{\smallskip}\hlinewd{1pt}\noalign{\smallskip}
		Kuhn      & $6$             & $13$       \\
		Leduc 3   & $114$           & $337$      \\
		Leduc 5   & $390$           & $911$      \\
		Goofspiel & $57$            & $118$      \\
		dRPS      & $3$             & $10$      
	\end{tabular}
	\caption{Size of the game instances with respect to the number of infosets $|\I|$ and the number of sequences $|\Sigma|$.}
\end{wrapfigure}

First we are going to describe in details the games used in the experimental evaluation.

\paragraph{Kuhn Poker} 
Is a simplified poker game in which ~\cite{kuhn1950simplified} in which played with $3$ cards. Each player then pays one blind to the pot and is dealt a private card. The first player then decide to either check or to bet (places an additional blind on the pot). In the first case the second player can either check or to bet or fold/call in the second case. In the case in which the second player has placed a bet, the first player still has to decide weather to call or to fold. In the case no one has folded, there is a showdown phase in which the player with the hand of higher value wins the pot.

\begin{wrapfigure}{r}{0.25\textwidth}
	\begin{center}
		\includegraphics[width=0.19\textwidth]{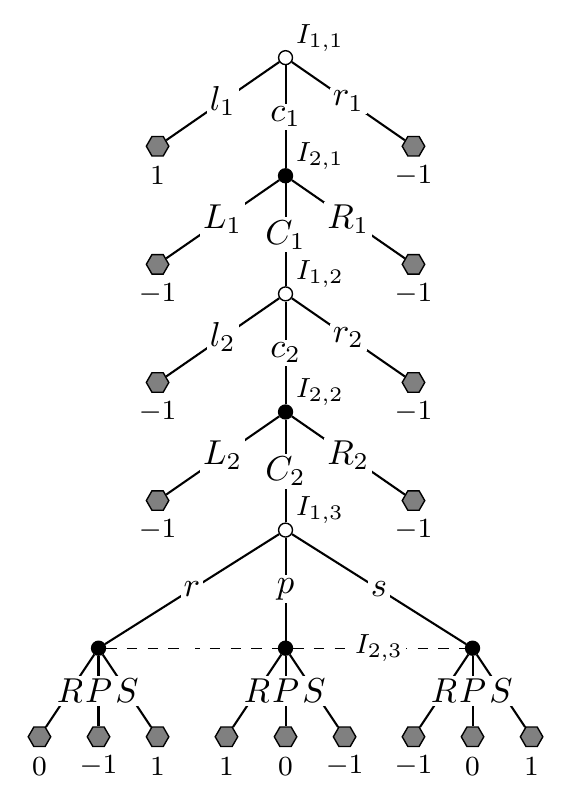}
	\end{center}
	\caption{Game tree of the dRPS game described in this section. White (black) round nodes are nodes of the infoset of the first (second) player. $I_{i,j}$ is the $i$-th player's $j$-th infoset. }
	\label{fig:dRPS}
\end{wrapfigure}

\paragraph{Leduc Poker $(n)$} Leduc is a card game played first introduced in~\cite{southey2005bayes}. It is played with $2n$ cards, where $n$ is called the rank of the game. Each player is dealt a private card an there is a common, unknown card. At the start of every hand, each player places one blind in the pot. There are two betting phases, which are identical and work as in the Kuhn poker, one before the revel of the common card, and one after. After the two betting stages, if no player folded, the player whose hand is of higher value wins the pot.

\paragraph{Goofspiel} It is a card game introduce by~\cite{ross1971goofspiel}. Each of the two players has $3$ ordered cards, which are used to privately bet on a community card (also of value from $1$ to $3$) revealed et each of the $3$ turns. The player who has bet the highest card at a specific turn, wins the amount represented by the community card. In case of ties the community card is disregarded, otherwise it has the value of the card itself.

\paragraph{dRPS} 

We designed dRPS which is a ``deep'' version of Rock Paper Scissor in which each player has $3$ actions at each infoset for two alternating times. Then, after the second move of the second player, begins a turn of Rock Paper Scissor. The payoffs of such a game are designed so that all the NE of the game prescribe the first player to end the game straight away. This renders irrelevant all the moves played on the other infoset. Indeed every strategy that assign probability $1$ to action $l_1$ of the first player is a NE. The game is designed so that at deep infosets the chance of an algorithm which is \emph{not} designed to explore the tree would have low probability of visiting such infosets due to random chance.

\paragraph{Matrix Game}

We also consider a $3\times 3$ normal form game (that was also used in the proof of Proposition~\ref{th:doublelimit}). 
The utility matrix, in which element $i,j$ is the utility received by the first player when it plays $i$ and the second player plays $2$, is the following one:
	\[
\left[ {\begin{array}{ccc}
		0.3 & 0.5 & 0.3 \\
		0.7& 0.3 & 0.7 \\
		0.6 & 0.2 & 0.2
\end{array} } \right]
\]
In this game we can easily compute the unique perfect equilibria that is the mixed strategy $\zvec^\star=\left([1/2,1/2,0]^\top,[0,2/3,1/3]^\top\right)$.

\subsection{Additional Results}

\begin{figure}[t!]
	\centering
	\begin{subfigure}{.49\columnwidth}
		\centering
		\includegraphics[width=\linewidth]{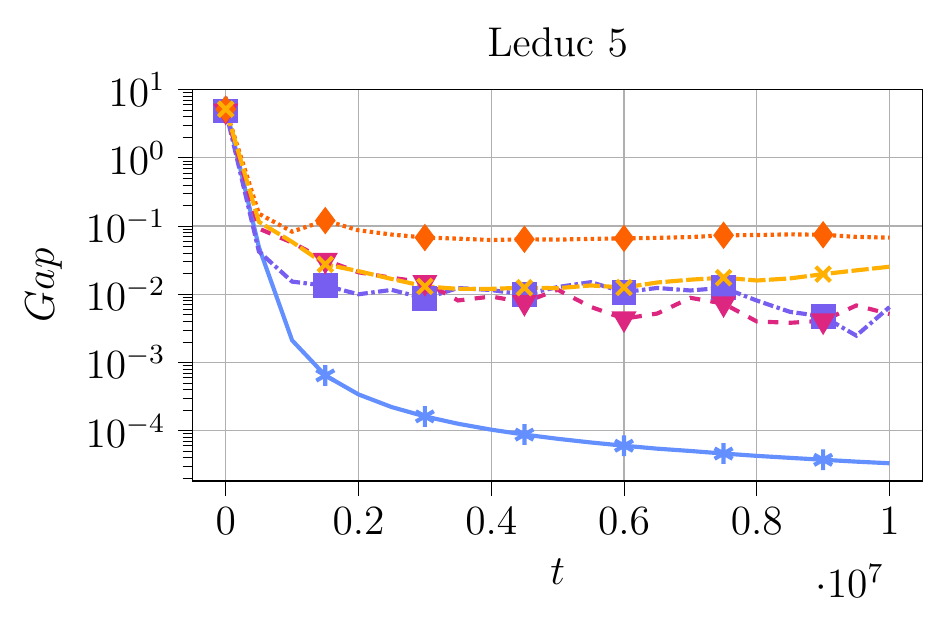}
	\end{subfigure}%
	\hfill
	\begin{subfigure}{.49\columnwidth}
		\centering
		\includegraphics[width=\linewidth]{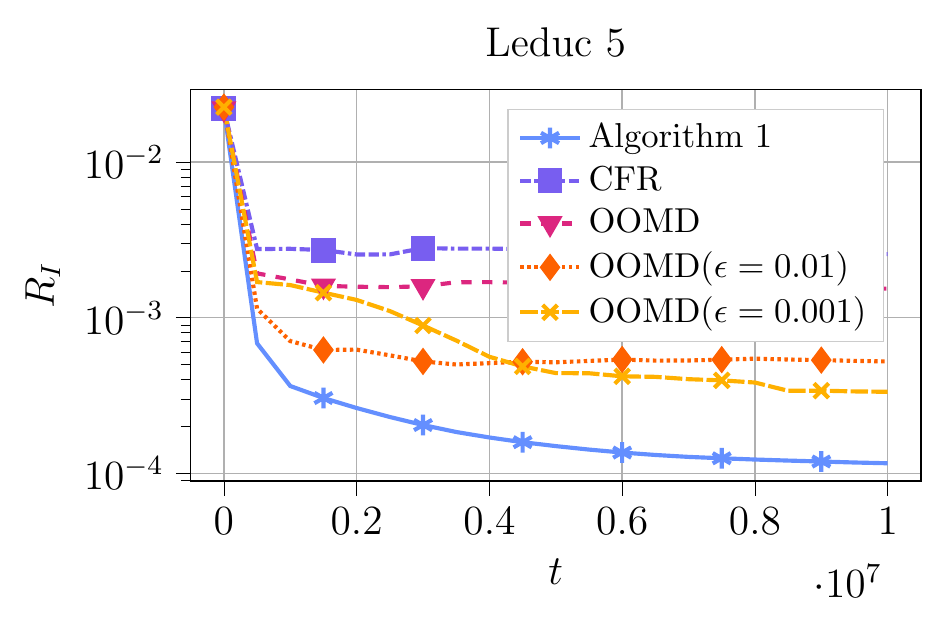}
	\end{subfigure}
	
	\begin{subfigure}{.49\columnwidth}
		\centering
		\includegraphics[width=\linewidth]{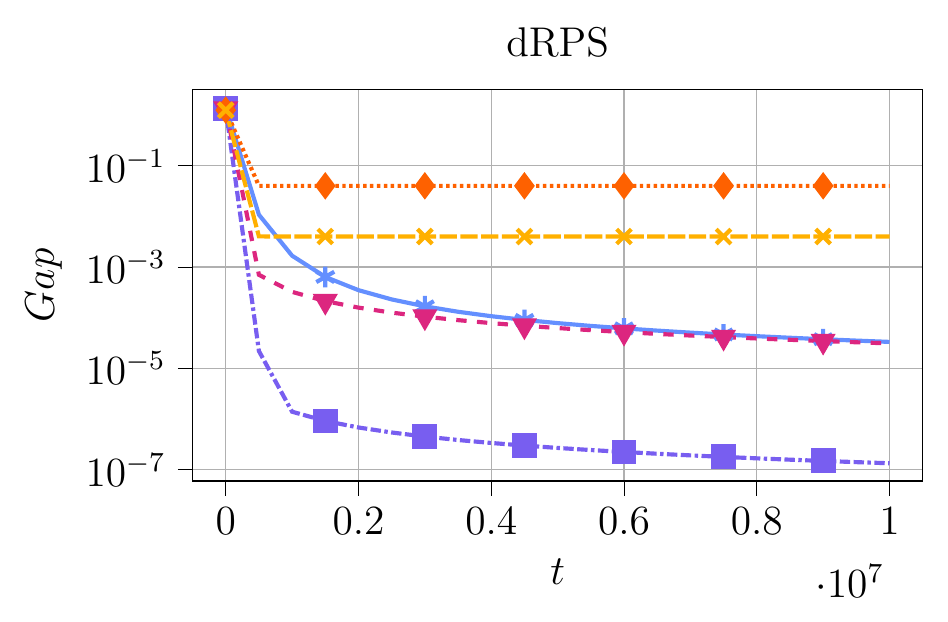}
	\end{subfigure}%
	\hfill
	\begin{subfigure}{.49\columnwidth}
		\centering
		\includegraphics[width=\linewidth]{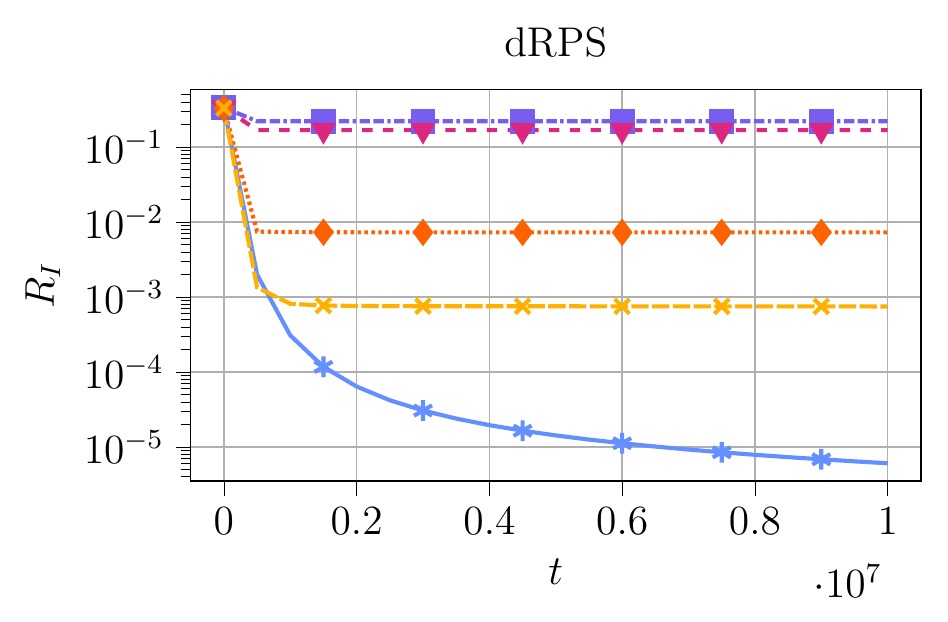}
	\end{subfigure}
	
	\caption{Results of the experimental evaluation. Algorithm~\ref{alg:OOMD} is compared with the baselines in terms of Nash gap (\emph{Left}) and average infoset regret (\emph{Right}).}
	\label{fig:exp2}
\end{figure}

\begin{wrapfigure}{r}{0.51\textwidth}
	\centering
	\includegraphics[width=0.50\textwidth]{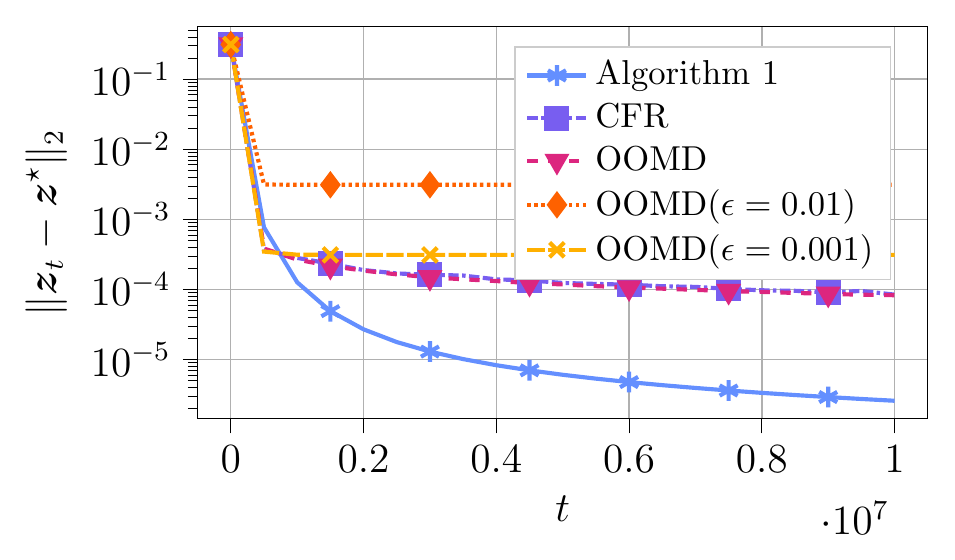}
	\caption{Distance to the unique EFPE of the Matrix Game.}
	\label{fig:distance}
\end{wrapfigure}

We can see in Figure~\ref{fig:exp2} that in both Leduc $5$ and dRPS, Algorithm~\ref{alg:OOMD} outperforms the baselines in term if average infoset regret $R_I$. This is the case for both the baselines which are agnostic to refinements (CFR and OOMD) and those who are build to work for refinements, but only find approximate ones (OOMD$(\epsilon=0.01)$ and OOMD$(\epsilon=0.001)$).
On the other hand, as seen in Section~\ref{sec:exper}, this is not always the case for what concern the Nash gap, even if we can see that in Leduc $5$, Algorithm~\ref{alg:OOMD} still manages to outperform all the baselines by $2$ orders of magnitude. However, this is not always the case. Indeed, in dRPS, CFR outperforms all the other methods by $2$ orders of magnitude.
As already remarked in Section~\ref{sec:exper}, we think that investigating such behavior requires further investigation.

Finally we used the Matrix Game introduce in the previous section, of which we know the only perfect equilibria, to compute the distance $\|\zvec_t-\zvec^\star\|_2$ of the iterates. In Figure~\ref{fig:distance} we reported the evolution of the $\ell_2$-distance to the unique EFPE of the game. We can see that Algorithm~\ref{alg:OOMD} outperforms all the other benchmark in terms of this metric. Indeed it achieves a distance more then one order of magnitude smaller then the second best algorithms that are OOMD and CFR. This further corroborates the theoretical findings of Section~\ref{sec:methods}.
Finally we can see that OOMD$(\epsilon=0.01)$ and OOMD$(\epsilon=0.01)$ are ordered as expected, with the one instantiated with the lowest, fixed $\epsilon$ gets closer to $\zvec^\star$.

\end{document}